\renewcommand{\mathbf}[1]{{\bm{#1}}}     
\title{On Nearly Perfect Covering Codes}
\date{\today}
\author{\textbf{Avital Boruchovsky}, \textbf{Tuvi Etzion}, and \textbf{Ron M. Roth}\\
{\small Computer Science Department, Technion, Israel Institute of Technology, Haifa 3200003, Israel}
}
\theoremstyle{definition} 
\newtheorem{theorem}{\indent Theorem}
\newtheorem{lemma}[theorem]{\indent Lemma}
\newtheorem{corollary}[theorem]{\indent Corollary}
\theoremstyle{remark}
\newtheorem{remark}{\indent Remark}
\theoremstyle{definition}
\newtheorem{example}{\indent Example}
\newcommand{\cC}{{\mathcal{C}}}
\newcommand{\cE}{{\mathcal{E}}}
\newcommand{\cP}{{\mathcal{P}}}
\newcommand{\cQ}{{\mathcal{Q}}}
\newcommand{\cX}{{\mathcal{X}}}
\newcommand{\cY}{{\mathcal{Y}}}
\newcommand{\abs}[1]{\left|#1\right|}
\renewcommand{\le}{\leqslant}
\renewcommand{\leq}{\leqslant}
\renewcommand{\ge}{\geqslant}
\newcommand{\F}{{\mathbb{F}}}
\newcommand{\dS}{{\mathbb{S}}}
\newcommand{\bldzero}{{\mathbf{0}}}
\newcommand{\bldA}{{\mathbf{A}}}
\newcommand{\bldB}{{\mathbf{B}}}
\newcommand{\bldc}{{\mathbf{c}}}
\newcommand{\blde}{{\mathbf{e}}}
\newcommand{\bldu}{{\mathbf{u}}}
\newcommand{\bldv}{{\mathbf{v}}}
\newcommand{\bldx}{{\mathbf{x}}}
\newcommand{\bldy}{{\mathbf{y}}}
\newcommand{\bldz}{{\mathbf{z}}}
\newcommand{\Code}{{\mathsf{C}}}
\renewcommand{\AA}{{\mathsf{A}}}
\newcommand{\BB}{{\mathsf{B}}}
\newcommand{\GG}{{\mathsf{G}}}
\newcommand{\YY}{{\mathsf{Y}}}
\newcommand{\TA}{{\texttt{\large A}}}
\newcommand{\TB}{{\texttt{\large B}}}
\newcommand{\TC}{{\texttt{\large C}}}
\newcommand{\Tone}{{\mathrm{I}}}
\newcommand{\Ttwo}{{\mathrm{II}}}
\newcommand{\Type}[1]{{Type~${#1}$}}
\newcommand{\Ball}{{\mathfrak{B}}}
\newcommand{\Sphere}{{\mathcal{S}}}
\newcommand{\Support}{{\mathsf{Supp}}}
\newcommand{\Int}[1]{{\left[{#1}\right]}}
\newcommand{\Realfield}{{\mathbb{R}}}
\newcommand{\Ring}{{\mathfrak{R}}}
\newcommand{\distance}{{\mathsf{d}}}
\newcommand{\weight}{{\mathsf{w}}}
\newcommand{\Emph}[1]{{\textbf{\emph{#1}}}}
\renewcommand{\@endtheorem}{\endtrivlist}
\begin{document}

\maketitle

\begin{abstract}
Nearly perfect packing codes are those codes that meet
the Johnson upper bound on the size of error-correcting codes.
This bound is an improvement to the sphere-packing bound.
A related bound for covering codes is known as the van Wee bound.
Codes that meet this bound will be called nearly perfect covering codes.
In this paper, such codes with covering radius one will be considered.
It will be proved that these codes can be partitioned into
three families depending on the smallest distance between neighboring
codewords. Some of the codes contained in these families
will be completely characterized.
Other properties of these codes will be considered too.
Construction for codes for each such family will be presented,
the weight distribution and the distance distribution of codes from
these families are characterized. Finally, extended nearly perfect
covering code will be considered and unexpected equivalence classes
of codes of the three types will be defined based on the extended codes.
\end{abstract}

\section{Introduction}
\label{sec:intro}

Perfect codes are among the most fascinating structures
in coding theory. They meet the well-known sphere-packing bound,
yet they are very rare. Therefore, there have been many attempts
to find either packing or covering codes that are ``almost perfect.''
One class of such covering codes is the topic of this paper.

All the codes in this work are over the binary field $\F_2$,
and by an \Emph{$(n,M)$ code} (of length~$n$ and size~$M$)
we mean a subset $\cC \subseteq \F_2^n$ of size $\abs{\cC} = M$.
For integers $\ell \le m$,
we use the notation $\Int{\ell:m}$
for the integer interval $\{ \ell, \ell{+}1, \ldots, m \}$,
with $\Int{m}$ standing for $\Int{1:m}$.

A \Emph{translate} of an $(n,M)$ code~$\cC$ is the set
\[
\blde + \cC \triangleq \{ \blde + \bldc ~:~ \bldc \in \cC \}~,
\]
where $\blde \in \F_2^n$ (and addition is over $\F_2$).
When the all-zero word, $\bldzero$, is a codeword in~$\cC$
we say that the code is \Emph{zeroed}.
A translate $\blde + \cC$ with $\blde \in \cC$ is a zeroed code.
A translate $\blde + \cC$ with $\blde \not\in \cC$ is
a non-zeroed translate.

The \Emph{(Hamming) distance} between two words
$\bldx, \bldy \in \F_2^n$ will be denoted by $\distance(\bldx,\bldy)$,
and $\weight(\bldx)$ will denote the \Emph{weight} of~$\bldx$,
i.e., the size of the support, $\Support(\bldx)$, of~$\bldx$
(the notation extends to integer vectors as well).
The \Emph{radius-$t$ ball} centered at a word $\bldx \in \F_2^n$
is denoted by
\[
\Ball_t(\bldx)
\triangleq \{ \bldy \in \F_2^n ~:~ \distance(\bldx,\bldy) \le t \}
\]
(where, for simplicity of notation, we make the dependence on~$n$
implicit).
We also define the boundary (sphere)
\[
\partial \Ball_t(\bldx)
\triangleq \{ \bldy \in \F_2^n ~:~ \distance(\bldx,\bldy) = t \}
\]
and
\[
\Sphere_t \triangleq
\partial \Ball_t(\bldzero)
= \{ \bldy \in \F_2^n ~:~ \weight(\bldy) = t \}~.
\]
The words in $\partial \Ball_t(\bldx)$ will be called
the \Emph{$t$-neighbors} of~$\bldx$.

The \Emph{minimum distance} of an $(n,M)$ code~$\cC$
is the smallest distance between any two distinct codewords
in~$\cC$, and the distance of a word $\bldx \in \F_2^n$ from~$\cC$
is defined by
$\distance(\bldx,\cC) = \min_{\bldc \in \cC} \distance(\bldx,\bldc)$.
The \Emph{covering radius} of~$\cC$
is defined by
\[
R = \max_{\bldx \in \F_2^n} \distance(\bldx,\cC)~,
\]
and we say that a code~$\cC$ is \Emph{$R$-covering}
if its covering radius is at most~$R$.
When $\cC$~is a linear code over $\F_2$ (of dimension $\log_2 M$)
and $H$ is any full-rank $r \times n$ parity-check matrix of~$\cC$
over $\F_2$ (where $r = n - \log_2 M$),
then the covering radius of~$\cC$ equals the smallest~$R$ such that
every vector in $\F_2^r$ can be expressed as
a linear combination (over $\F_2$) of~$R$ columns of~$H$.

For an $(n,M)$~code~$\cC$ with minimum distance $2R+1$
we have the \Emph{sphere-packing bound}
\begin{equation}
\label{eq:sphere-packing}
M \cdot \sum_{i=0}^R \binom{n}{i} \le 2^n ~,
\end{equation}
and if~$\cC$ has covering radius~$R$
we have the \Emph{sphere-covering bound}
\begin{equation}
\label{eq:sphere-covering}
M \cdot \sum_{i=0}^R \binom{n}{i} \ge 2^n ~.
\end{equation}
Perfect codes meet both bounds.

The sphere-packing bound for an $(n,M)$~code with
minimum distance $2R+1$ was improved by Johnson~\cite{Joh62} to
\begin{equation}
\label{eq:johnson}
M \cdot \left(
\sum_{i=0}^R \binom{n}{i} + \frac{\binom{n}{R}}{\left\lfloor
\frac{n}{R+1} \right\rfloor}
\left(\frac{n-R}{R+1} - \left\lfloor \frac{n-R}{R+1} \right\rfloor
\right) \right) \le 2^n ~,
\end{equation}
and a code that meets this bound is called
a \Emph{nearly perfect (packing) code}.
When $R+1$ divides $n-R$, this bound coincides
with the sphere-packing bound.
Codes that meet the bound~(\ref{eq:johnson}) were considered
in~\cite{GoSn72}, \cite{Lind77}.
There are two families of nontrivial codes that
are nearly perfect yet not perfect.
One family is the set of \Emph{shortened Hamming codes}.
A second family consists of the \Emph{punctured Preparata codes}.
These codes were first found by Preparata~\cite{Pre68} and later others
found many inequivalent codes with
the same parameters~\cite{BLW83}, \cite{Kan83}.
Moreover, these codes are very important in constructing other codes,
e.g., see~\cite{EtGr93}. A comprehensive work on perfect
codes and related codes can be found in~\cite{Etz22}.
For $R$-covering codes, an improvement on the sphere-covering bound,
akin to the Johnson bound, was
presented by van Wee~\cite{vWee88}.
A simplified version of his bound
was presented by Struik~\cite{Str94} and takes the form
\begin{equation}
\label{eq:struik}
M \cdot \left( \sum_{i=0}^R \binom{n}{i}
- \frac{\binom{n}{R}}{\left\lceil \frac{n-R}{R+1} \right\rceil}
\left( \left\lceil \frac{n+1}{R+1}
\right\rceil - \frac{n+1}{R+1} \right) \right) \ge 2^n ~.
\end{equation}
When $R+1$ divides $n+1$, the bound~(\ref{eq:struik}) coincides with
the sphere-covering bound.
A~code that meets this bound will be called
a \Emph{nearly perfect covering code}.
One can easily see the similarity and the difference between
the bounds~(\ref{eq:johnson}) and~(\ref{eq:struik}).
For even~$n$ and $R=1$, the bound~(\ref{eq:struik}) becomes
\begin{equation}
\label{eq:vanWee}
M \ge \frac{2^n}{n}~.
\end{equation}

Except for perfect codes and some trivial codes,
$R = 1$ is the only radius for which we currently know
of codes that meet the bound~(\ref{eq:struik}); from~(\ref{eq:vanWee}),
these codes have length $n = 2^r$ and size $M = 2^{2^r-r}$,
for some positive integer~$r$. A code with these parameters will be
called a \Emph{a nearly perfect $1$-covering code}
(in short, \Emph{NP1CC}). In the case of linear codes,
there is a very simple characterization
of NP1CCs, as we show in the next example.

\begin{example}
\label{ex:linearNP1CC}
Let $\cC$ be an $(n{=}2^r,M{=}2^{n-r})$ linear code over~$\F_2$
and let~$H$ be any full-rank $r \times n$
parity-check matrix of~$\cC$ over~$\F_2$. Then~$\cC$ is $1$-covering
(and, hence, is an NP1CC), if and only if
each nonzero vector in $\F_2^r$ appears as a column in~$H$.
Thus, there are two possible cases. The first is when
the columns of~$H$ range over all the vectors of~$\F_2^r$
(including the all-zero vector); the second case is similar,
except that the all-zero column is replaced by
some nonzero vector of~$\F_2^r$.\qed
\end{example}

In this work, we consider the structure of
general (not necessarily linear) NP1CCs.
In Section~\ref{sec:struct}, we prove that in any NP1CC~$\cC$,
each codeword $\bldc \in \cC$ has a unique other codeword $\bldc'$
in $\Ball_2(\bldc)$; this, in turn, induces a partition
of the code~$\cC$ into pairs $\{ \bldc, \bldc' \}$.
Based on this property, in Section~\ref{sec:construct}
we classify NP1CCs into three types:
\begin{itemize}
\item
\Type{\TA} codes, in which the codewords in every pair
$\{ \bldc, \bldc' \}$ are at distance~$1$ apart
(the first case in Example~\ref{ex:linearNP1CC} belongs to this type),
\item
\Type{\TB} codes, in which the codewords in every such pair
are at distance~$2$ apart
the second case in the example is of this type),
and
\item
\Type{\TC} codes (which are all the remaining NP1CCs).
\end{itemize}
We study the properties of these types (especially of \Type{\TA} codes)
and present constructions of codes for each type.
In Section~\ref{sec:weightDist}, we consider the weight and distance
distributions of NP1CCs and, in particular, we prove that there are
exactly two weight distributions for all the codes and two other
weight distributions for all their translates.
Moreover, we show that \Type{\TA} and \Type{\TB} codes
are distance invariant.
In Section~\ref{sec:balanced},
we concentrate on a class of \Type{\TA} codes
in which the number of codeword pairs $\{ \bldc, \bldc' \}$ that differ
only on any given coordinate is the same for all coordinates.
Extended NP1CCs are discussed in Section~\ref{sec:equiv},
where we prove that we can define equivalence classes for NP1CCs
of the three types via the punctured codes of the extended code.
A conclusion and a few problems for future research are presented
in Section~\ref{sec:conclude}.

\section{Structure of NP1CCs}
\label{sec:struct}

In this section, we examine the structure of NP1CCs.

Let $\cC$ be an $(n,M)$~code.
Given a word $\bldx \in \F_2^n$, we say that a codeword
$\bldc \in \cC$ \Emph{covers}~$\bldx$ if $\bldc \in \Ball_1(\bldx)$.
Clearly, if $\cC$ is $1$-covering then
every word $\bldx \in \F_2^n$ is covered by
at least one codeword of~$\cC$.
The \Emph{over-covering} of
a subset $\cY \subseteq \F_2^n$
(with respect to a $1$-covering code~$\cC$) is defined by
\[
\sum_{\bldy \in \cY} (\abs{\Ball_1(\bldy) \cap \cC} - 1)
= \Bigl( \sum_{\bldy \in \cY} \abs{\Ball_1(\bldy) \cap \cC} \Bigr)
- \abs{\cY}~.
\]
Thus, while each word in~$\cY$ is covered by at least
one codeword of~$\cC$, the over-covering of~$\cY$ measures
how many \Emph{additional} codewords cover each one of
the words in~$\cY$.

The following lemma follows from the analysis
of Struik in~\cite{Str94} (although, as stated, it does not appear
explicitly there).

\begin{lemma}[\cite{Str94}]
\label{lem:cover_ball_x}
Let $\cC$ be an $(n,M)$~NP1CC
and let $\bldx \in \F_2^n \setminus \cC$ be a non-codeword.
Then $\Ball_1(\bldx)$ contains exactly
one word that is covered by two codewords of~$\cC$
and no word that is covered by more than two codewords of~$\cC$.
\end{lemma}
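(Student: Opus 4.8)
The plan is to run a single global double‑counting argument on over‑coverings that simultaneously reproves the bound~(\ref{eq:struik}) for $R=1$ and, in the NP1CC equality case, forces the claimed local structure. For $\bldy \in \F_2^n$ write $f(\bldy) = \abs{\Ball_1(\bldy) \cap \cC} \ge 1$ for the number of codewords covering $\bldy$. Since each ball $\Ball_1(\bldc)$ ($\bldc \in \cC$) has size $n+1$, summing $f$ over $\F_2^n$ gives $\sum_{\bldy} f(\bldy) = M(n+1)$, so for a NP1CC (where $n=2^r$ and $M = 2^n/n$, hence $2^n = Mn$) the over‑covering of the whole space is $\sum_{\bldy}\bigl(f(\bldy)-1\bigr) = M(n+1)-2^n = M$.

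The crux is a parity observation: for any non‑codeword $\bldx$, the over‑covering of $\Ball_1(\bldx)$ is odd, hence at least $1$. Indeed $\sum_{\bldy \in \Ball_1(\bldx)} f(\bldy) = \sum_{\bldc \in \cC} \abs{\Ball_1(\bldc) \cap \Ball_1(\bldx)}$, and since $\bldx \notin \cC$ every $\bldc \in \cC$ satisfies $\bldc \ne \bldx$; a short case analysis on $d = \distance(\bldc,\bldx)$ shows $\abs{\Ball_1(\bldc) \cap \Ball_1(\bldx)}$ equals $2$ when $d \in \{1,2\}$ and $0$ otherwise (for $d \ge 3$ the triangle inequality rules out a common point). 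Thus $\sum_{\bldy \in \Ball_1(\bldx)} f(\bldy)$ is even, and because $\abs{\Ball_1(\bldx)} = n+1$ is odd, the over‑covering $\sum_{\bldy \in \Ball_1(\bldx)}\bigl(f(\bldy)-1\bigr)$ is odd and nonnegative, so $\ge 1$.

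I would then sum this lower bound over the $2^n - M = M(n-1)$ non‑codewords and also evaluate the sum directly, using that $\Ball_1(\bldy)$ contains exactly $(n+1)-f(\bldy)$ non‑codewords:
\[
M(n-1) \;\le\; \sum_{\bldx \notin \cC}\ \sum_{\bldy \in \Ball_1(\bldx)} \bigl(f(\bldy)-1\bigr)
\;=\; \sum_{\bldy} \bigl(f(\bldy)-1\bigr)\bigl((n+1)-f(\bldy)\bigr)\;.
\]
The elementary inequality $(f-1)\bigl((n+1)-f\bigr) \le (n-1)(f-1)$, valid for every integer $f \ge 1$ with equality iff $f \in \{1,2\}$ (the difference is $(f-1)(f-2) \ge 0$), bounds the right‑hand side by $(n-1)\sum_{\bldy}\bigl(f(\bldy)-1\bigr) = (n-1)M$. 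Hence the whole chain $M(n-1) \le \cdots \le (n-1)M$ collapses to equalities. Equality in the elementary inequality forces $f(\bldy) \in \{1,2\}$ for every $\bldy$, i.e.\ no word is covered by more than two codewords; equality in the first displayed inequality, where every summand was already $\ge 1$, forces the over‑covering of $\Ball_1(\bldx)$ to equal exactly $1$ for every non‑codeword $\bldx$. Since each $f(\bldy) \le 2$, an over‑covering of $1$ on $\Ball_1(\bldx)$ means precisely one $\bldy \in \Ball_1(\bldx)$ has $f(\bldy)=2$ and the rest have $f(\bldy)=1$, which is the assertion.

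The only genuine content is the parity step of the second paragraph — the computation $\abs{\Ball_1(\bldc)\cap\Ball_1(\bldx)}\in\{0,2\}$ together with the oddness of $n+1$; everything afterward is a counting identity plus a one‑line convexity‑type inequality whose equality case does all the work. In fact the displayed chain is nothing but a rederivation of~(\ref{eq:struik}) for $R=1$, and the NP1CC hypothesis is used exactly to turn each of its inequalities into an equality.
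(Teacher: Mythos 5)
Your proof is correct, and every step checks out: the identity $\abs{\Ball_1(\bldc)\cap\Ball_1(\bldx)}\in\{0,2\}$ for $\bldc\ne\bldx$ (which follows from the parity relation $\distance(\bldy,\bldc)+\distance(\bldy,\bldx)\equiv\distance(\bldc,\bldx)\pmod 2$ together with the triangle inequality), the oddness of $\abs{\Ball_1(\bldx)}=n+1$, the exchange of summation giving $\sum_{\bldy}(f(\bldy)-1)\bigl((n+1)-f(\bldy)\bigr)$, and the equality analysis of $(f-1)(f-2)\ge 0$. The chain does indeed collapse: $M(n-1)\le\sum_{\bldy}(f(\bldy)-1)\bigl((n+1)-f(\bldy)\bigr)\le(n-1)M$, forcing $f(\bldy)\in\{1,2\}$ everywhere and over-covering exactly $1$ on every non-codeword ball, which is precisely the assertion.

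The paper does not actually carry out this argument; its proof consists of pointers to equations (6), (7) and (9) of Struik's paper, i.e., the parity lower bound on the over-covering, the averaging step, and the equality case of the van Wee bound. What you have written is a self-contained execution of that same skeleton, and your displayed chain is, as you note, a rederivation of the $R=1$ case of the bound~(\ref{eq:vanWee}) (indeed $2^n-M\le(n-1)\bigl(M(n+1)-2^n\bigr)$ simplifies to $M\ge 2^n/n$ without assuming $2^n=Mn$). Your route buys two things beyond the paper's citation: it makes the argument verifiable without consulting Struik, and the equality case of your convexity inequality yields the stronger global conclusion that \emph{every} word of $\F_2^n$ (not only those in a ball around a non-codeword) is covered by at most two codewords — a fact the paper only obtains for non-codewords in Corollary~\ref{cor:cover_nonC} and for all words later via Theorem~\ref{thm:covered_twice}. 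No gaps.
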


\begin{proof}
We provide the steps of the proof through pointers to~\cite{Str94}.
It follows from Eq.~(6) therein that for each non-codeword $\bldx \in \F_2^n \setminus \cC$,
the over-covering of the ball $\Ball_1(\bldx)$ is at least~$1$.
Denoting by~$\epsilon$ the average of these over-coverings,
we then get that $\epsilon \ge 1$ (see~Eq.~(7) in~\cite{Str94}).
Then, equality in the Van Wee bound (Eq.~(9) in~\cite{Str94})
forces the equality $\epsilon = 1$, which means that
the over-covering of each ball $\Ball_1(\bldx)$ must be exactly~$1$.
\end{proof}

\begin{corollary}
\label{cor:cover_nonC}
Let $\cC$ be an $(n,M)$~NP1CC.
For every non-codeword $\bldx \in \F_2^n \setminus \cC$,
\[
\abs{\Ball_1(\bldx) \cap \cC} \le 2~.
\]
\end{corollary}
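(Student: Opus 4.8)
The plan is to derive this directly from Lemma~\ref{lem:cover_ball_x}. The key observation is that, by the symmetry of the Hamming metric, a codeword $\bldc \in \cC$ covers a word $\bldy \in \F_2^n$ exactly when $\bldy \in \Ball_1(\bldc)$, equivalently when $\bldc \in \Ball_1(\bldy)$; hence the number of codewords of~$\cC$ that cover~$\bldy$ equals precisely $\abs{\Ball_1(\bldy) \cap \cC}$. This elementary identification is what will let us apply the lemma, whose statement is phrased in terms of how many codewords cover each word of a ball, to bound the quantity $\abs{\Ball_1(\bldx) \cap \cC}$ in the corollary.

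First I would observe that the center $\bldx$ of the ball $\Ball_1(\bldx)$ lies in that ball, since $\distance(\bldx,\bldx) = 0 \le 1$. Then, applying Lemma~\ref{lem:cover_ball_x} to the non-codeword $\bldx$, no word in $\Ball_1(\bldx)$ is covered by more than two codewords of~$\cC$; in particular, $\bldx$ itself is covered by at most two codewords of~$\cC$. Combining this with the identification above yields $\abs{\Ball_1(\bldx) \cap \cC} \le 2$, which is the claim.

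There is essentially no real obstacle here: the corollary is an immediate specialization of the lemma to the center of the ball, and the entire argument amounts to noticing that $\bldx \in \Ball_1(\bldx)$ together with the (routine) translation between "codewords covering $\bldx$" and "elements of $\Ball_1(\bldx) \cap \cC$." The slightly stronger conclusion of Lemma~\ref{lem:cover_ball_x} — that exactly one word of the ball is doubly covered — is not needed for the corollary; only the upper bound of two on the covering multiplicity of each word in the ball is used.
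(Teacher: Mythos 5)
Your proof is correct and matches the paper's (implicit) reasoning: the corollary is stated without proof as an immediate consequence of Lemma~\ref{lem:cover_ball_x}, obtained exactly as you do by applying the lemma's "no word covered by more than two codewords" to the center $\bldx \in \Ball_1(\bldx)$ and using the definition of covering. Nothing is missing.
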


A non-codeword $\bldx \in \F_2^n \setminus \cC$
for which $\abs{\Ball_1(\bldx) \cap \cC} = 2$
will be called a \Emph{midword}.

While midwords differ from the remaining non-codewords
in the size of the intersection $\Ball_1(\bldx) \cap \cC$,
those sizes become the same if we look at balls of radius~$2$.
This property, which we prove in the next theorem, will be
instrumental in Section~\ref{sec:weightDist}
for deriving the weight and distance distributions of NP1CCs.

\begin{theorem}
\label{thm:radius2}
Let~$\cC$ be an $(n,M)$~NP1CC.
For every non-codeword $\bldx \in \F_2^n \setminus \cC$,
\[
\abs{\Ball_2(\bldx) \cap \cC} = \frac{n}{2} + 1~.
\]
\end{theorem}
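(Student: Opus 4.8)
The plan is to count incidences between $\Ball_2(\bldx)$ and $\cC$ by a double-counting argument over the weight-$1$ and weight-$2$ shells around~$\bldx$. Fix a non-codeword $\bldx$. Every codeword $\bldc$ with $\distance(\bldx,\bldc)\le 2$ lies in exactly one of the shells $\{\bldx\}$ (impossible, since $\bldx\notin\cC$), $\partial\Ball_1(\bldx)$, or $\partial\Ball_2(\bldx)$. Let $a = \abs{\partial\Ball_1(\bldx)\cap\cC}$ and $b = \abs{\partial\Ball_2(\bldx)\cap\cC}$, so the quantity to be determined is $a+b$. The idea is to get one linear relation on $a$ and $b$ from the covering condition applied to the $n$ words in $\partial\Ball_1(\bldx)$, and to pin down~$a$ itself from Lemma~\ref{lem:cover_ball_x}.

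First I would analyze the weight-$1$ shell. Consider the $n$ words $\bldy$ at distance~$1$ from~$\bldx$. For each such $\bldy$, $\Ball_1(\bldy)$ consists of $\bldx$, of $\bldy$ itself, and of the $n-1$ words at distance~$2$ from~$\bldx$ obtained by flipping one of the other coordinates; moreover exactly one word of $\partial\Ball_2(\bldx)$ fails to appear in $\Ball_1(\bldy)$, while the words at distance~$1$ from~$\bldx$ other than~$\bldy$ are at distance~$2$ from~$\bldy$ and hence not in $\Ball_1(\bldy)$. By Lemma~\ref{lem:cover_ball_x}, the over-covering of $\Ball_1(\bldy)$ is exactly~$1$, i.e. $\sum_{\bldz\in\Ball_1(\bldy)}\abs{\Ball_1(\bldz)\cap\cC} = \abs{\Ball_1(\bldy)} + 1 = n+2$. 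Summing this identity over all $n$ words $\bldy\in\partial\Ball_1(\bldx)$ and interchanging the order of summation, the left side becomes $\sum_{\bldz}\abs{\Ball_1(\bldz)\cap\cC}\cdot\#\{\bldy\in\partial\Ball_1(\bldx):\bldz\in\Ball_1(\bldy)\}$, where the multiplicity is $n$ for $\bldz=\bldx$, is $1$ for $\bldz\in\partial\Ball_1(\bldx)$, and is $2$ for $\bldz\in\partial\Ball_2(\bldx)$ (two coordinates to flip). This yields an equation of the shape $n\cdot\abs{\Ball_1(\bldx)\cap\cC} + (\text{sum over }\partial\Ball_1(\bldx)) + 2\cdot(\text{sum over }\partial\Ball_2(\bldx)) = n(n+2)$.

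Next I would evaluate the two partial sums appearing above. Since $\bldx$ is a non-codeword with $a$ codewords at distance~$1$, Lemma~\ref{lem:cover_ball_x} applied to $\bldx$ itself gives $a=\abs{\Ball_1(\bldx)\cap\cC}\in\{1,2\}$ — but in fact the over-covering of $\Ball_1(\bldx)$ being~$1$ together with $\bldx\notin\cC$ forces $\sum_{\bldz\in\Ball_1(\bldx)}\abs{\Ball_1(\bldz)\cap\cC} = n+2$; this is the ``sum over $\Ball_1(\bldx)$'' piece and can be split into the $\bldz=\bldx$ term (which is~$a$) plus the sum over $\partial\Ball_1(\bldx)$, so the latter equals $n+2-a$. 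Substituting back, $na + (n+2-a) + 2\cdot(\text{sum over }\partial\Ball_2(\bldx)) = n(n+2)$, which determines the sum of $\abs{\Ball_1(\bldz)\cap\cC}$ over $\bldz\in\partial\Ball_2(\bldx)$ as a function of $a$ alone. A second, parallel count — this time counting pairs $(\bldc,\bldz)$ with $\bldc\in\cC$, $\bldz\in\partial\Ball_2(\bldx)$, $\distance(\bldc,\bldz)\le 1$, organized by~$\bldc$ rather than by~$\bldz$ — expresses that same sum in terms of $a$ and $b$ (a codeword at distance~$0$, $1$, or $2$ from~$\bldx$ contributes a fixed number of neighbors inside $\partial\Ball_2(\bldx)$, namely $0$, $n-1$, respectively $2$, in the binary case). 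Equating the two expressions gives a linear relation between $a$ and $b$, and combined with the value of~$a$ it yields $a+b=\tfrac n2+1$.

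The main obstacle I anticipate is bookkeeping rather than conceptual: getting the incidence multiplicities exactly right for each shell (in particular correctly handling the single word of $\partial\Ball_2(\bldx)$ that each $\Ball_1(\bldy)$ misses, and the fact that $n$ is even so the fraction $n/2$ is an integer), and making sure the ``count organized by codewords'' is done with the right ranges so that no codeword outside $\Ball_2(\bldx)$ sneaks in a contribution. It is conceivable that a cleaner route avoids the value of $a$ altogether — e.g. summing the over-covering identity over $\Ball_1(\bldx)\cup\partial\Ball_1(\bldx)$ in one stroke — and if the algebra gets unwieldy I would fall back to that, but the two-relations-in-$(a,b)$ approach above should close cleanly.
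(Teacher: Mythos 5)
There is a genuine gap in both halves of your double count. First, you sum the identity ``over-covering of $\Ball_1(\bldy)$ equals $1$'' over \emph{all} $n$ words $\bldy\in\partial\Ball_1(\bldx)$, but Lemma~\ref{lem:cover_ball_x} applies only to non-codewords, and since $\cC$ is $1$-covering and $\bldx\notin\cC$, at least one (in fact $a\in\{1,2\}$) of those $\bldy$ is a codeword. For a codeword the over-covering of $\Ball_1(\bldy)$ is actually $2$, not $1$ --- but proving that requires Theorem~\ref{thm:adjacent_codewords}, which comes after (and relies on) the present theorem, so it cannot be invoked here; as written your right-hand side $n(n+2)$ is off by $a$. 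Second, your codeword-organized count of $\sum_{\bldz\in\partial\Ball_2(\bldx)}\abs{\Ball_1(\bldz)\cap\cC}$ is wrong in two ways: a codeword at distance $2$ from $\bldx$ meets $\partial\Ball_2(\bldx)$ only in itself (contribution $1$, not $2$), and codewords at distance $3$ from $\bldx$ do contribute --- each covers exactly $3$ words of $\partial\Ball_2(\bldx)$ --- so an extra unknown enters that your two relations cannot eliminate. These errors do not cancel: carrying them through gives $4b=(n-1)^2$ when $a=1$, which is not even an integer for $n=4$.

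The underlying idea is sound and can be salvaged by counting incidences with $\partial\Ball_1(\bldx)$ rather than $\partial\Ball_2(\bldx)$. Lemma~\ref{lem:cover_ball_x} applied to $\bldx$ alone gives $\sum_{\bldy\in\Ball_1(\bldx)}\abs{\Ball_1(\bldy)\cap\cC}=n+2$, hence $\sum_{\bldy\in\partial\Ball_1(\bldx)}\abs{\Ball_1(\bldy)\cap\cC}=n+2-a$. Organizing the same incidence count by codewords, a codeword at distance $1$ from $\bldx$ covers exactly one word of $\partial\Ball_1(\bldx)$ (itself), one at distance $2$ covers exactly two, and codewords farther away cover none, so the sum also equals $a+2b$. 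Equating gives $b=n/2+1-a$ and therefore $a+b=n/2+1$, with no need to know $a$ or to touch the weight-$2$ shell. This is, in essence, the paper's own proof, which reaches the same pairing (``each codeword of $\partial\Ball_2(\bldx)$ covers exactly two of the $\bldy_i$'s'') via an explicit case analysis on whether $\bldx$ is a midword and on which word of $\Ball_1(\bldx)$ is doubly covered.
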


\begin{proof}
Consider first the case where $\bldx$ is a midword.
By Lemma~\ref{lem:cover_ball_x}, no other word in $\Ball_1(\bldx)$
is covered by two codewords;
namely, the set of $1$-neighbors of~$\bldx$
consists of two codewords $\bldc_1$ and $\bldc_2$
(none of which is a $1$-neighbor of a codeword)
and $n-2$ non-codewords
$\bldy_1, \bldy_2, \ldots, \bldy_{n-2}$ (none of which is a midword).
Each $\bldy_i$, in turn, is covered by a unique codeword
(which belongs to $\partial \Ball_2(\bldx)$).
Conversely, each codeword in $\partial \Ball_2(\bldx)$ covers exactly
two words among the $\bldy_i$'s (and none of the codewords $\bldc_1$
and $\bldc_2$).
We conclude that $\Ball_2(\bldx)$ contains exactly $n/2 + 1$ codewords:
the codewords $\bldc_1$ and $\bldc_2$, and $n/2 - 1$ codewords
that cover the $\bldy_i$'s.

We next turn to the case where $\bldx$ is not a midword.
One (and only one) of the $1$-neighbors of $\bldx$ is
a codeword, $\bldc$, and, by Lemma~\ref{lem:cover_ball_x},
there is a unique $1$-neighbor~$\bldy_0$ of~$\bldx$
that is covered by two codewords. We distinguish between two cases.

\Emph{Case 1:} $\bldy_0 = \bldc$.
The $n-1$ remaining $1$-neighbors of~$\bldx$ are non-codewords
$\bldy_1, \bldy_2, \ldots, \bldy_{n-1}$,
and each $\bldy_i$ (including $\bldy_0$) is covered by
a unique codeword in $\partial \Ball_2(\bldx)$.
Conversely, each codeword in $\partial \Ball_2(\bldx)$
covers exactly two words among the $\bldy_i$'s.
These $n/2$~codewords, along with $\bldc$, are (all)
the $n/2 + 1$~codewords in $\Ball_2(\bldx)$.

\Emph{Case 2:} $\bldy_0 \ne \bldc$,
namely, $\bldy_0$ is a midword,
which is covered by two codewords
$\bldc_1, \bldc_2 \in \partial \Ball_2(\bldx)$.
The $1$-neighbors of~$\bldx$ other than~$\bldc$ and~$\bldy_0$
are non-codewords
$\bldy_1, \bldy_2, \ldots, \bldy_{n-2}$, each covered by
a unique codeword (in $\partial \Ball_2(\bldx)$).
Conversely, each codeword in
$\partial \Ball_2(\bldx)$ covers exactly
two words among the $\bldy_i$'s
(where $\bldy_0$ is covered by two codewords).
We conclude that $\Ball_2(\bldx)$
contains exactly $n/2 + 1$~codewords:
(i)~the codeword~$\bldc$,
(ii)~the codewords $\bldc_1$ and~$\bldc_2$, which cover both $\bldy_0$
and two other $\bldy_i$'s, and
(iii)~$n/2-2$~codewords that cover the $n-4$ remaining $\bldy_i$'s.
\end{proof}

The next theorem is due to
Fort and Hedlund~\cite{FoHe58} and will be used in the proof
of our next lemma.

\begin{theorem}[\cite{FoHe58}]
\label{thm:cov_pairs_tri}
Let $\cX$ be an $(n,M)$ code whose codewords are all in $\Sphere_3$
and, in addition, every word in $\Sphere_2$ is covered by
at least one codeword in~$\cX$. Then
\[
\abs{\cX} \ge \left\lceil \frac{n}{3}
\left\lceil \frac{n-1}{2} \right\rceil \right\rceil ~.
\]
\end{theorem}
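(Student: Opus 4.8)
The plan is to read the statement as a theorem about covering designs and then establish the bound by a short double-counting argument. Since all codewords of~$\cX$ lie in $\Sphere_3$ while the words to be covered lie in $\Sphere_2$, and since a weight-$3$ word and a weight-$2$ word are at Hamming distance~$1$ exactly when the support of the latter is a subset of the support of the former (the distance cannot be~$0$ as the weights differ), the hypothesis is equivalent to the following combinatorial condition: the supports of the codewords of~$\cX$ form a collection of $3$-subsets of $\Int{n}$ such that every $2$-subset of $\Int{n}$ is contained in at least one of these $3$-subsets. Thus it suffices to lower-bound the number of $3$-subsets in any such collection.

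First I would fix a coordinate $i \in \Int{n}$ and let $r_i$ be the number of codewords of~$\cX$ whose support contains~$i$. A $3$-subset that contains~$i$ contains exactly two of the $n-1$ pairs of the form $\{i,j\}$ with $j \in \Int{n} \setminus \{i\}$; since each such pair must be contained in at least one codeword's support, we get $2 r_i \ge n-1$, hence $r_i \ge \ceil{(n-1)/2}$. Next I would double count coordinate--codeword incidences: every support has size~$3$, so $\sum_{i \in \Int{n}} r_i = 3\abs{\cX}$. Combining the two facts gives $3\abs{\cX} \ge n\ceil{(n-1)/2}$, and since $\abs{\cX}$ is an integer this yields $\abs{\cX} \ge \ceil{\frac{n}{3}\ceil{\frac{n-1}{2}}}$, which is the asserted bound.

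The argument is essentially a two-line count, so there is no serious obstacle; the only places that need a moment of care are the initial translation of the ``covering'' hypothesis into the subset-containment condition on supports, and the closing integrality step, where one checks that $\ceil{n\ceil{(n-1)/2}/3}$ equals the nested-ceiling expression in the statement. I would also remark that the bound stated here is not, in general, the exact covering number $C(n,3,2)$ --- for $n \equiv 5 \pmod 6$ the true minimum exceeds it by~$1$ --- but that sharpening is not needed in what follows, and the stated (weaker) inequality is precisely what the pure double-counting argument delivers.
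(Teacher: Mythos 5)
Your proof is correct, and it is the standard double-counting derivation of the Schönheim-type bound $\abs{\cX} \ge \bigl\lceil \tfrac{n}{3}\lceil \tfrac{n-1}{2}\rceil \bigr\rceil$ for covering pairs by triples. The paper itself gives no proof of this statement --- it is quoted from Fort and Hedlund with a citation only --- so there is no in-paper argument to compare against; your derivation is a legitimate self-contained substitute. All the steps check out: the translation of ``every word of $\Sphere_2$ is covered by a codeword in $\Sphere_3$'' into ``every $2$-subset of $\Int{n}$ lies in some support'' is right (distance between a weight-$2$ and a weight-$3$ word is $1$ exactly when the smaller support is contained in the larger), the local count $2r_i \ge n-1$ with the integrality round-up $r_i \ge \lceil (n-1)/2\rceil$ is right, and the incidence count $\sum_i r_i = 3\abs{\cX}$ closes the argument. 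One remark: Fort and Hedlund actually determine the exact covering number, which for $n \equiv 5 \pmod 6$ exceeds the bound you prove by $1$; as you note, the weaker inequality is all that is stated and all that the paper uses (in Lemma~\ref{lem:noD3}, where $n = 2^r$), so nothing is lost.
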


\begin{lemma}
\label{lem:noD3}
Let~$\cC$ be an $(n,M)$~NP1CC.
For every codeword $\bldc \in \cC$,
\[
\abs{\Ball_2(\bldc) \cap \cC} \ge 2~.
\]
\end{lemma}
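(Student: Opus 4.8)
The plan is to suppose, for contradiction, that some codeword $\bldc \in \cC$ is isolated in the sense that $\Ball_2(\bldc) \cap \cC = \{\bldc\}$, and to derive a violation of Theorem~\ref{thm:cov_pairs_tri}. By translating the code (which preserves all the relevant properties), I may assume $\bldc = \bldzero$. Isolation then means that $\cC$ contains no word of weight $1$ or $2$ other than $\bldzero$; in particular all $n$ words of $\Sphere_1$ are non-codewords, and each of them, being covered, must be covered by exactly one codeword of weight $2$ or—since weight-$2$ codewords are excluded—this is impossible. So the immediate move is to look one radius further out: every word of weight $2$ is a non-codeword, and it must be covered by a codeword, which (being neither of weight $\le 2$, as those are excluded, nor of weight $1$) must be a codeword of weight $3$.

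Next I would set $\cX$ to be the set of weight-$3$ codewords of $\cC$, so that $\cX \subseteq \Sphere_3$ and every word of $\Sphere_2$ is covered by at least one word of $\cX$. Theorem~\ref{thm:cov_pairs_tri} then gives $\abs{\cX} \ge \ceil{\tfrac{n}{3}\ceil{\tfrac{n-1}{2}}}$. Since $n = 2^r$ is even, $\ceil{(n-1)/2} = n/2$, so $\abs{\cX} \ge \ceil{n^2/6}$. On the other hand, I want an upper bound on $\abs{\cX}$ coming from the NP1CC structure. The natural source is the pairing into neighboring codewords (Section~\ref{sec:struct} pairs each codeword with a unique mate in its radius-$2$ ball); by hypothesis $\bldzero$ is unpaired within $\Ball_2(\bldzero)$, contradicting that very pairing—so in fact the cleanest route may be to invoke the pairing property directly rather than going through Fort--Hedlund. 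I would therefore present two strands: (a) the pairing lemma of the previous section already asserts $\abs{\Ball_2(\bldc)\cap\cC}\ge 2$, so if that is what is being (re)established here, the Fort--Hedlund counting is the intended self-contained argument; and (b) the Fort--Hedlund bound $\abs{\cX}\ge\ceil{n^2/6}$ must be matched against how many weight-$3$ codewords can coexist with the covering constraints.

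For strand (b), the upper bound on $\abs{\cX}$ should come from Theorem~\ref{thm:radius2}: pick any word $\bldy$ of weight $1$; it is a non-codeword, so $\abs{\Ball_2(\bldy)\cap\cC} = n/2 + 1$. The codewords in $\Ball_2(\bldy)$ have weight $0,1,2,3$; weights $0,1,2$ are all excluded except $\bldzero$ itself (which is not within distance $2$ of a weight-$1$ word when... actually $\distance(\bldzero,\bldy)=1\le 2$, so $\bldzero$ does count). So $\Ball_2(\bldy)\cap\cC = \{\bldzero\} \cup (\text{weight-}3\text{ codewords within distance }2\text{ of }\bldy)$, giving exactly $n/2$ weight-$3$ codewords each containing the single support coordinate of $\bldy$. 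Summing over the $n$ choices of $\bldy$ and noting each weight-$3$ codeword is counted $3$ times, $\abs{\cX} = n \cdot (n/2) / 3 = n^2/6$. Comparing with the Fort--Hedlund lower bound $\abs{\cX}\ge\ceil{n^2/6}$ gives no contradiction when $6 \mid n^2$, i.e. $3 \mid n$—but $n = 2^r$ is never divisible by $3$, so $\ceil{n^2/6} > n^2/6 = \abs{\cX}$, the desired contradiction. The main obstacle I anticipate is bookkeeping the exact count in Theorem~\ref{thm:radius2}'s application: making sure that within $\Ball_2(\bldy)$ no weight-$3$ codeword is missed or double-counted, and confirming that the only low-weight codeword in range is $\bldzero$; once that is pinned down, the parity argument ($3\nmid 2^r$) closes the proof cleanly.
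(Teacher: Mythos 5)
Your proposal is correct, but the endgame is genuinely different from the paper's. Both arguments open identically: assume $\Ball_2(\bldzero)\cap\cC=\{\bldzero\}$, observe that every word of $\Sphere_2$ must then be covered by a weight-$3$ codeword, and invoke Fort--Hedlund to get $\abs{\cX}\ge\ceil{n^2/6}$. The paper then converts this lower bound into an over-covering count of $\Sphere_2$ of at least $n/2+1$, applies pigeonhole to find two doubly-covered weight-$2$ words sharing a support coordinate, and contradicts Lemma~\ref{lem:cover_ball_x}. You instead apply Theorem~\ref{thm:radius2} to each weight-$1$ word to pin down the \emph{exact} number of weight-$3$ codewords through each coordinate ($n/2$), whence $3\abs{\cX}=n\cdot n/2$ and $\abs{\cX}=n^2/6$, which is not an integer since $3\nmid 2^{2r}$; this is a clean divisibility contradiction, and in fact it renders Fort--Hedlund superfluous in your version (non-integrality alone finishes the job, so the comparison with $\ceil{n^2/6}$ is redundant). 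Your counting of $\Ball_2(\bldy)\cap\cC$ for $\weight(\bldy)=1$ is correct: a weight-$3$ codeword is within distance $2$ of $\bldy$ exactly when its support contains that of $\bldy$, and codewords of weight $\ge 4$ are too far, so no codeword is missed or double-counted. Two small cautions: your ``strand (a)'' must be discarded, not merely set aside --- the pairing of codewords (Theorem~\ref{thm:adjacent_codewords} and Corollary~\ref{cor:pair_parts}) is \emph{derived from} this lemma, so invoking it here would be circular; and the case $n=2$ (where $\Ball_2(\bldzero)=\F_2^2$ and the claim is immediate) should be disposed of separately, as the paper does, since your weight-$3$ covering argument presupposes $n\ge 3$. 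Also, your aside that covering the words of $\Sphere_1$ ``is impossible'' is a slip --- they are all covered by $\bldzero$ --- but this does not affect the argument, which correctly works with $\Sphere_2$.
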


\begin{proof}
The result is immediate when $n = 2$, so we assume hereafter
in the proof that $n = 2^r \ge 4$ and (by possibly translating
the code) that $\bldc = \bldzero$.
Suppose to the contrary that
$\Ball_2 (\bldzero) \cap \cC = \{ \bldzero \}$.
Then $\Sphere_1 \cap \cC = \Sphere_2 \cap \cC = \emptyset$
and, so, all the words in $\Sphere_2$
are covered (only) by codewords in $\Sphere_3 \cap \cC$.
By Theorem~\ref{thm:cov_pairs_tri} we then get that
$\abs{\Sphere_3 \cap \cC} \ge (n^2/2 + 1)/3$.
Now, each codeword in $\Sphere_3 \cap \cC$
covers three words in $\Sphere_2$
and, hence, the over-covering of $\Sphere_2$ (with respect to $\cC$)
satisfies
\[
\sum_{\bldy \in \Sphere_2} (\abs{\Ball_1(\bldy) \cap \cC} - 1)
= 3 \abs{\Sphere_3 \cap \cC} - \abs{\Sphere_2}
\ge \frac{n^2}{2} + 1 - \binom{n}{2} = \frac{n}{2} + 1~.
\]
On the other hand, by Corollary~\ref{cor:cover_nonC},
$\abs{\Ball_1(\bldy) \cap \cC} \in \{ 1, 2 \}$
for every $\bldy \in \Sphere_2$.
Hence, there are at least $n/2 + 1$ words $\bldy \in \Sphere_2$
for which $\abs{\Ball_1(\bldy) \cap \cC} = 2$, which means that
at least two of these words,
say $\bldy_1$ and $\bldy_2$, must have a `$1$' at the same position.
Let~$\bldx$ be the word in $\Sphere_1$ that has its (only)
`$1$' at that position.
Then $\Ball_1(\bldx)$ contains two words, $\bldy_1$ and $\bldy_2$,
each covered by two codewords,
thereby contradicting Lemma~\ref{lem:cover_ball_x}.
We thus conclude that $\abs{\Ball_2(\bldzero) \cap \cC} \ge 2$.
\end{proof}

The next theorem presents the counterpart of Theorem~\ref{thm:radius2}
for radius-$2$ balls that are centered at codewords of an NP1CC.

\begin{theorem}
\label{thm:adjacent_codewords}
Let~$\cC$ be an $(n,M)$~NP1CC.
For every codeword $\bldc \in \cC$,
\[
\abs{\Ball_2(\bldc) \cap \cC} = 2~.
\]
\end{theorem}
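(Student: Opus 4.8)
The plan is to combine Lemma~\ref{lem:noD3}, which gives the lower bound $\abs{\Ball_2(\bldc) \cap \cC} \ge 2$, with a matching upper bound. Since every nonzero vector of $\F_2^n$ is expressible in terms of a ball count, the natural strategy is a global counting argument: I would count, in two ways, the number of incidences between codewords and balls of radius~$2$, or more precisely exploit the fact that the van Wee bound is met with equality. Concretely, by possibly translating the code I may assume $\bldc = \bldzero \in \cC$, and I want to show $\abs{\Ball_2(\bldzero) \cap \cC} \le 2$.

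The key step is to reuse the over-covering machinery of Section~\ref{sec:struct} restricted to $\Sphere_2$, exactly as in the proof of Lemma~\ref{lem:noD3}, but now running the inequality in the opposite direction. Suppose $\bldc'$ and $\bldc''$ are two distinct codewords other than $\bldzero$ in $\Ball_2(\bldzero)$. Each codeword $\bldc \ne \bldzero$ in $\Ball_2(\bldzero) \cap \cC$ contributes a controlled amount of over-covering to $\Sphere_2$: if $\weight(\bldc) = 1$ then $\bldc$ covers two words of $\Sphere_2$ beyond the covering already provided by $\bldzero$; if $\weight(\bldc) = 2$ then $\bldc$ itself lies in $\Sphere_2$ and is covered by $\bldzero$; and in both cases these contributions, together with the mandatory coverings forced by $\bldzero$, push the total over-covering of $\Sphere_2$ strictly above what Lemma~\ref{lem:cover_ball_x} permits. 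More carefully, Theorem~\ref{thm:radius2} (applied to suitable non-codewords at distance one from $\bldzero$, in the spirit of the final paragraph of the proof of Lemma~\ref{lem:noD3}) tells us that no word in $\Sphere_1$ can have two of its $1$-neighbours in $\Sphere_2$ be midwords; a third codeword in $\Ball_2(\bldzero)$ forces precisely such a collision on some coordinate, by pigeonhole on the $n$ coordinates versus the number of midwords in $\Sphere_2$ generated.

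Thus the heart of the argument is a careful bookkeeping of which words of $\Sphere_2$ become midwords (i.e.\ are covered by two codewords) once we assume $\abs{\Ball_2(\bldzero) \cap \cC} \ge 3$, and then invoking Lemma~\ref{lem:cover_ball_x} applied to an appropriate center $\bldx \in \Sphere_1$ to derive a contradiction --- mirroring the endgame of Lemma~\ref{lem:noD3} but with the extra codeword playing the role of the surplus over-covering. One must handle separately the cases according to the weights of the second and third codewords (weights in $\{1, 2\}$ each, since they lie in $\Ball_2(\bldzero)$ and $\bldzero$ is a codeword so weight $0$ is excluded for them), but each case reduces to locating two midwords of $\Sphere_2$ sharing a coordinate.

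The main obstacle I anticipate is the case analysis on the weights: when one of the extra codewords has weight~$1$ and the other weight~$2$, or both weight~$2$, the interaction between the $1$-neighbours of $\bldzero$ that are themselves codewords and the midword structure on $\Sphere_2$ is more delicate, and one has to be sure the pigeonhole count still produces a coordinate carrying two midwords rather than being absorbed by the codewords $\bldzero$ already accounts for. I expect that a clean way around this is to argue uniformly: show that $\abs{\Ball_2(\bldzero) \cap \cC} \ge 3$ would make the over-covering of $\Sphere_2$ at least $\tfrac{n}{2} + 2$ (one more than in Lemma~\ref{lem:noD3}), whence at least $\tfrac{n}{2}+2 > n/2$ words of $\Sphere_2$ are midwords and two of them collide on a coordinate, contradicting Lemma~\ref{lem:cover_ball_x} via the corresponding $\bldx \in \Sphere_1$. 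Combining this upper bound with Lemma~\ref{lem:noD3} yields the exact equality $\abs{\Ball_2(\bldc) \cap \cC} = 2$.
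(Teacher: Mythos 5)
Your lower bound (Lemma~\ref{lem:noD3}) is the right first half, but the argument you propose for the matching upper bound does not go through. The figure $n/2+1$ for the over-covering of $\Sphere_2$ in the proof of Lemma~\ref{lem:noD3} is not a general feature of NP1CCs: it is extracted from the Fort--Hedlund bound (Theorem~\ref{thm:cov_pairs_tri}), which applies only under the contradiction hypothesis of that lemma, namely that $\Sphere_2$ is covered exclusively by codewords of weight~$3$. The moment $\bldzero$ has a codeword neighbour of weight $1$ or $2$ --- which is exactly the situation in the theorem you are proving --- Fort--Hedlund says nothing, and the over-covering of $\Sphere_2$ can in fact be $0$ (this already happens in the first linear code of Example~\ref{ex:linearNP1CC}, where $\bldzero$ is paired with a weight-$1$ codeword and no word of $\Sphere_2$ is a midword). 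So assuming a third codeword in $\Ball_2(\bldzero)$ does not push the over-covering of $\Sphere_2$ up to $n/2+2$, and the pigeonhole step has nothing to bite on. Two further slips: a weight-$1$ codeword covers $n-1$ words of $\Sphere_2$, not two; and the statement ``no word of $\Sphere_1$ has two midword $1$-neighbours in $\Sphere_2$'' is Lemma~\ref{lem:cover_ball_x}, not Theorem~\ref{thm:radius2}, and it is available only when that weight-$1$ word is itself a non-codeword --- a possibility your sketch does not exclude (a purely local case analysis can be pushed through, but it must track whether the auxiliary words of weight $1$ and $2$ are themselves codewords, which your outline does not do).

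The ingredient you are missing is Theorem~\ref{thm:radius2} used \emph{globally}, not locally. The paper's proof evaluates
\[
\rho=\sum_{\bldx\in\F_2^n}\abs{\Ball_2(\bldx)\cap\cC}
= M\Bigl(\binom{n}{2}+n+1\Bigr)
\]
by counting each codeword once for every radius-$2$ ball containing it, and then subtracts the contribution of the non-codeword centers, each of which contributes exactly $n/2+1$ by Theorem~\ref{thm:radius2}. What remains for the $M$ codeword-centered balls is exactly $2M$, so their average is $2$; combined with the pointwise lower bound $\abs{\Ball_2(\bldc)\cap\cC}\ge 2$ from Lemma~\ref{lem:noD3}, every term must equal $2$. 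This averaging step is what replaces your attempted local upper bound; it is the ``count in two ways'' idea from your opening paragraph, which you abandoned in favour of the over-covering route.
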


\begin{proof}
We consider the sum
\[
\rho = \sum_{\bldx \in \F_2^n} \abs{\Ball_2(\bldx) \cap \cC}~.
\]
Every codeword $\bldc \in \cC$ is counted in this sum
exactly $\abs{\Ball_2(\bldc)} = \abs{\Ball_2(\bldzero)}$ times; so,
\[
\rho = M \cdot \abs{\Ball_2(\bldzero)}
= M \cdot \left( \binom{n}{2} + n + 1 \right)
= M \cdot \left( \frac{n^2}{2} + \frac{n}{2} + 1 \right)~.
\]
Next, we write $\rho = \sigma + \tau$, where
\[
\sigma = \sum_{\bldx \in \F_2^n \setminus \cC}
\abs{\Ball_2(\bldx) \cap \cC}
\]
and
\begin{equation}
\label{eq:tau}
\tau = \sum_{\bldc \in \cC}
\abs{\Ball_2(\bldc) \cap \cC}~.
\end{equation}
By Theorem~\ref{thm:radius2} it follows that
\[
\sigma =
(2^n - M) \left( \frac{n}{2} + 1 \right)
= M \cdot (n-1) \left( \frac{n}{2} + 1 \right)
= M \cdot \left( \frac{n^2}{2} + \frac{n}{2} - 1 \right)
\]
and, so,
\[
\tau = \rho - \sigma = 2 M~.
\]
Now, by Lemma~\ref{lem:noD3},
each of the $M$~summands in~(\ref{eq:tau}) is at least~$2$;
hence, each of them must in fact be equal to~$2$.
\end{proof}

For any codeword $\bldc$ in an NP1CC~$\cC$, the unique other codeword
$\bldc'$ in $\Ball_2(\bldc)$ will be called
the \Emph{partner} of~$\bldc$.
A pair of partners $\{ \bldc, \bldc' \}$
in which~$\bldc$ and $\bldc'$ are at distance~$1$
(respectively, $2$) apart will be called
a \Emph{\Type{\Tone}} (respectively , \Emph{\Type{\Ttwo}}) \Emph{pair}.

\begin{corollary}
\label{cor:pair_parts}
Let~$\cC$ be an $(n{=}2^r,M{=}2^{n-r})$~NP1CC.
The codewords of~$\cC$ can be partitioned uniquely into
$M/2 = 2^{n-r-1}$ (unordered) pairs $\{ \bldc, \bldc' \}$,
where $\bldc$ and~$\bldc'$ are partners.
\end{corollary}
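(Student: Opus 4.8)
The statement to prove is Corollary~\ref{cor:pair_parts}, which asserts that the codewords of an NP1CC split uniquely into $M/2$ partner pairs. The plan is to derive this directly from Theorem~\ref{thm:adjacent_codewords}, which has already established that every codeword $\bldc$ lies in $\Ball_2(\bldc)\cap\cC$ together with exactly one other codeword. First I would invoke the definition of \emph{partner}: for each $\bldc\in\cC$, let $\bldc'$ denote the unique element of $(\Ball_2(\bldc)\cap\cC)\setminus\{\bldc\}$, which is well defined precisely because $\abs{\Ball_2(\bldc)\cap\cC}=2$.

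The main point to verify is that the partner relation is symmetric, so that it induces a genuine partition into unordered pairs rather than just a function $\bldc\mapsto\bldc'$. This follows because $\distance(\bldc,\bldc')\le 2$ is a symmetric condition: if $\bldc'$ is the partner of $\bldc$, then $\bldc\in\Ball_2(\bldc')\cap\cC$, and since $\abs{\Ball_2(\bldc')\cap\cC}=2$ (again by Theorem~\ref{thm:adjacent_codewords}) with $\bldc'$ itself being one of the two elements, $\bldc$ must be the other one, i.e.\ the partner of $\bldc'$. Hence the relation ``$\bldc\sim\bldc'$ iff they are partners'' is symmetric, and trivially every codeword is related to exactly one other codeword (never to itself, since $\bldc'\ne\bldc$). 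A symmetric relation in which every element has exactly one neighbour is precisely a perfect matching on~$\cC$, so it partitions $\cC$ into unordered pairs.

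Once the partition is established, counting gives the number of pairs: since each pair has exactly two elements and the pairs are disjoint and cover all of~$\cC$, there are $M/2$ of them, and $M/2 = 2^{n-r}/2 = 2^{n-r-1}$. Uniqueness is immediate: any partition of $\cC$ into partner pairs must, for each $\bldc$, place $\bldc$ in the same block as its unique partner $\bldc'$, so the partition is forced. I do not anticipate any real obstacle here — the corollary is essentially a repackaging of Theorem~\ref{thm:adjacent_codewords} — but the one subtlety worth making explicit in the write-up is the symmetry argument, since without it one only gets a well-defined map and not an honest partition into unordered pairs.
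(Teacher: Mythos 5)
Your argument is correct and is exactly the (implicit) argument the paper intends: the corollary is stated without proof as an immediate consequence of Theorem~\ref{thm:adjacent_codewords} and the definition of partner, and your explicit verification of the symmetry of the partner relation is the right way to make that rigorous.
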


For a pair of partners $\{ \bldc, \bldc' \}$,
consider the ``capsule'' $\Ball_1(\bldc) \cup \Ball_2(\bldc')$.
We can distinguish between two types of capsules,
depending on whether the pair $\{ \bldc, \bldc' \}$
is of \Type{\Tone} or of \Type{\Ttwo}.
Interestingly, the two types of capsules have the same size, $2n$.
The midwords are precisely the words that belong to
the intersections $\Ball_1(\bldc) \cap \Ball_1(\bldc')$
when the pair is of \Type{\Ttwo}.

\begin{theorem}
\label{thm:covered_twice}
Let~$\cC$ be an $(n{=}2^r,M{=}2^{n-r})$~NP1CC.
There are exactly $M = 2^{n-r}$ words in $\F_2^n$ that are covered
by two codewords of~$\cC$ and no word is covered by more than
two codewords.
\end{theorem}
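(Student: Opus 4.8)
The plan is to first show that no word of $\F_2^n$ is covered by more than two codewords, and then to pin down the exact number covered by exactly two via a double-counting of incidences $(\bldx,\bldc)$. For the boundedness statement, a non-codeword $\bldx$ already satisfies $\abs{\Ball_1(\bldx)\cap\cC}\le 2$ by Corollary~\ref{cor:cover_nonC}. For a codeword $\bldc$, any codeword in $\Ball_1(\bldc)$ other than $\bldc$ itself lies in $\Ball_2(\bldc)$, and by Theorem~\ref{thm:adjacent_codewords} the only such codeword is the partner $\bldc'$; hence $\abs{\Ball_1(\bldc)\cap\cC}\le 2$ as well. So every word of $\F_2^n$ is covered by exactly one or exactly two codewords (at least one, since $\cC$ is $1$-covering).

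Next, let $N_1$ and $N_2$ be the numbers of words covered by exactly one and by exactly two codewords, respectively. By the previous paragraph, $N_1+N_2=2^n$. Counting the incidences $\{(\bldx,\bldc)~:~\bldc\in\cC,\ \bldx\in\Ball_1(\bldc)\}$ in two ways gives $N_1+2N_2=\sum_{\bldc\in\cC}\abs{\Ball_1(\bldc)}=M(n+1)$. Subtracting the first identity from the second yields $N_2=M(n+1)-2^n$; since $Mn=2^{n-r}\cdot 2^r=2^n$, this simplifies to $N_2=M=2^{n-r}$, which is exactly the claim.

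There is no real obstacle in this argument; the only point requiring care is that the "covered by at most two" bound must also be checked for words that happen to be codewords, which is precisely where Theorem~\ref{thm:adjacent_codewords} is used. As a consistency check, one can recover $N_2=M$ from the partition of $\cC$ into $M/2$ partner pairs $\{\bldc,\bldc'\}$ (Corollary~\ref{cor:pair_parts}): each \Type{\Tone} pair contributes its two endpoints $\bldc,\bldc'$ (each covered by itself and by its partner), each \Type{\Ttwo} pair contributes the two midwords in $\Ball_1(\bldc)\cap\Ball_1(\bldc')$, and no word can be covered by codewords drawn from two distinct pairs (again by Theorem~\ref{thm:adjacent_codewords}); hence the doubly-covered words split into $M/2$ groups of two, giving $M$ in total.
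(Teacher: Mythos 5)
Your proof is correct and follows essentially the same route as the paper: it combines the incidence count $\sum_{\bldx}\abs{\Ball_1(\bldx)\cap\cC}=M(n+1)=2^n+M$ with the fact (from Corollary~\ref{cor:cover_nonC} and Theorem~\ref{thm:adjacent_codewords}) that every word is covered by exactly one or two codewords. Your write-up merely makes explicit the codeword case of the upper bound and adds a harmless consistency check via the partner pairs.
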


\begin{proof}
Each codeword of~$\cC$ covers $n+1$ words of $\F_2^n$
and, so,
\[
\sum_{\bldx \in \F_2^n} \abs{\Ball_1(\bldx) \cap \cC} = M(n+1)
= \abs{\F_2^n} + M~.
\]
The result follows from
Corollary~\ref{cor:cover_nonC}
and Theorem~\ref{thm:adjacent_codewords},
which imply that
$\abs{\Ball_1(\bldx) \cap \cC} \in \{ 1, 2 \}$
for every $\bldx \in \F_2^n$.
\end{proof}

The words in Theorem~\ref{thm:covered_twice}
that are covered by two codewords are
(i)~the midwords and
(ii)~the partners in \Type{\Tone} pairs.

We end this section by presenting sufficient conditions for
a code to be an NP1CC.

\begin{corollary}
\label{cor:nearly_part}
Let~$\cC$ an $(n,M)$ code where $M$~is even,
and suppose that $\cC$ can be partitioned into $M/2$
unordered pairs
$\{ \bldc, \bldc' \}$ where $\distance(\bldc,\bldc') \le 2$.
Suppose in addition that the respective $M/2$ capsules form
a partition of $\F_2^n$. Then $\cC$ is an NP1CC.
\end{corollary}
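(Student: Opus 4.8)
The plan is to verify directly that $\cC$ satisfies the two requirements defining an NP1CC: that it has covering radius~$1$, and that its parameters are $n = 2^r$ and $M = 2^{n-r}$ for some~$r$ (equivalently, that $\cC$ meets the van Wee bound~(\ref{eq:struik})). Both will follow from the assumed partition of $\F_2^n$ into capsules, together with one elementary count.

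For the covering radius, recall that the capsule of a pair $\{\bldc,\bldc'\}$ is the union of the two radius-$1$ balls centred at $\bldc$ and $\bldc'$, namely $\Ball_1(\bldc) \cup \Ball_1(\bldc')$, and that both $\bldc$ and $\bldc'$ are codewords. Since the $M/2$ capsules partition $\F_2^n$, every word $\bldx \in \F_2^n$ lies in some $\Ball_1(\bldc) \cup \Ball_1(\bldc')$ and is therefore within distance~$1$ of a codeword; hence $\distance(\bldx,\cC) \le 1$ for all $\bldx$, so the covering radius of~$\cC$ is at most~$1$.

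For the parameters, I would use the count already recorded just before Theorem~\ref{thm:covered_twice}: a capsule always has exactly $2n$ elements. Indeed, the two members of a pair $\{\bldc,\bldc'\}$ in the partition are distinct, so $\distance(\bldc,\bldc') \in \{1,2\}$; in either case the triangle inequality forces any word in $\Ball_1(\bldc) \cap \Ball_1(\bldc')$ to lie strictly between $\bldc$ and $\bldc'$, and a one-line check shows there are exactly two such words (namely $\bldc$ and $\bldc'$ themselves when they are at distance~$1$, and the two intermediate weight-one perturbations of~$\bldc$ when they are at distance~$2$). By inclusion--exclusion, $\abs{\Ball_1(\bldc) \cup \Ball_1(\bldc')} = 2(n{+}1) - 2 = 2n$. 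Adding the sizes of the $M/2$ pairwise-disjoint capsules whose union is $\F_2^n$ gives $(M/2)\cdot 2n = 2^n$, i.e.\ $Mn = 2^n$. Since $2^n$ is a power of two, so is $n$; writing $n = 2^r$ we then get $M = 2^n/n = 2^{n-r}$.

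It remains to collect the pieces. If $n \ge 2$ then $r \ge 1$, so $M = 2^{n-r} < 2^n$; thus some word is not a codeword, and the covering radius, already seen to be at most~$1$, is exactly~$1$. Moreover $M = 2^n/n$ is precisely equality in~(\ref{eq:vanWee}), the $R = 1$ (even~$n$) instance of~(\ref{eq:struik}), so $\cC$ meets the van Wee bound; being a $1$-covering code that does so, $\cC$ is an NP1CC. (The degenerate case $n = 1$ forces $\cC = \F_2$.) I do not expect a real obstacle here, since the whole argument is a counting argument; the only points that need a little care are verifying that every capsule has size exactly $2n$ for both admissible values of $\distance(\bldc,\bldc')$, and observing that $Mn = 2^n$ pins $n$ down to a power of two.
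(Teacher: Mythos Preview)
Your proof is correct and follows essentially the same route as the paper's: the paper's two-sentence argument observes that the capsules cover $\F_2^n$ (so the code is $1$-covering) and that each capsule has size $2n$ (so equality holds in~(\ref{eq:vanWee})), exactly as you do. You simply flesh out the details the paper leaves implicit---in particular the verification that $\abs{\Ball_1(\bldc)\cup\Ball_1(\bldc')}=2n$ in both the distance-$1$ and distance-$2$ cases, and the observation that $Mn=2^n$ forces $n$ to be a power of two.
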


\begin{proof}
The code $\cC$ is $1$-covering since every word in $\F_2^n$
is contained in at least one capsule. And since the size
of each capsule is~$2n$ we get equality in~(\ref{eq:vanWee}).
\end{proof}

\begin{corollary}
\label{cor:sufficient}
Let~$\cC$ an $(n{=}2^r,M{=}2^{n-r})$ code
where $r$~is a positive integer.
Then~$\cC$ is an NP1CC,
if and only if $\abs{\Ball_2(\bldc) \cap \cC} = 2$
for every codeword $\bldc \in \cC$.
\end{corollary}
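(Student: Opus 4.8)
The plan is to prove the two directions of Corollary~\ref{cor:sufficient} separately, leveraging the results already established in this section. The forward direction (if $\cC$ is an NP1CC, then $\abs{\Ball_2(\bldc) \cap \cC} = 2$ for every $\bldc \in \cC$) is exactly the content of Theorem~\ref{thm:adjacent_codewords}, so there is nothing further to do there; I would simply cite it. The substance lies in the converse.

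For the converse, suppose $\cC$ is an $(n{=}2^r, M{=}2^{n-r})$ code with $\abs{\Ball_2(\bldc) \cap \cC} = 2$ for every $\bldc \in \cC$. The first step is to observe that this hypothesis immediately gives a partition of $\cC$ into $M/2$ unordered pairs of partners $\{\bldc,\bldc'\}$ with $\distance(\bldc,\bldc') \le 2$: the relation ``$\bldc'$ is the unique other codeword in $\Ball_2(\bldc)$'' is symmetric (since $\distance$ is symmetric and $\bldc \in \Ball_2(\bldc')$ forces $\bldc$ to be \emph{the} unique other codeword in $\Ball_2(\bldc')$), so it partitions $\cC$ into pairs. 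By Corollary~\ref{cor:nearly_part}, it then suffices to show that the $M/2$ associated capsules $\Ball_1(\bldc) \cup \Ball_1(\bldc')$ form a partition of $\F_2^n$. Each capsule has size $2n$ (this is the ``same size'' observation made before Theorem~\ref{thm:covered_twice}, and can be verified directly by case-checking whether the pair is at distance $1$ or $2$), so the total size counted with multiplicity is $(M/2) \cdot 2n = M n = 2^n = \abs{\F_2^n}$. Hence it is enough to prove that the capsules are pairwise disjoint, or equivalently that they \emph{cover} $\F_2^n$.

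I expect the main obstacle to be exactly this disjointness/covering step, since the hypothesis is only about radius-$2$ balls centered at codewords, whereas a capsule is a union of two radius-$1$ balls. The plan is to argue by double counting the incidences between words of $\F_2^n$ and capsules. First I would show that no word lies in $3$ or more capsules: if $\bldx$ lay in capsules of three distinct pairs, then $\bldx$ would be within distance $1$ of a codeword from each pair, giving at least three codewords in $\Ball_1(\bldx)$; pick one such codeword $\bldc$, and the other two codewords lie in $\Ball_2(\bldc) \cap \cC$, which together with $\bldc$ gives $\abs{\Ball_2(\bldc) \cap \cC} \ge 3$, contradicting the hypothesis. (One must also rule out a word lying ``doubly'' in a single capsule in a way that inflates the count, but a capsule is a \emph{set}, so each word is in it at most once; the subtle point is rather a word being within distance $1$ of \emph{both} members of one pair and of a member of another pair — again this yields $\ge 3$ codewords in some $\Ball_2(\bldc)$.) Thus each word is in at most two capsules, so the number of capsules covering $\F_2^n$, counted with multiplicity, is at most $2 \cdot 2^n$, which is consistent but not yet conclusive.

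To close the argument I would sharpen the count: let $a$ be the number of words lying in exactly one capsule and $b$ the number lying in exactly two (no word lies in zero capsules would be the goal, and none in three or more by the above). Then $a + 2b = Mn = 2^n$ would follow from the size computation \emph{if} every word lies in at least one capsule; conversely, to get covering I would show $b$ is forced to equal the number of words covered by two codewords. A cleaner route: show directly that every word $\bldx \in \F_2^n$ lies in some capsule. If $\bldx$ is a codeword it lies in its own capsule. If $\bldx \notin \cC$, then $\abs{\Ball_1(\bldx) \cap \cC} \le 2$ need not be assumed — instead, observe that $\bldx$ must be within distance $1$ of some codeword (this is what we ultimately want, i.e. $1$-covering), so really the cleanest path is: run the counting argument to conclude $\sum_{\bldx} (\text{number of capsules containing } \bldx) = 2^n$ with each summand in $\{0,1,2\}$, and then separately bound the number of $\bldx$ with summand $2$ from above by $M$ (via Theorem~\ref{thm:covered_twice}-style reasoning or directly: such $\bldx$ is covered by two codewords from distinct pairs), forcing the number of $\bldx$ with summand $0$ to be $0$. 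I anticipate needing to be careful about whether a word can lie in two capsules \emph{of the same pair type} versus \emph{mixed types}, and about the edge case $n = 2$; the $n=2$ case should be handled by direct inspection as in Lemma~\ref{lem:noD3}. Once covering is established, disjointness follows from the size count, Corollary~\ref{cor:nearly_part} applies, and the proof is complete.
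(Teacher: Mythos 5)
Your setup coincides with the paper's: cite Theorem~\ref{thm:adjacent_codewords} for the ``only if'' direction, partition $\cC$ into $M/2$ partner pairs, note that each capsule has size $2n$ so that the total size $(M/2)\cdot 2n$ equals $2^n$, and reduce the problem to showing the capsules are pairwise disjoint. You also have the right key mechanism, namely the triangle inequality combined with the hypothesis $\abs{\Ball_2(\bldc)\cap\cC}=2$. But you stop one step short, and the proof as written does not close. The argument you use to exclude a word from lying in \emph{three} capsules already excludes it from lying in \emph{two}: if $\bldx\in\Ball_1(\bldc_i)\cap\Ball_1(\bldc_j)$ with $\bldc_i$ and $\bldc_j$ belonging to distinct pairs, then $\distance(\bldc_i,\bldc_j)\le 2$, so $\Ball_2(\bldc_i)\cap\cC$ contains $\bldc_i$, its partner $\bldc_i'$, and $\bldc_j$ --- three distinct codewords, since $\bldc_j\ne\bldc_i'$ precisely because the pairs are distinct. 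This contradicts the hypothesis, so distinct capsules are disjoint; the size count then shows their union is all of $\F_2^n$, and Corollary~\ref{cor:nearly_part} finishes. This is exactly the paper's proof.

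Because you only derive ``at most two capsules per word,'' your final paragraph is forced into an incidence count that is never completed, and the routes you sketch for completing it are problematic: bounding the number of doubly-covered words via ``Theorem~\ref{thm:covered_twice}-style reasoning'' is circular, since that theorem concerns codes already known to be NP1CCs, which is what you are trying to prove about $\cC$; and the observation that every non-codeword ``must be within distance $1$ of some codeword'' is the $1$-covering property itself, which cannot be assumed. Likewise, the case distinctions about same versus mixed pair types and the special treatment of $n=2$ become unnecessary once disjointness is obtained in one stroke. In short: correct skeleton and correct key inequality, but the proof needs the two-capsule (not three-capsule) application of that inequality in order to terminate.
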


\begin{proof}
Theorem~\ref{thm:adjacent_codewords} establishes the ``only if'' part,
so we prove sufficiency.
Let~$\cC$ an $(n{=}2^r,M{=}2^{n-r})$ code such that
$\abs{\Ball_2(\bldc) \cap \cC} = 2$ for every codeword $\bldc \in \cC$.
We can then partition~$\cC$
(uniquely) into $M/2$ unordered pairs $\{ \bldc, \bldc' \}$
where $\distance(\bldc,\bldc') \le 2$.
We show that the capsules that correspond
to distinct pairs are disjoint.

Indeed, suppose that the capsules that correspond to
the pairs $\{ \bldc_1, \bldc_2 \}$
and $\{ \bldc_3, \bldc_4 \}$ intersect, i.e.,
there exists a word $\bldx \in \F_2^n$ in the intersection
\[
\left( \Ball_1(\bldc_1) \cup \Ball_1(\bldc_2) \right) \cap
\left( \Ball_1(\bldc_3) \cup \Ball_1(\bldc_4) \right)~.
\]
This means that $\bldx \in \Ball_1(\bldc_i) \cap \Ball_1(\bldc_j)$,
where $i \in \{ 1, 2 \}$ and $j \in \{ 3, 4 \}$.
By the triangle inequality,
\[
\distance(\bldc_i,\bldc_j)
\le \distance(\bldc_i,\bldx) + \distance(\bldc_j,\bldx) \le 2~,
\]
which means that $\bldc_i$ and~$\bldc_j$ are in the same capsule.
Yet this is possible
only if $\{ \bldc_1, \bldc_2 \} = \{ \bldc_3, \bldc_4 \}$.

Since the $M/2$ capsules are disjoint, the size of their union
is $(M/2)(2n) = 2^n$. Hence, they form a partition of $\F_2^n$,
and the result follows from Corollary~\ref{cor:nearly_part}.
\end{proof}

\section{Elementary Constructions of NP1CCs}
\label{sec:construct}

All the constructions of NP1CCs which will be presented in
this section are based on perfect codes and their properties.
Hence, we start this section by presenting some
basics of perfect codes. Recall that a perfect code is a code that meets
the bounds of~(\ref{eq:sphere-packing})
and~(\ref{eq:sphere-covering}).
We will consider only codes for which $R=1$ in these equations.
Such a code has length $n = 2^r -1$ and size $M  = 2^{n-r}$.
For each length $n$, there is an essentially unique linear perfect code
known as the Hamming code. A \Emph{perfect code}
can be a zeroed perfect code or its non-zeroed translate.
The number of nonequivalent perfect codes
is very large and it was considered throughout
the years~\cite{CHLL97}, \cite{Etz22}.
For example, it was proved in~\cite{EtVa94}, \cite{Phe84}, \cite{Vas62}
that the number of nonequivalent perfect codes of length $n$,
for sufficiently large~$n$ and a constant
$c=0.5 - \epsilon$, is at least $2^{2^{cn}}$.
Analysis of various constructions of such codes can be found
in~\cite[pp.~296--310]{CHLL97}.

An \Emph{extended zeroed perfect code} is obtained from
a zeroed perfect code by adding an even parity in a new coordinate.
There are two types of non-zeroed translates for
an extended zeroed perfect code, an odd translate and an even translate.
An \Emph{odd translate} of an extended zeroed perfect code contains
only words with odd weight including exactly one word of weight~$1$.
An \Emph{even translate} of an extended zeroed perfect code of
length~$2^r$ contains only words of even weight
including $2^{r-1}$ words of weight~$2$.
Since the zeroed perfect code is a perfect code
with covering radius~$1$, the following lemmas are followed.

\begin{lemma}
\label{lem:puncturePer}
If $\cC$ is an extended zeroed perfect code,
then deleting any one of its coordinates yields a perfect code.
\end{lemma}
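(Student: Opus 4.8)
The statement to prove is Lemma~\ref{lem:puncturePer}: if $\cC$ is an extended zeroed perfect code of length $n=2^r$, then deleting any coordinate yields a perfect code of length $2^r-1$.

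My plan is as follows. Write $\cC = \{ (\bldc, \pi(\bldc)) : \bldc \in \cC_0 \}$, where $\cC_0$ is the underlying zeroed perfect code of length $2^r-1$ and $\pi(\bldc)$ is the even-parity bit appended in the new coordinate (so $\weight(\bldc) + \pi(\bldc)$ is even for every $\bldc$). Deleting the last (new) coordinate trivially gives back $\cC_0$, which is perfect by hypothesis, so the content is really in deleting one of the first $n-1 = 2^r-1$ coordinates. The strategy is to show directly that the punctured code $\cC'$ obtained by deleting coordinate $i$ has the right parameters — length $2^r-1$, size $M = 2^{n-r}$ — and satisfies the sphere-packing/sphere-covering equality, which for radius $1$ amounts to showing $\abs{\cC'} \cdot ((2^r-1)+1) = \abs{\cC'}\cdot 2^r = 2^{2^r-1}$ together with $\cC'$ having minimum distance at least $3$ (or, equivalently, covering radius $1$ and minimum distance $3$, i.e.\ a tiling of $\F_2^{2^r-1}$ by radius-$1$ balls).

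The key steps, in order: (i)~Note that $\cC$ is a binary code of length $2^r$, size $M=2^{2^r-r}$, all of whose codewords have even weight, with minimum distance $4$ (deleting a coordinate from a distance-$3$ perfect code and re-extending cannot decrease the minimum distance below $3$, and evenness pushes it to $4$); the balls of radius $1$ around its codewords are pairwise disjoint and the union misses exactly the odd-weight words, consistent with $M\cdot(n+1) = M(2^r) = 2^{2^r-1} = \abs{\{\text{even-weight words}\}}$ only after noting the extended code is \emph{not} itself a packing in the usual sense — so instead I would argue through $\cC_0$ directly. (ii)~Fix a coordinate $i \in [2^r-1]$ and let $\cC'$ be $\cC_0$ punctured at $i$ (equivalently $\cC$ punctured at $i$ and then at the parity coordinate — the order does not matter). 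Because $\cC_0$ is perfect with minimum distance $3$, no two codewords of $\cC_0$ agree on all coordinates outside $i$; hence puncturing at $i$ is injective and $\abs{\cC'} = M$. (iii)~The crucial claim is that $\cC'$ still has minimum distance $3$: suppose $\bldu', \bldv' \in \cC'$ with $\distance(\bldu',\bldv') \le 2$, coming from $\bldu,\bldv \in \cC_0$. Since $\distance(\bldu,\bldv)\ge 3$ and they differ on at most one further coordinate (namely $i$), we must have $\distance(\bldu,\bldv)=3$ with one of the three differing positions being $i$. Re-examining the even-parity extension: in $\cC$ (before puncturing the parity bit), $\distance$ between the two extended words is $4$, and position $i$ is one of those four; deleting $i$ leaves distance $3$, contradicting $\distance(\bldu',\bldv')\le 2$. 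So $\cC'$ has minimum distance $3$. (iv)~Finally, $\abs{\cC'}\cdot(1 + (2^r-1)) = M \cdot 2^r = 2^{2^r-r}\cdot 2^r = 2^{2^r} $ — wait, that is $2^{2^r}$, not $2^{2^r-1}$, so in fact the length here is $2^r-1$ and the sphere size is $2^r$, giving $\abs{\cC'}\cdot 2^r = 2^{2^r-1}\cdot 2^r$ only if $M = 2^{2^r-1-r}$; I must recompute $M$ carefully from the perfect-code parameters ($n=2^r-1 \Rightarrow M = 2^{n-r} = 2^{2^r-1-r}$), and then $M\cdot 2^r = 2^{2^r-1}$, matching $\abs{\F_2^{2^r-1}}$, so $\cC'$ is a perfect $1$-error-correcting code.

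The main obstacle I anticipate is step~(iii): ruling out a drop in minimum distance upon puncturing. The resolution is to exploit the even-parity structure of the extended code rather than $\cC_0$ alone — two codewords of $\cC_0$ at distance exactly $3$ have, in $\cC$, an extended distance of $4$ (their parity bits differ), so any single coordinate we delete from $\cC_0$ among those three leaves distance $2$ \emph{in $\cC_0$ punctured} but the even-parity bit still records the parity mismatch; however, since we are puncturing $\cC_0$ and not $\cC$, I should instead argue: if $\cC'$ had two words at distance $\le 2$, lifting back and appending the $i$-th coordinate recovers two codewords of $\cC_0$ at distance $\le 3$, hence exactly $3$, with the $i$-th coordinate being a differing position; but then in the \emph{extended} code these two codewords are at distance $4$ and deleting coordinate $i$ gives two words of $\cC$-punctured-at-$i$ at distance $3$ — and deleting the parity coordinate too cannot increase distance, contradiction is only reached if I track which coordinate carries the parity. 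I will organize this as: puncturing commutes (delete $i$, then delete parity $=$ delete parity, then delete $i$), so $\cC' = $ (extended code punctured at parity) punctured at $i$ $= \cC_0$ punctured at $i$; and separately, (extended code punctured at $i$) is again an extended perfect code of length $2^r$ by symmetry of which coordinate plays the "parity" role — this is the cleanest route and reduces everything to Lemma~\ref{lem:puncturePer} being essentially a statement that the extended Hamming-type code looks the same from every coordinate, after which minimum distance $3$ and the counting in step~(iv) finish the proof.
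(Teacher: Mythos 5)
The paper states this lemma without proof (it is presented as an immediate consequence of the perfect code having covering radius~$1$), so the comparison here is against the standard short argument: the extended code $\cC$ has length $2^r$, size $M=2^{2^r-1-r}$, and minimum distance~$4$ (two codewords of the underlying perfect code at distance~$3$ have opposite parities, so extension raises their distance to~$4$); deleting any single coordinate therefore yields a code of length $2^r-1$ with the same number of codewords and minimum distance at least~$3$, and since $M\cdot\bigl(1+(2^r-1)\bigr)=2^{2^r-1}$ the radius-$1$ balls tile $\F_2^{2^r-1}$, i.e.\ the punctured code is perfect. You have all of these ingredients in steps (i), (ii) and (iv), but your write-up never actually assembles them around the correct object.

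The concrete gap is your definition of $\cC'$ in step~(ii): you set $\cC'$ to be $\cC_0$ punctured at $i$, which is the extended code punctured at \emph{two} coordinates ($i$ and the parity position) and hence has length $2^r-2$ -- not the length-$(2^r-1)$ code the lemma is about. Step~(iii) then tries to prove that this doubly punctured code has minimum distance~$3$, which is false in general (two codewords of $\cC_0$ at distance~$3$ with $i$ in their difference support yield words at distance~$2$ after both punctures), and your own text notices that the intended contradiction ``is only reached if I track which coordinate carries the parity.'' The patch you propose -- that the extended code ``looks the same from every coordinate'' by symmetry of which coordinate plays the parity role -- is not available: for nonlinear perfect codes there is no a priori coordinate symmetry of the extension, and asserting it is essentially restating the lemma to be proved. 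None of this is needed: apply your steps (i), (ii), (iv) directly to $\cC$ punctured at a single coordinate $i\in[2^r]$, use minimum distance~$4$ of $\cC$ to get minimum distance at least~$3$ and injectivity of the puncturing, and conclude by the sphere-packing count; step~(iii) and the symmetry detour should be deleted.
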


\begin{lemma}
\label{lem:dist_x_ExHammE}
Let $\cC$ be an extended zeroed perfect code of length $n = 2^r$.
\begin{itemize}
\item[(1)]
For each word $\bldx \in \F_2^n$ of odd weight
there exists exactly one codeword $\bldc$ in $\cC$
such that $\distance(\bldc,\bldx)=1$.
\item[(2)]
For each word $x \in \F_2^n$ of even weight there exists exactly
one codeword~$\bldc$ in an odd translate of the extended
zeroed perfect code such that $\distance(\bldc,\bldx)=1$.
\item[(3)]
For each word $x \in \F_2^n$ of odd weight there exists exactly
one codeword $\bldc$ in an even translate of the extended
zeroed perfect code such that $\distance(\bldc,\bldx)=1$.
\end{itemize}
\end{lemma}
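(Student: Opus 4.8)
The plan is to prove all three parts by reducing them to the defining covering property of the (unextended) perfect code guaranteed by Lemma~\ref{lem:puncturePer}, together with a parity bookkeeping argument. Recall that an extended zeroed perfect code $\cC$ of length $n=2^r$ consists only of even-weight words, and that puncturing on any coordinate yields a perfect code of length $n-1=2^r-1$ with covering radius~$1$; the same is true for any translate of $\cC$, since puncturing commutes with translation. Throughout I will write $\cC' = \blde + \cC$ for the relevant translate ($\blde = \bldzero$ for part~(1), an odd-weight $\blde$ for part~(2), an even-weight $\blde\ne\bldzero$ for part~(3)).

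For part~(1), take $\bldx \in \F_2^n$ of odd weight. Existence: puncture $\cC$ on the last coordinate to get a perfect code $\widehat{\cC}$ of length $n-1$; let $\widehat{\bldx}$ be $\bldx$ with its last coordinate removed. Since $\widehat{\cC}$ is $1$-covering there is a codeword $\widehat{\bldc}\in\widehat{\cC}$ with $\distance(\widehat{\bldc},\widehat{\bldx})\le 1$; the (unique) extension $\bldc\in\cC$ of $\widehat{\bldc}$ is even-weight, and since $\bldx$ is odd-weight we cannot have $\distance(\bldc,\bldx)=0$, so $\distance(\bldc,\bldx)\in\{1,2\}$ according to whether the last coordinates agree or not. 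If $\distance(\bldc,\bldx)=2$, I instead puncture on the coordinate where $\bldc$ and $\bldx$ first differ (there are at least two such coordinates because the weights of $\bldc$ and $\bldx$ have opposite parities, forcing $\distance(\bldc,\bldx)$ odd if all other coordinates agreed — more cleanly: just observe that some puncturing must land the two words at distance $\le 1$ and extend back to distance exactly~$1$ by the parity mismatch). The cleanest route is: pick any coordinate $i\in\Support(\bldx)$ that lies outside the symmetric difference, puncture there, cover $\widehat{\bldx}$ by some $\widehat{\bldc}$, extend to $\bldc\in\cC$; then $\distance(\bldc,\bldx)$ is odd and $\le 1+1$, hence exactly~$1$. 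Uniqueness: if $\bldc_1,\bldc_2\in\cC$ both satisfy $\distance(\bldc_i,\bldx)=1$ with $\bldc_1\ne\bldc_2$, then $\distance(\bldc_1,\bldc_2)\le 2$; but $\cC$ has minimum distance $4$, a contradiction.

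For parts~(2) and~(3) the argument is the same template applied to the translate $\cC'=\blde+\cC$, which again has minimum distance~$4$ (translation preserves the distance distribution) and again punctures to a perfect $1$-covering code on every coordinate. In part~(2), $\blde$ has odd weight so $\cC'$ consists of odd-weight words; for a given even-weight $\bldx$, the parity mismatch between $\bldx$ and any $\bldc\in\cC'$ again forces $\distance(\bldc,\bldx)$ odd, so $\ne 0$, and the puncture-cover-extend argument gives one at distance exactly~$1$; uniqueness follows from minimum distance~$4$ exactly as before. Part~(3) is identical with the parities swapped: $\blde$ even and $\ne\bldzero$ makes $\cC'$ an even-weight code, $\bldx$ odd-weight, parity mismatch forces $\distance(\bldc,\bldx)$ odd, and the rest goes through verbatim.

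The only genuine subtlety — the step I expect to require the most care — is guaranteeing that the codeword obtained by the puncture-cover-extend procedure lands at distance exactly~$1$ (rather than $0$ or $2$) from $\bldx$. The parity argument handles the ``$\ne 0$'' part uniformly. For the ``$\le 1$'' part one must be slightly careful about which coordinate is punctured: if one punctures a coordinate inside the symmetric difference of a candidate covering codeword and $\bldx$, the extension could jump to distance~$3$. This is why I would phrase the existence step abstractly — ``there exists a coordinate whose deletion, followed by covering in the punctured perfect code and re-extension, yields distance exactly~$1$'' — and justify it by a short counting/parity remark rather than committing to a fixed coordinate in advance; alternatively, one observes directly that each of these extended perfect codes (and its translates) is itself a $1$-covering code of even/odd-weight words, so the covering codeword of $\bldx$ within the full-length code $\cC'$ automatically has distance~$1$ by the parity obstruction to distance~$0$. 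I would present whichever of these two formulations the surrounding text has set up most economically.
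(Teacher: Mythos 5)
Your proof is correct and is precisely the argument the paper leaves implicit (the paper justifies this lemma only by the one-line remark that the underlying zeroed perfect code has covering radius~$1$): puncture, cover in the resulting perfect code, re-extend to get distance at most~$2$, let the parity mismatch between $\bldx$ and the (all-even or all-odd) translate force the distance to be exactly~$1$, and get uniqueness from minimum distance~$4$. The only blemish is that your worry about which coordinate to puncture is unfounded --- the bound $\distance(\bldc,\bldx)\le\distance(\widehat{\bldc},\widehat{\bldx})+1\le 2$ holds for \emph{any} punctured coordinate, so the ``outside the symmetric difference'' caveat (which is in any case circular, since the codeword has not yet been chosen) should simply be dropped in favor of the clean formulation you give at the end.
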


The simple construction in the next theorem yields NP1CCs
for all three types.

\begin{theorem}
\label{thm:Con_general}
Let $\cC_1$ and $\cC_2$ be
$(n{-}1{=}2^r{-}1,M{=}2^{n-r-1})$ perfect codes.
Then the code
\[
\cC \triangleq
\{ (\bldc,0) ~:~ \bldc \in \cC_1  \}
\;\cup\; \{ (\bldc,1) ~:~ \bldc \in \cC_2 \}
\]
is an $(n,M)$ NP1CC.
\end{theorem}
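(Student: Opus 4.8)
The plan is to use the sufficient condition from Corollary~\ref{cor:sufficient}: since $\cC$ has length $n=2^r$ and size $M=2^{n-r}$, it suffices to show that $\abs{\Ball_2(\bldc)\cap\cC}=2$ for every codeword $\bldc\in\cC$. By the symmetry of the construction (the two halves play interchangeable roles), it is enough to treat a codeword of the form $\bldc=(\bldc_1,0)$ with $\bldc_1\in\cC_1$; the argument for $(\bldc_2,1)$ is identical with the roles of $\cC_1,\cC_2$ swapped.

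First I would split $\Ball_2((\bldc_1,0))\cap\cC$ according to the last coordinate. The codewords $(\bldc_1',0)$ in the ball correspond to $\bldc_1'\in\cC_1$ with $\distance(\bldc_1,\bldc_1')\le 2$; since $\cC_1$ is a perfect $1$-covering code, its minimum distance is~$3$, so the only such $\bldc_1'$ is $\bldc_1$ itself, contributing exactly one codeword. The codewords $(\bldc_2,1)$ in the ball correspond to $\bldc_2\in\cC_2$ with $\distance(\bldc_1,\bldc_2)+1\le 2$, i.e.\ $\distance(\bldc_1,\bldc_2)\le 1$, i.e.\ $\bldc_2\in\Ball_1(\bldc_1)$. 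So I must show that $\cC_2$ contains exactly one codeword in $\Ball_1(\bldc_1)$. This is precisely the statement that $\bldc_1$ is covered by a unique codeword of $\cC_2$, which holds because $\cC_2$ is a perfect code with covering radius~$1$ (the balls of radius~$1$ about its codewords partition $\F_2^{n-1}$, so every word, in particular $\bldc_1$, lies in exactly one of them). Hence $\abs{\Ball_2((\bldc_1,0))\cap\cC}=1+1=2$, as required.

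Putting the two cases together gives $\abs{\Ball_2(\bldc)\cap\cC}=2$ for all $\bldc\in\cC$, and Corollary~\ref{cor:sufficient} immediately yields that $\cC$ is an NP1CC. I do not anticipate a genuine obstacle here; the only point needing care is to invoke the right property of perfect codes in each half—minimum distance~$3$ to kill the ``same-half'' contribution down to the codeword itself, and covering radius~$1$ (uniqueness of cover, which follows from perfectness, not merely from $1$-covering) to pin the ``other-half'' contribution to exactly one. One should also note in passing that the parameters match: $\abs{\cC}=\abs{\cC_1}+\abs{\cC_2}=2\cdot 2^{n-r-1}=2^{n-r}=M$ and the length is $n=2^r$, so $\cC$ is indeed an $(n,M)$ code of the form to which Corollary~\ref{cor:sufficient} applies.
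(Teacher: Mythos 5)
Your proof is correct, but it takes a genuinely different route from the paper's. The paper argues directly from the definition of an NP1CC as a code meeting the bound~(\ref{eq:vanWee}): since $\abs{\cC}=2\cdot 2^{n-r-1}=2^{n-r}=2^n/n$, it only remains to check that $\cC$ is $1$-covering, and this is immediate by writing $\bldx=(\bldy,b)$ and covering $\bldy$ with a codeword of $\cC_1$ when $b=0$ and of $\cC_2$ when $b=1$ (using only that each $\cC_i$ has covering radius~$1$). You instead verify the local criterion of Corollary~\ref{cor:sufficient}, splitting $\Ball_2((\bldc_1,0))\cap\cC$ by the last coordinate and using minimum distance~$3$ of $\cC_1$ for the same-half contribution and the \emph{exact} (perfect) covering property of $\cC_2$ for the cross-half contribution. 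Both arguments are sound; the paper's is shorter and needs strictly weaker hypotheses on the $\cC_i$ (covering radius~$1$ and the right size, rather than the full packing/covering duality), while yours has the advantage of explicitly exhibiting the partner of each codeword --- $(\bldc_1,0)$ is paired with $(\bldc_2,1)$ for the unique $\bldc_2\in\cC_2\cap\Ball_1(\bldc_1)$ --- which is exactly the information exploited immediately afterwards in Corollary~\ref{cor:con_three_types} to sort the construction into Types \TA, \TB, and \TC. Since Corollary~\ref{cor:sufficient} is established in Section~\ref{sec:struct}, before this theorem, there is no circularity in invoking it.
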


\begin{proof}
Since $\abs{\cC} = M = 2^{n-r}$, it suffices to show that
$\distance(\bldx,\cC) \le 1$ for every word $\bldx \in \F_2^n$.
Write $\bldx = (\bldy,b)$ where $b \in \F_2$.
Since $\cC_1$ and $\cC_2$ are perfect codes
we have $\distance(\bldy,\cC_1)\le 1$
and $\distance(\bldy,\cC_2)\le 1$;
hence, $\distance(\bldx,\cC) \le 1$ regardless of~$b$.\qed
\end{proof}

\begin{corollary}
\label{cor:con_three_types}
$~$
\begin{itemize}
\item[(1)]
If $\cC_1$ is a perfect code and
$\cC_1 = \cC_2$ in Theorem~\ref{thm:Con_general},
then the code $\cC$ is an NP1CC of \Type{\TA}.

\item[(2)]
If $\cC_1$ in Theorem~\ref{thm:Con_general} is a perfect code
and $\cC_2$ is a perfect code such that $\cC_1 \cap \cC_2 = \varnothing$, then
the code~$\cC$ is an NP1CC of \Type{\TB}.

\item[(3)]
If $\cC_1$ in Theorem~\ref{thm:Con_general} is a perfect code
and $\cC_2$ is a perfect code such that $\cC_1 \ne \cC_2$
and $\cC_1 \cap \cC_2 \ne \varnothing$, then the code $\cC$ is
an NP1CC of \Type{\TC}.
\end{itemize}
\end{corollary}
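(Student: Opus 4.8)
The plan is to analyze, for each of the three cases, how the pairing of codewords into partners (Corollary~\ref{cor:pair_parts}) arises in the concatenated code $\cC$ of Theorem~\ref{thm:Con_general}, and then to read off the type from the distances within those pairs. The key observation is that since $\cC_1$ and $\cC_2$ are perfect codes of length $n-1 = 2^r-1$, any codeword of $\cC$ has the form $(\bldc,b)$, and its partner is the unique other codeword of $\cC$ inside $\Ball_2((\bldc,b))$, whose existence and uniqueness are guaranteed by Theorem~\ref{thm:adjacent_codewords}. So the real content is to locate that partner explicitly in each case.

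For part~(1), where $\cC_1 = \cC_2 =: \cC_0$, the codeword $(\bldc,0)$ has $(\bldc,1)$ in $\cC$, and $\distance((\bldc,0),(\bldc,1)) = 1$. Since a perfect code of length $2^r-1$ has minimum distance $3$, no other codeword $(\bldc',b')$ with $\bldc' \in \cC_0$ lies within distance $2$ of $(\bldc,0)$: if $b' = 0$ then $\distance((\bldc,0),(\bldc',0)) = \distance(\bldc,\bldc') \ge 3$, and if $b' = 1$ then $\distance((\bldc,0),(\bldc',1)) = \distance(\bldc,\bldc') + 1 \ge 3$ whenever $\bldc' \ne \bldc$. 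Hence $(\bldc,1)$ is the unique partner of $(\bldc,0)$, every partner pair is of \Type{\Tone}, and $\cC$ is of \Type{\TA}.

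For part~(2), where $\cC_1 \cap \cC_2 = \varnothing$: take a codeword $(\bldc,0)$ with $\bldc \in \cC_1$. Its partner cannot be of the form $(\bldc',0)$ with $\bldc' \in \cC_1$ (minimum distance $3$ again), nor can it be $(\bldc,1)$ since $\bldc \notin \cC_2$. So the partner is $(\bldc',1)$ with $\bldc' \in \cC_2$ and $\distance(\bldc,\bldc') = 1$; since $\cC_2$ is perfect, such a $\bldc'$ exists and is unique, giving a \Type{\Ttwo} pair. The same argument applied to codewords of the form $(\bldc,1)$ shows every pair is of \Type{\Ttwo}, so $\cC$ is of \Type{\TB}. (One should double-check that the partner-of-partner relation is consistent, i.e., that the $\bldc'$ obtained from $\bldc \in \cC_1$ returns $\bldc$ when we go back from $\bldc' \in \cC_2$; this follows from uniqueness of the partner.)

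For part~(3), where $\cC_1 \ne \cC_2$ but $\cC_1 \cap \cC_2 \ne \varnothing$: the set $\cC_1 \cap \cC_2$ is nonempty, so for $\bldc \in \cC_1 \cap \cC_2$ the pair $\{(\bldc,0),(\bldc,1)\}$ is a \Type{\Tone} pair exactly as in part~(1). On the other hand, since $\cC_1 \ne \cC_2$, there is some $\bldc \in \cC_1 \setminus \cC_2$; by the argument of part~(2), its partner is $(\bldc',1)$ with $\bldc' \in \cC_2$, $\distance(\bldc,\bldc') = 1$, a \Type{\Ttwo} pair. Thus $\cC$ has pairs of both types and is therefore of \Type{\TC} by the definition in Section~\ref{sec:construct}.

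The main obstacle I anticipate is bookkeeping rather than depth: in parts~(2) and~(3) one must be careful that the proposed partner is genuinely the \emph{unique} codeword of $\cC$ in the radius-$2$ ball, ruling out all spurious candidates (same last bit, different last bit, same or different $\cC_i$-component), and one must confirm that the pairing is symmetric. All of this reduces to the fact that perfect codes here have minimum distance $3$ together with the perfect-covering property (each word of $\F_2^{n-1}$ is within distance $1$ of exactly one codeword), so no genuinely hard step is expected; the care is entirely in enumerating cases cleanly.
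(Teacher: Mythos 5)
Your proposal is correct and follows essentially the same route as the paper: in each case you locate the partner of a codeword $(\bldc,b)$ by combining the minimum distance $3$ of a perfect code (ruling out candidates with the same last bit) with its covering radius $1$ (producing the unique candidate with the opposite last bit), exactly as the paper's proof does. Your extra care about uniqueness and symmetry of the pairing is a harmless elaboration of what the paper leaves implicit via Theorem~\ref{thm:adjacent_codewords}.
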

\begin{proof}
$~$
\begin{itemize}
\item[(1)]
This claim is immediate.

\item[(2)]
Since $\cC_1$ and $\cC_2$ are perfect codes and  $\cC_1 \cap \cC_2 = \varnothing$, it follows that for each
codeword $\bldc_1 \in \cC_1$ there exists a codeword $\bldc_2 \in \cC_2$ such that
$\distance(\bldc_1,\bldc_2)=1$ and therefore $\distance((\bldc_1,0),(\bldc_2,1))=2$.
This implies that $\cC$ is an NP1CC of \Type{\TB}.

\item[(3)]
For each $\bldc \in \cC_1 \cap \cC_2$ we have that $(\bldc,0),(\bldc,1) \in \cC$ and
hence $\distance((\bldc,0),(\bldc,1))=1$. Since $\cC_2$ is a perfect code, it follows that
for each $\bldc_1 \in \cC_1 \setminus \cC_2$,
there exists a codeword $\bldc_2 \in \cC_2 \setminus \cC_1$ such that
$\distance(\bldc_1,\bldc_2)=1$
and hence $\distance((\bldc_1,0),(\bldc_2,1))=2$.
This implies that $\cC$ is an NP1CC of \Type{\TC}.
\end{itemize}
\end{proof}

\begin{corollary}
\label{cor:number_inter}
If $\cC_1$ and $\cC_2$ in Theorem~\ref{thm:Con_general} are distinct zeroed
perfect codes and $\abs{\cC_1 \cap \cC_2} = k$, then the code $\cC$ is
an NP1CC of \Type{\TC} with exactly $k$ \Type{\Tone} pairs.
\end{corollary}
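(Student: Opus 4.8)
The plan is to first verify that $\cC$ is of \Type{\TC} and then to determine, for every codeword, its partner in the sense of Theorem~\ref{thm:adjacent_codewords}, classifying the resulting $M/2$ partner pairs (Corollary~\ref{cor:pair_parts}) as \Type{\Tone} or \Type{\Ttwo}. Since $\cC_1$ and $\cC_2$ are zeroed, $\bldzero \in \cC_1 \cap \cC_2$, so $\cC_1 \cap \cC_2 \ne \varnothing$; combined with the hypothesis $\cC_1 \ne \cC_2$, Corollary~\ref{cor:con_three_types}(3) shows that $\cC$ is an NP1CC of \Type{\TC}. It remains to count the \Type{\Tone} pairs.

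First I would treat the codewords whose first part lies in $\cC_1 \cap \cC_2$. For each $\bldc \in \cC_1 \cap \cC_2$, both $(\bldc,0)$ and $(\bldc,1)$ belong to $\cC$ and $\distance((\bldc,0),(\bldc,1)) = 1 \le 2$; since every codeword has a unique partner (Theorem~\ref{thm:adjacent_codewords}), these two codewords must be partners, so $\{(\bldc,0),(\bldc,1)\}$ is a \Type{\Tone} pair. This accounts for $k$ distinct \Type{\Tone} pairs and for the $2k$ codewords whose first part is in $\cC_1 \cap \cC_2$.

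Next I would show that the remaining $2(2^{n-r-1}-k)$ codewords form only \Type{\Ttwo} pairs, which establishes that there are no further \Type{\Tone} pairs. Fix $\bldc_1 \in \cC_1 \setminus \cC_2$ and examine which codewords of $\cC$ lie in $\Ball_2((\bldc_1,0))$. A codeword $(\bldc,0)$ with $\bldc \in \cC_1$ would require $\distance(\bldc_1,\bldc) \le 2$, forcing $\bldc = \bldc_1$ because these perfect codes have minimum distance~$3$. A codeword $(\bldc,1)$ with $\bldc \in \cC_2$ requires $\distance(\bldc_1,\bldc) = 1$; since $\bldc_1 \notin \cC_2$ and $\cC_2$ is perfect, there is exactly one such $\bldc$, call it $\bldc_2$, and $\bldc_2 \notin \cC_1$ (otherwise $\bldc_1,\bldc_2 \in \cC_1$ would be two codewords of $\cC_1$ at distance~$1$, contradicting minimum distance~$3$). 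Hence the partner of $(\bldc_1,0)$ is $(\bldc_2,1)$, at distance~$2$, a \Type{\Ttwo} pair; the symmetric argument determines the partner of any $(\bldc_2,1)$ with $\bldc_2 \in \cC_2 \setminus \cC_1$. Since $\abs{\cC_1 \setminus \cC_2} = \abs{\cC_2 \setminus \cC_1} = 2^{n-r-1}-k$, these codewords contribute exactly $2^{n-r-1}-k$ pairs, all of \Type{\Ttwo}, so the total is $k + (2^{n-r-1}-k) = M/2$ pairs, of which precisely $k$ are of \Type{\Tone}.

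The argument is essentially bookkeeping, and I do not expect a serious obstacle; the two points that carry the weight are the uniqueness of the partner (Theorem~\ref{thm:adjacent_codewords}), which upgrades each explicitly exhibited codeword at distance $\le 2$ to \emph{the} partner, and the fact that the perfect codes involved have minimum distance~$3$, which is what rules out spurious nearby codewords of the form $(\bldc,0)$ or forces $\bldc_2 \notin \cC_1$.
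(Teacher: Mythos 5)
Your proposal is correct and follows essentially the same route as the paper: the paper gives no separate proof of this corollary, deriving it directly from the argument for Corollary~\ref{cor:con_three_types}(3), which already exhibits the $k$ \Type{\Tone} pairs $\{(\bldc,0),(\bldc,1)\}$ for $\bldc\in\cC_1\cap\cC_2$ and the \Type{\Ttwo} pairs $\{(\bldc_1,0),(\bldc_2,1)\}$ obtained from the perfect covering of $\cC_1\setminus\cC_2$ by $\cC_2$. Your write-up just makes explicit the bookkeeping (uniqueness of partners and the minimum-distance-$3$ exclusion) that the paper leaves implicit.
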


Corollaries~\ref{cor:con_three_types}(3) and~\ref{cor:number_inter}
raise an interesting question associated with $(n,M)$ NP1CCs
of \Type{\TC}. For which integer $k$, $1 \le k < M/2$,
there exists an NP1CC~$\cC$ with exactly $k$ pairs
of \Type{\Tone} and $M/2 - k$ pairs of \Type{\Ttwo}?
Corollary~\ref{cor:number_inter} implies that
such codes can be constructed from two zeroed perfect codes
whose intersection is $k$. It was proved by
Avgustinovich, Heden, and Solov'eva~\cite{AHS06}
that for each even integer $k$ such that
$0 \le k \le 2^{2^r - 2r}$
there exist two zeroed perfect codes of length $2^r-1$
whose intersection is~$k$. The minimum possible nonzero intersection
of two zeroed perfect codes is~$2$ and two such codes were found
in~\cite{EtVa98}. This intersection problem was
initiated in~\cite{EtVa94} and further investigated by
Avgustinovich, Heden, and Solov'eva~\cite{AHS05}.
A summary of the results with complete analysis were given by
Heden, Solov'eva, and Mogilnykh~\cite{HSM10}.

\begin{corollary}
There exist NP1CCs of \Type{\TA}, of \Type{\TB}, and of \Type{\TC}.
\end{corollary}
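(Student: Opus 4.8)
The plan is to verify that each of the three types of NP1CCs is realized by an explicit construction, which amounts to checking that the hypotheses of Corollary~\ref{cor:con_three_types} can actually be met. For \Type{\TA}, I would simply invoke Corollary~\ref{cor:con_three_types}(1): take $\cC_1 = \cC_2$ to be the $(2^r{-}1, 2^{2^r-1-r})$ Hamming code (which exists for every $r \ge 1$ and is perfect), and feed it into Theorem~\ref{thm:Con_general}; the resulting code is an NP1CC of \Type{\TA}. This step is immediate and requires no real work beyond citing the existence of the Hamming code.

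For \Type{\TB}, the plan is to use Corollary~\ref{cor:con_three_types}(2), which requires two perfect codes $\cC_1, \cC_2$ of length $2^r{-}1$ with $\cC_1 \cap \cC_2 = \varnothing$. The natural choice is to take $\cC_1$ to be the (zeroed) Hamming code $\cH$ and $\cC_2 = \blde + \cH$ a non-zeroed coset of~$\cH$, where $\blde \notin \cH$. Since distinct cosets of a linear code partition the space, $\cC_1 \cap \cC_2 = \varnothing$; and $\cC_2$, being a translate of a perfect code, is itself perfect. Plugging these into Theorem~\ref{thm:Con_general} yields a \Type{\TB} NP1CC. (One should note $r \ge 2$ here so that $\cH$ is a proper subspace and a nontrivial coset exists; for $r=1$ the length is~$1$ and the statement is trivial anyway, so this edge case is harmless.)

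For \Type{\TC}, the plan is to use Corollary~\ref{cor:con_three_types}(3), which needs perfect codes $\cC_1 \ne \cC_2$ with $\cC_1 \cap \cC_2 \ne \varnothing$ — equivalently, by Corollary~\ref{cor:number_inter}, two distinct zeroed perfect codes with nonempty intersection. This is where I expect the only genuine subtlety: the Hamming code of a given length is the \emph{unique} linear perfect code, so two distinct zeroed perfect codes with a common codeword cannot both be linear, and one must appeal to the existence of nonlinear perfect codes. The cleanest route is to cite the results already quoted in the paragraph following Corollary~\ref{cor:number_inter}: by Avgustinovich--Heden--Solov'eva~\cite{AHS06} (or the minimum-intersection construction of~\cite{EtVa98}), for $r$ large enough there exist two distinct zeroed perfect codes of length $2^r{-}1$ whose intersection has size $k$ for suitable $k \ge 2$; taking such a pair as $\cC_1, \cC_2$ in Theorem~\ref{thm:Con_general} gives a \Type{\TC} NP1CC. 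Thus the hard part is not any new argument but simply invoking the (nontrivial, but already-cited) fact that perfect codes with prescribed nonzero intersections exist; everything else is a direct application of the constructions established earlier in this section.
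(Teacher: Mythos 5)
Your proposal is correct and follows essentially the same route as the paper, which states this corollary as an immediate consequence of Corollary~\ref{cor:con_three_types} together with the cited existence results on intersections of perfect codes; you have merely made the witnesses explicit (the Hamming code for \Type{\TA}, a nontrivial coset of it for \Type{\TB}, and two distinct zeroed perfect codes — necessarily not both linear — for \Type{\TC}). Your observation that the \Type{\TC} case is the only one requiring a nontrivial existence fact matches the paper's reliance on~\cite{AHS06} and~\cite{EtVa98}.
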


The next theorem provides a full characterization of NP1CCs
of \Type{\TA}.

\begin{theorem}
\label{thm:typeA_twoEx}
A code~$\cC$ is a zeroed $(n{=}2^r,M{=}2^{n-r})$ NP1CC of \Type{\TA},
if and only if it is the union of an extended zeroed perfect code of
length $n = 2^r$
with an odd translate of an extended zeroed perfect code of
the same length.
\end{theorem}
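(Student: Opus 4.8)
The plan is to establish the two directions separately, with the forward direction (a Type~A NP1CC decomposes as claimed) being the substantive one and the converse following from the sufficient conditions already developed in Section~\ref{sec:struct}. Throughout I would normalize by translation so that $\bldzero \in \cC$, which is legitimate since Type membership is preserved under translation (the distance within each partner pair is a translation invariant). For the converse, suppose $\cC = \cD \cup (\blde + \cD')$ where $\cD$ is an extended zeroed perfect code of length $n = 2^r$ and $\blde + \cD'$ is an odd translate of an extended zeroed perfect code. I would verify the hypothesis of Corollary~\ref{cor:sufficient}, i.e., that $\abs{\Ball_2(\bldc) \cap \cC} = 2$ for every $\bldc \in \cC$, and then argue the pair structure is of \Type{\Tone}. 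The key facts are that $\cD$ and $\blde + \cD'$ are each perfect codes of length $2^r$ interpreted with covering radius $1$ after puncturing (Lemma~\ref{lem:puncturePer}), all codewords of $\cD$ have even weight while all codewords of $\blde + \cD'$ have odd weight (so the two pieces are disjoint), and Lemma~\ref{lem:dist_x_ExHammE}(1)--(2) tells me that any even-weight word has a unique $1$-neighbor in $\cD$ and a unique $1$-neighbor in $\blde + \cD'$ (and dually for odd-weight words). From this, each $\bldc \in \cD$ (even weight) has exactly one $1$-neighbor in $\blde + \cD'$ and no codeword of $\cC$ at distance $2$ (distance-$2$ neighbors have the same parity, and within a perfect code the minimum distance is $\ge 3$); hence its unique partner lies at distance $1$, giving a \Type{\Tone} pair. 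So $\cC$ is an NP1CC of \Type{\TA}.

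For the forward direction, let $\cC$ be a zeroed NP1CC of \Type{\TA}. By Corollary~\ref{cor:pair_parts} the codewords split into $M/2$ \Type{\Tone} pairs $\{\bldc,\bldc'\}$ with $\distance(\bldc,\bldc') = 1$. The natural idea is to use the puncturing/extension dictionary: define, for each coordinate $j \in \Int{n}$, the set $\cD^{(j)}$ obtained by taking, from each pair, the codeword whose symbol in coordinate $j$ is $0$ (when the pair differs exactly in coordinate $j$, both choices will ultimately be forced to cohere). More robustly, I would proceed by a counting/covering argument directly. First observe that $\cC$ lies entirely in a single coset of the even-weight code or splits cleanly by parity: since $\bldzero \in \cC$ has even weight and its partner is at distance $1$ (odd weight), and more generally the capsule $\Ball_1(\bldc) \cup \Ball_1(\bldc')$ of a \Type{\Tone} pair consists of all $n+1$ neighbors of $\bldc$ plus all $n+1$ neighbors of $\bldc'$ minus the two shared endpoints $\bldc,\bldc'$, one sees that within each capsule the codeword endpoints have opposite parity. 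Let $\cC_0$ be the even-weight codewords and $\cC_1$ the odd-weight ones; I claim each pair contributes exactly one to each, so $\abs{\cC_0} = \abs{\cC_1} = M/2$. Then I would show $\cC_0$ is a $1$-covering of the even-weight words at distance $\le 1$ (equivalently, an extended perfect code) and $\cC_1$ similarly covers the odd-weight words; combined with the size $M/2 = 2^{n-r-1}$, the parameters force $\cC_0$ to be an extended zeroed perfect code and $\cC_1$ to be the corresponding odd translate. The covering claim is where I would use Theorem~\ref{thm:covered_twice} and the capsule partition: every even-weight word $\bldx$ is covered by some $\bldc \in \cC$; if $\bldx$ is covered by an odd-weight codeword then (since $\bldx$ is even) $\bldx$ is a $1$-neighbor, and I would track through the capsule structure to show $\bldx$ is then covered by exactly one even-weight codeword as well.

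The main obstacle I anticipate is proving the parity split cleanly — specifically, ruling out a ``mixed'' capsule or a pair both of whose codewords have the same parity, and then upgrading ``$\cC_0$ is a $1$-covering of the even words'' to ``$\cC_0$ is an \emph{extended perfect} code'' rather than merely some even-weight $1$-covering code of the right size. For the parity issue: a \Type{\Tone} pair $\{\bldc,\bldc'\}$ with $\distance(\bldc,\bldc')=1$ automatically has endpoints of opposite parity, so the only real content is that every codeword belongs to such a pair, which is Corollary~\ref{cor:pair_parts} — so this sub-obstacle actually dissolves. The real work is the ``extended perfect'' upgrade: I would argue that $\cC_0$, viewed as an even-weight code, covers every even-weight word exactly once within radius $1$ — the ``at least once'' comes from the capsule partition (every even-weight word lies in some capsule, and within a \Type{\Tone} capsule the even-weight non-endpoints are precisely the $1$-neighbors of the odd-weight endpoint and the endpoint $\bldc$ itself, so they are covered by the even endpoint $\bldc$), and the ``exactly once'' comes from Theorem~\ref{thm:covered_twice} together with the observation that an even word covered twice would be a midword or a \Type{\Tone} partner, and a careful parity bookkeeping shows both coverers cannot be even. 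Puncturing any coordinate via Lemma~\ref{lem:puncturePer} then realizes $\cC_0$ as an extension of a length-$(2^r-1)$ code that is $1$-covering and has size $2^{n-r-1} = 2^{(2^r-1)-r}$, i.e., a perfect code; hence $\cC_0$ is an extended zeroed perfect code (zeroed because $\bldzero \in \cC_0$), and $\cC_1$ is forced to be an odd translate of an extended perfect code by the analogous argument plus Lemma~\ref{lem:dist_x_ExHammE}(2). This completes the characterization.
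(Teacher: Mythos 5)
Your converse is correct and routes through Corollary~\ref{cor:sufficient} rather than directly verifying the covering property as the paper does; both work, and yours has the small advantage of exhibiting the \Type{\Tone} pair structure at the same time. The forward direction, however, has the parity roles systematically reversed, and as written several of its key claims are false. An even-weight code can cover an even-weight word only at distance~$0$ and an odd-weight word only at distance~$1$; so $\cC_0$ (the even-weight half) is \emph{not} ``a $1$-covering of the even-weight words'' --- the correct claim is that $\cC_0$ covers every \emph{odd}-weight word exactly once and $\cC_1$ covers every \emph{even}-weight word exactly once (compare Lemma~\ref{lem:dist_x_ExHammE}(1) with~(2)). Likewise, in a \Type{\Tone} capsule with even endpoint $\bldc$ and odd endpoint $\bldc'$, the even-weight non-codewords of the capsule are the $1$-neighbors of $\bldc'$ and are covered by $\bldc'$, not by $\bldc$; and your sentence claiming that an even-weight non-codeword $\bldx$ ``is then covered by exactly one even-weight codeword as well'' asserts something impossible. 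Once all parities are swapped, your covering argument can be completed (the ``exactly once'' part follows because a \Type{\TA} code has no midwords: two codewords covering the same non-codeword would be at distance at most~$2$, hence a \Type{\Ttwo} partner pair), but it remains considerably longer than necessary.

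The paper's forward direction is a short packing argument that you might prefer: in a \Type{\TA} code no two codewords are at distance~$2$ (any such pair would be each other's unique partner by Theorem~\ref{thm:adjacent_codewords}, contradicting \Type{\TA}), so the even-weight subcode $\cC_0$ --- in which all pairwise distances are even --- has minimum distance at least~$4$. Being an even-weight code of length $2^r$ and size $2^{2^r-r-1}$ with minimum distance~$4$, its puncturing meets the sphere-packing bound and is therefore perfect, so $\cC_0$ is an extended zeroed perfect code; symmetrically, $\cC_1$ is a translate of one consisting of odd-weight words (and it contains the weight-$1$ partner of $\bldzero$), i.e., an odd translate. No covering argument is needed for this direction.
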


\begin{proof}
Suppose that~$\cC$ is a zeroed $(n{=}2^r,M{=}2^{n-r})$ NP1CC
of \Type{\TA}. Since its codewords can be partitioned into
\Type{\Tone} pairs, exactly half of the codewords have even weight.
Moreover, since there are no two codewords in~$\cC$
at distance~$2$ apart,
it follows that the sub-code that consists of the even-weight
(respectively, odd-weight) codewords has
minimum distance (at least)~$4$.
Therefore, the even-weight codewords in~$\cC$ form
an extended zeroed perfect code, and the odd-weight codewords
form an odd translate of an extended zeroed perfect code.

Conversely, suppose that $\cC = \cC_1 \cup \cC_2$,
where $\cC_1$ is an extended zeroed perfect code of length $n = 2^r$
and $\cC_2$ is
an odd translate of an extended zeroed perfect code of the same length.
If $\bldx \in \F_2^n$ is of even weight then,
by Lemma~\ref{lem:dist_x_ExHammE}(2),
there exists a codeword $\bldc \in \cC_2$
such that $\distance(\bldx,\bldc)=1$.
If $\bldx \in \F_2^n$ is of odd weight then,
by Lemma~\ref{lem:dist_x_ExHammE}(1),
there exists a codeword $\bldc \in \cC_1$
such that $\distance(\bldx,\bldc)=1$.
Moreover, $\abs{\cC_1 \cup \cC_2} = 2^{n-r}$ and,
hence, $\cC$ is a zeroed NP1CC of \Type{\TA}.
\end{proof}

\begin{corollary}
\label{cor:typeA_from_twoEx}
$~$
\begin{itemize}
\item[(1)]
A non-zeroed translate of a zeroed NP1CC of \Type{\TA}
is constructed as the union of an even translate of an
extended zeroed perfect code of length $2^r$ with an odd translate
of an extended zeroed perfect code of the same length.

\item[(2)]
The union of an even translate of an extended zeroed perfect code
of length $2^r$ with an odd translate of an extended
zeroed perfect code of the same length is a translate of
a zeroed NP1CC of \Type{\TA}.

\item[(3)]
There is a one-to-one correspondence between the pairs of
an extended zeroed perfect code and length~$2^r$ with
an odd translate of an extended zeroed perfect code of
the same length, and the zeroed NP1CCs of \Type{\TA}.

\item[(4)]
There is a one-to-one correspondence between the pairs of
an even translate of an extended zeroed perfect code
and length $2^r$ with an odd translate of an extended zeroed perfect
code of the same length, and the translates of zeroed NP1CCs of
\Type{\TA}.
\end{itemize}
\end{corollary}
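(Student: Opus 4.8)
The plan is to derive all four parts from Theorem~\ref{thm:typeA_twoEx}, which already matches the zeroed NP1CCs of \Type{\TA} with the unions $\cC_1 \cup \cC_2$ in which $\cC_1$ is an extended zeroed perfect code and $\cC_2$ is an odd translate of an extended zeroed perfect code (both of length $n = 2^r$); the only new ingredient is a parity count that tracks how weights move under translation. The two auxiliary facts I would record first are: (i) for a translate $\blde + \cE$ of an extended zeroed perfect code $\cE$, all words have weight congruent to $\weight(\blde)$ modulo $2$ (since every codeword of $\cE$ has even weight), so $\blde + \cE$ is $\cE$ itself when $\blde \in \cE$, an odd translate when $\weight(\blde)$ is odd, and an even translate when $\weight(\blde)$ is even and $\blde \notin \cE$; and (ii) a zeroed translate of an extended zeroed perfect code is again an extended zeroed perfect code, which I would get by deleting the parity coordinate, using that a zeroed translate of a perfect code is a perfect code, and re-appending the parity.

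For part~(1), I would write a given zeroed \Type{\TA} NP1CC as $\cC = \cC_1 \cup \cC_2$ by Theorem~\ref{thm:typeA_twoEx}, take a non-zeroed translate $\blde + \cC$ (so $\blde \notin \cC_1 \cup \cC_2$), and observe that $\blde + \cC = (\blde + \cC_1) \cup (\blde + \cC_2)$ with neither summand zeroed. Because $\cC_1$ has only even weights while $\cC_2$ (an odd translate) has only odd weights, one summand consists of even-weight words and the other of odd-weight words, depending solely on the parity of $\weight(\blde)$; fact~(i) then names one of them an even translate and the other an odd translate, giving the claimed form.

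Part~(2) is the converse, and here I would use fact~(ii): given $\cD = \cD_1 \cup \cD_2$ with $\cD_1$ an even translate and $\cD_2$ an odd translate of extended zeroed perfect codes of length $2^r$, pick any $\blde \in \cD_1$. Then $\blde + \cD_1$ is a zeroed translate of an extended zeroed perfect code, hence an extended zeroed perfect code, while $\blde + \cD_2$ has all weights of parity $\weight(\blde) + 1 \equiv 1 \pmod 2$, hence is a non-zeroed, all-odd translate, i.e., an odd translate. By Theorem~\ref{thm:typeA_twoEx}, $\blde + \cD$ is a zeroed \Type{\TA} NP1CC, so $\cD$ is a translate of one.

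Finally, parts~(3) and~(4) amount to checking that $(\cE_1, \cE_2) \mapsto \cE_1 \cup \cE_2$ is a bijection onto the stated family: well-definedness and surjectivity are the two directions of Theorem~\ref{thm:typeA_twoEx} for part~(3) and are parts~(2) and~(1) respectively for part~(4) (the relevant image in part~(4) being exactly the non-zeroed translates of zeroed \Type{\TA} NP1CCs, since a union of an even translate with an odd translate is never zeroed), while injectivity is immediate because in $\cE_1 \cup \cE_2$ the first component is precisely the set of even-weight codewords and the second the set of odd-weight codewords, so the pair is recovered from the union. I do not anticipate a genuine obstacle; the step that needs the most care is part~(2) --- showing that the union of an even translate with an odd translate is a translate of a \emph{zeroed} \Type{\TA} code rather than merely of some NP1CC --- which is precisely why one first shifts one component back to a bona fide extended zeroed perfect code before invoking Theorem~\ref{thm:typeA_twoEx}.
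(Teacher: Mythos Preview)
Your plan is correct and is exactly what the paper intends: the corollary is stated in the paper without proof, as an immediate consequence of Theorem~\ref{thm:typeA_twoEx}, and your derivation via parity-tracking of the two components is the natural way to make that explicit. One small slip: in your auxiliary fact~(i) you write that $\blde + \cE$ ``is $\cE$ itself'' when $\blde \in \cE$, which is false for nonlinear extended perfect codes; fortunately you never use that clause, relying instead on your fact~(ii) (a zeroed translate of an extended zeroed perfect code is again an extended zeroed perfect code), which is the correct statement and which you justify properly.
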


Finally, other constructions in which an NP1CC of one type is obtained from
an NP1CC of another type will be given in Section~\ref{sec:equiv}.

\section{Weight Distribution of NP1CCs}
\label{sec:weightDist}

In this section, we characterize the weight distribution of NP1CCs.
In particular, we show that zeroed NP1CCs can have one out of
two weight distributions: one distribution is unique
to NP1CCs of \Type{\TA}, and the other is unique to NP1CCs
of \Type{\TB} (zeroed NP1CCs of \Type{\TC} can have any
of these two distributions).

Our analysis will make use of some known properties of
weight distributions, all of which can be found in
Chapters~5 and~6 in~\cite{McSl77}.
For the ease of reference,
we have summarized them in Section~\ref{sec:background}.

\subsection{Definitions and background}
\label{sec:background}

Given an $(n,M)$ code~$\cC$,
the \Emph{weight distribution} of~$\cC$ is the integer vector
$\bldA = \bldA_\cC = (A_i)_{i \in \Int{0:n}}$ with entries
\[
A_i = \abs{\cC \cap \Sphere_i}~.
\]
The respective \Emph{weight enumerator} is
the bivariate homogeneous polynomial
\[
\AA(x,y) = \sum_{i \in \Int{0:n}} A_i x^{n-i} y^i ~,
\]
or the univariate polynomial $\AA(y) \triangleq \AA(1,y)$.
The \Emph{distance distribution} of an $(n,M)$ code~$\cC$ is
the rational vector
$\bldB = \bldB_\cC = (B_i)_{i \in \Int{0:n}}$ whose entries are
\[
B_i =
\frac{1}{M}
\abs{\bigl\{ (\bldc,\bldc') \in \cC \times \cC
~:~ \distance(\bldc,\bldc') = i \bigr\}}~.
\]
Thus,
\begin{equation}
\label{eq:Bsum}
\bldB = \frac{1}{M} \sum_{\blde \in \cC} \bldA_{\blde + \cC}~.
\end{equation}
The respective \Emph{distance enumerator} is
the bivariate homogeneous polynomial
\[
\BB(x,y) = \sum_{i \in \Int{0:n}} B_i x^{n-i} y^i~,
\]
or the univariate polynomial $\BB(y) \triangleq \BB(1,y)$.

A zeroed code~$\cC$ is called \Emph{distance invariant}
if $\bldA_{\blde + \cC} = \bldA_\cC$
for every codeword $\blde \in \cC$.
For such codes we have $\bldB = \bldA$.
All linear codes are distance invariant.

Let $\bldz = (z_j)_{j \in \Int{n}}$
be a vector of real indeterminates and define the ring
\[
\Ring_n = \Realfield[\bldz] /
\langle z_1^2-1, z_2^2 - 1, \ldots, z_n^2 - 1 \rangle~.
\]
Namely, the elements and arithmetic in $\Ring_n$ are obtained from
those in $\Realfield[\bldz]$ by reducing modulo~$2$
the exponents of powers of the indeterminates (and so those powers
can be seen as the elements~$0$ and~$1$ of $\F_2$).
For $\bldu = (u_j)_{j \in \Int{n}} \in \F_2^n$,
we introduce the shorthand notation
\[
\bldz^\bldu = \prod_{j \in \Int{n}} z_j^{u_j}~.
\]

For each $\bldu = (u_j)_{j \in \Int{n}} \in \F_2^n$,
we define the \Emph{character}
$\chi_\bldu : \Ring_n \rightarrow \Realfield$
which maps any
\[
\GG = \GG(\bldz)
= \sum_{\bldv \in \F_2^n} g_\bldv \bldz^\bldv \in \Ring_n
\]
to its value at $\bldz = \left((-1)^{u_j}\right)_{j \in \Int{n}}$:
\[
\chi_\bldu(\GG(\bldz))
= \sum_{\bldv \in \F_2^n}
g_\bldv \cdot (-1)^{\langle \bldu,\bldv \rangle}~,
\]
where $\langle \cdot,\cdot \rangle$ denotes dot product.
Clearly, $\chi_\bldu$ is linear over~$\Realfield$
and multiplicative.

With each $(n,M)$ code~$\cC$ we associate
its \Emph{generating function} in $\Ring_n$:
\[
\Code(\bldz) = \sum_{\bldu \in \cC} \bldz^\bldu~.
\]

Given an $(n,M)$ code~$\cC$,
the \Emph{transform} of the weight distribution $\bldA_\cC$ is
the rational vector
$\bldA' = \bldA'_\cC = (A'_i)_{i \in \Int{0:n}}$ with the entries
\begin{equation}
\label{eq:A'}
A'_i = \frac{1}{M} \sum_{\bldu \in \Sphere_i}
\chi_\bldu(\Code(\bldz))~.
\end{equation}
In particular, $A'_0 \equiv 1$.
The respective enumerator polynomial,
\[
\AA'(x,y) = \sum_{i \in \Int{0:n}} A'_i x^{n-i} y^i~,
\]
is related to $\AA(x,y)$ by \Emph{MacWilliams' identities}:
\begin{equation}
\label{eq:MacWilliams1}
\AA'(x,y) = \frac{1}{M} \cdot \AA(x+y,x-y) \phantom{.}
\end{equation}
and
\begin{equation}
\label{eq:MacWilliams2}
\AA(x,y) = \frac{M}{2^n} \cdot \AA'(x+y,x-y)~.
\end{equation}
When~$\cC$ is linear, the transform $\bldA'$ is
the weight distribution of the dual code, $\cC^\perp$, of~$\cC$.

\begin{example}
\label{ex:linearNP1CCcontinued}
Let $\cC_1$ be the \Type{\TA} linear NP1CC in
Example~\ref{ex:linearNP1CC}.
The dual code $\cC_1^\perp$ is the simplex code padded with
an extra zero coordinate; hence,
\[
\AA'_{\cC_1}(x,y)
= x^n + (n-1) \, x^{n/2} y^{n/2}~.
\]
The weight enumerator of $\cC_1$ is therefore
\begin{eqnarray}
\AA_1(y) \triangleq \AA_{\cC_1}(y)
& = &
\frac{1}{n} (1+y)^n
+ \left( 1 - \frac{1}{n} \right) (1+y)^{n/2} (1-y)^{n/2}
\nonumber \\
& = &
\label{eq:TypeA}
\frac{1}{n} (1+y)^n
+ \left( 1 - \frac{1}{n} \right) (1-y^2)^{n/2} ~.
\end{eqnarray}
Let $\cC_2$ be the \Type{\TB} linear NP1CC in that example.
The dual code $\cC_2^\perp$ is the simplex code padded with a replica of
one of the coordinates. Here
\[
\AA'_{\cC_2}(x,y) =
x^n + \left( \frac{n}{2} - 1 \right) x^{n/2} y^{n/2}
+ \frac{n}{2} \, x^{n/2-1} y^{n/2+1}
\]
and, so,
\begin{eqnarray}
\AA_2(y) \triangleq \AA_{\cC_2}(y)
& = &
\frac{1}{n} (1+y)^n
+
\left( \frac{1}{2} - \frac{1}{n} \right) (1+y)^{n/2} (1-y)^{n/2}
\nonumber \\
\label{eq:TypeB}
&&
\quad \quad {} + \frac{1}{2} (1+y)^{n/2-1} (1-y)^{n/2+1}~.
\end{eqnarray}%
\qed
\end{example}

The transform of the distance distribution~$\bldB$ is
the rational vector $\bldB' = (B'_i)_{i \in \Int{0:n}}$ with the entries
\begin{equation}
\label{eq:B'}
B'_i = \frac{1}{M^2} \sum_{u \in \Sphere_i}
\bigl( \chi_\bldu(\Code(\bldz)) \bigr)^2~.
\end{equation}
The respective enumerator polynomial,
\[
\BB'(x,y) = \sum_{i \in \Int{0:n}} B'_i x^{n-i} y^i~,
\]
is related to $\BB(x,y)$ by
MacWilliams'
identities~(\ref{eq:MacWilliams1})--(\ref{eq:MacWilliams2}),
with $\AA(x,y)$ and $\AA'(x,y)$ therein replaced by
$\BB(x,y)$ and $\BB'(x,y)$.
When a zeroed code~$\cC$ is distance invariant
we have $\bldB' = \bldA'$.

By~(\ref{eq:B'}) it follows that
\begin{equation}
\label{eq:B'support}
B'_i = 0 \;\Longleftrightarrow\;
\chi_\bldu(\Code(\bldz)) = 0
\; \textrm{for all} \; \bldu \in \Sphere_i~.
\end{equation}
Hence, by~(\ref{eq:A'}),
\begin{equation}
\label{eq:A'support}
\Support(\bldA') \subseteq \Support(\bldB')~.
\end{equation}

The \Emph{external distance} of~$\cC$ is defined by
\[
s' = \abs{\Support(\bldB') \setminus \{ 0 \}}
= \weight(\bldB') - 1~.
\]

\begin{theorem}[{\cite[Ch.~6, Thm.~20]{McSl77}}]
\label{thm:uniqueness}
Let~$\cC$ be an $(n,M)$ code with external distance~$s'$.
Then for any $\blde \in \F_2^n$,
the entries of $\bldA_{\blde + \cC}$ are uniquely determined by
$n$, $M$, $\Support(\bldB')$,
and the first $s'$ entries of $\bldA_{\blde + \cC}$.
\end{theorem}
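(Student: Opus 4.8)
The plan is to pass to the transform domain, where translation acts only through sign changes, and then to invert the MacWilliams relation on a support of size $s'+1$. First I would record how translation affects the generating function: in $\Ring_n$ the generating function of $\blde+\cC$ equals $\bldz^\blde \cdot \Code(\bldz)$, so by multiplicativity of the characters, $\chi_\bldu(\bldz^\blde\Code(\bldz)) = (-1)^{\langle\bldu,\blde\rangle}\,\chi_\bldu(\Code(\bldz))$ for every $\bldu$. Substituting this into the definition~(\ref{eq:A'}) of the transform gives, for each $\blde \in \F_2^n$,
\[
A'_{i,\,\blde+\cC} \;=\; \frac{1}{M}\sum_{\bldu\in\Sphere_i}(-1)^{\langle\bldu,\blde\rangle}\,\chi_\bldu(\Code(\bldz))~.
\]
In particular, if $B'_i=0$ then, by~(\ref{eq:B'support}), every summand vanishes, so $A'_{i,\,\blde+\cC}=0$; that is, $\Support(\bldA'_{\blde+\cC})\subseteq\Support(\bldB')$ for every translate. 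Since $A'_{0,\,\blde+\cC}\equiv 1$, the vector $\bldA'_{\blde+\cC}$ is thus pinned down by its $s'$ entries indexed by $J \triangleq \Support(\bldB')\setminus\{0\}$, a set determined by the data $\Support(\bldB')$.

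Next I would invert the MacWilliams identity. Writing $K_i=K_i^{(n)}$ for the binary Krawtchouk polynomial, which is a polynomial of degree exactly $i$ in its argument, the identity~(\ref{eq:MacWilliams2}) — valid for an arbitrary code, hence for $\blde+\cC$ — reads $A_{i,\,\blde+\cC} = \tfrac{M}{2^n}\sum_{j\in\Int{0:n}}A'_{j,\,\blde+\cC}\,K_i(j)$. Using the support restriction, the cases $i=0,1,\dots,s'-1$ form a linear system of $s'$ equations
\[
A_{i,\,\blde+\cC} - \frac{M}{2^n}\,K_i(0)
\;=\; \frac{M}{2^n}\sum_{j\in J}A'_{j,\,\blde+\cC}\,K_i(j)~,
\qquad i=0,1,\dots,s'-1~,
\]
in the $s'$ unknowns $\{A'_{j,\,\blde+\cC}\}_{j\in J}$, whose left-hand sides are known from $n$, $M$, and the first $s'$ entries of $\bldA_{\blde+\cC}$. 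The coefficient matrix is $[K_i(j)]_{0\le i\le s'-1,\ j\in J}$; because $K_0,K_1,\dots,K_{s'-1}$ span the same space of polynomials as $1,x,\dots,x^{s'-1}$, an invertible triangular change of basis turns this matrix into the Vandermonde matrix in the distinct nodes of $J$, which is nonsingular. Hence the system has a unique solution, so $\bldA'_{\blde+\cC}$ is fully determined, and applying~(\ref{eq:MacWilliams2}) once more recovers every entry of $\bldA_{\blde+\cC}$.

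The main point to get right is the nonsingularity of the Krawtchouk coefficient matrix, i.e., that $K_0,\dots,K_{s'-1}$ remain linearly independent as functions on the $s'$-point set $J$; this is exactly where the degree structure of the Krawtchouk polynomials meets the distinctness of the elements of $J$. A secondary point is to invoke the MacWilliams identities~(\ref{eq:MacWilliams1})--(\ref{eq:MacWilliams2}) in the form valid for arbitrary (nonlinear) codes and their translates — which is precisely how they were set up through the character transform in Section~\ref{sec:background} — since no linearity of $\cC$ is used anywhere in the argument.
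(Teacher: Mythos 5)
Your proof is correct. The paper gives no proof of this statement---it is quoted verbatim from MacWilliams--Sloane \cite[Ch.~6, Thm.~20]{McSl77}---so there is no in-paper argument to compare against; your argument is the standard one and is complete: the observation that $\chi_\bldu(\bldz^\blde\Code(\bldz))=(-1)^{\langle\bldu,\blde\rangle}\chi_\bldu(\Code(\bldz))$ gives $\Support(\bldA'_{\blde+\cC})\subseteq\Support(\bldB')$ for every translate, and the $s'\times s'$ Krawtchouk system on the nodes $J=\Support(\bldB')\setminus\{0\}$ is nonsingular because $P_i$ has degree exactly $i$ (leading coefficient $(-2)^i/i!$), reducing it to a Vandermonde matrix on distinct nodes. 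You also correctly flag the two points that need care (nonsingularity of the coefficient matrix, and validity of the MacWilliams identities for arbitrary nonlinear translates via the character transform), and both are handled.
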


It follows from (the proof of) this theorem that
a code is distance invariant whenever
its external distance does not exceed its minimum distance.
Moreover, the external distance bounds from above
the covering radius of the code.

\subsection{Characterization of the weight distribution of NP1CCs}
\label{sec:characterization}

Our next theorem will be the main tool for characterizing
the weight distribution of NP1CCs. Our proof
will use the following notation.
For $i \in \Int{0:n}$, we let $\YY_i(\bldz)$ be
the $i$th
\Emph{elementary symmetric function} in the entries of~$\bldz$:
\[
\YY_i(\bldz) = \sum_{\bldu \in \Sphere_i}
\bldz^\bldu~.
\]
It is known (see~\cite[p.~135]{McSl77})
that for any $\bldu \in \Sphere_w$,
\begin{equation}
\label{eq:chiY}
\chi_\bldu(\YY_i(\bldz)) = P_i(w)~,
\end{equation}
where $P_i(\cdot)$ is the $i$th \Emph{Krawtchouk polynomial}:
\[
P_i(w)
\triangleq
\sum_{j \in \Int{0:i}} (-1)^j \binom{w}{j} \binom{n-w}{i-j}~.
\]

\begin{theorem}
\label{thm:externaldistance}
Let~$\cC$ be an $(n,M)$~NP1CC and let~$\bldB'$ be
the transform of its distance distribution.
Then
\[
\Support(\bldB') \subseteq \{ 0, n/2, n/2 + 1 \}~,
\]
i.e., $s' \le 2$.
\end{theorem}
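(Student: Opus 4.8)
The plan is to show that $\chi_\bldu(\Code(\bldz)) = 0$ for every $\bldu \in \Sphere_i$ with $i \notin \{0, n/2, n/2+1\}$; by~\eqref{eq:B'support}, this is exactly what is needed. The natural route is to exploit the structural facts already established: Theorem~\ref{thm:radius2} and Theorem~\ref{thm:adjacent_codewords} together say that $\abs{\Ball_2(\bldx) \cap \cC}$ takes the value $n/2+1$ for every non-codeword and the value $2$ for every codeword. I want to convert this into a statement about the generating function $\Code(\bldz)$, and the device for doing so is to introduce the ``ball-indicator'' polynomial
\[
\GG(\bldz) \triangleq \sum_{i \in \Int{0:2}} \YY_i(\bldz)
= 1 + \YY_1(\bldz) + \YY_2(\bldz) \in \Ring_n~,
\]
whose coefficient vector is the indicator of $\Ball_2(\bldzero)$. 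Then the coefficient of $\bldz^\bldx$ in the product $\Code(\bldz) \cdot \GG(\bldz)$ (computed in $\Ring_n$, so exponents reduce mod~$2$) counts the number of codewords $\bldc$ with $\bldc + \bldx \in \Ball_2(\bldzero)$, i.e., $\abs{\Ball_2(\bldx) \cap \cC}$. Hence
\[
\Code(\bldz) \cdot \GG(\bldz)
= \parenv{\tfrac{n}{2}+1} \sum_{\bldx \in \F_2^n} \bldz^\bldx
\;-\; \parenv{\tfrac{n}{2}-1} \Code(\bldz)~,
\]
where the first term accounts for every word contributing $n/2+1$ and the correction $-(n/2-1)\Code(\bldz)$ fixes up the $M$ codewords, which contribute $2$ rather than $n/2+1$.

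**Next** I would apply the character $\chi_\bldu$ for a fixed $\bldu \in \Sphere_i$ with $i \neq 0$. Since $\chi_\bldu$ is multiplicative and linear, and since $\sum_{\bldx \in \F_2^n}\bldz^\bldx = \prod_{j}(1+z_j)$ vanishes under $\chi_\bldu$ whenever $\bldu \neq \bldzero$, we get
\[
\chi_\bldu(\Code(\bldz)) \cdot \chi_\bldu(\GG(\bldz))
= -\parenv{\tfrac{n}{2}-1}\,\chi_\bldu(\Code(\bldz))~,
\]
that is, $\chi_\bldu(\Code(\bldz))\parenv{\chi_\bldu(\GG(\bldz)) + \tfrac{n}{2}-1} = 0$. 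By~\eqref{eq:chiY}, for $\bldu \in \Sphere_i$ we have $\chi_\bldu(\GG(\bldz)) = P_0(i) + P_1(i) + P_2(i)$, which is a known explicit polynomial in $i$ (degree~$2$, with $P_0(i)=1$, $P_1(i)=n-2i$, $P_2(i)=\binom{n}{2}-2i(n-i)$ — I would record these from the Krawtchouk definition). So either $\chi_\bldu(\Code(\bldz)) = 0$, or
\[
P_0(i)+P_1(i)+P_2(i) = -\parenv{\tfrac{n}{2}-1}~.
\]
The routine (but essential) calculation is to solve this quadratic equation in $i$ and check that its only roots in $\Int{0:n}$ are $i = n/2$ and $i = n/2+1$. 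Assuming that pans out, for every other $i \ge 1$ we are forced to have $\chi_\bldu(\Code(\bldz)) = 0$ for all $\bldu \in \Sphere_i$, which by~\eqref{eq:B'support} gives $B'_i = 0$, hence $\Support(\bldB') \subseteq \{0, n/2, n/2+1\}$ and $s' = \abs{\Support(\bldB')\setminus\{0\}} \le 2$.

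**The main obstacle** I anticipate is not conceptual but bookkeeping: getting the constant $n/2+1$ versus $n/2-1$ and the sign of the correction term exactly right in the $\Ring_n$ identity, and then verifying that the quadratic $P_0(i)+P_1(i)+P_2(i)+(n/2-1) = 0$ factors as (a constant times) $(i - n/2)(i - n/2 - 1)$. Since $P_1(n/2) = 0$ and $P_1(n/2+1) = -2$, and one can check $P_2(n/2) = \binom{n}{2} - n^2/2 = -n/2$ while $P_2(n/2+1) = \binom{n}{2} - 2(n/2+1)(n/2-1) = \binom{n}{2} - (n^2/2 - 2) = -n/2 + 2$, both candidate roots do satisfy the equation, and since the quadratic in $i$ has leading coefficient $+2$ (from $P_2$, whose $i^2$-coefficient is $+2$), it has exactly these two roots. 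That closes the argument; the only thing to be careful about is that $n = 2^r$ is even throughout, so $n/2$ and $n/2+1$ are genuine integers in $\Int{0:n}$ for $r \ge 1$.
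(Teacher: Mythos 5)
Your proposal is correct and follows essentially the same route as the paper: both convert the constancy of $\abs{\Ball_2(\bldx)\cap\cC}$ (from Theorems~\ref{thm:radius2} and~\ref{thm:adjacent_codewords}) into an identity in $\Ring_n$, apply the characters $\chi_\bldu$, and reduce to the vanishing of the quadratic $\frac{n}{2}+P_1(w)+P_2(w)=2(w-\frac{n}{2})(w-\frac{n}{2}-1)$. The only (cosmetic) difference is that you use the full ball indicator $1+\YY_1+\YY_2$ with a correction term $-(\frac{n}{2}-1)\Code(\bldz)$, whereas the paper uses $\YY_1+\YY_2$ and adds $\frac{n}{2}\Code(\bldz)$; the two identities are algebraically identical.
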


\begin{proof}
Let $\Code(\bldz)$ be the generating function of $\cC$
and consider the following multinomial (in $\Ring_n$):
\[
\Code(\bldz)
\cdot \sum_{\bldu \in \Sphere_1 \cup \Sphere_2} \bldz^\bldu
= \Code(\bldz) \left( \YY_1(\bldz) + \YY_2(\bldz) \right) ~.
\]
For any word $\bldx \in \F_2^n$,
the coefficient of $\bldz^\bldx$ in this multinomial
equals the number of codewords at distance~$1$ or~$2$ from~$\bldx$.
By Theorems~\ref{thm:radius2} and~\ref{thm:adjacent_codewords},
this number is
\[
\renewcommand{\arraystretch}{1.3}
\left\{
\begin{array}{ccl}
\displaystyle\frac{n}{2} + 1
&& \textrm{if $\bldx$ is a non-codeword,} \\
1
&& \textrm{if $\bldx$ is a codeword.}
\end{array}
\right.
\]
Hence,
\begin{eqnarray*}
\Code(\bldz) \left( \frac{n}{2} + \YY_1(\bldz) + \YY_2(\bldz) \right)
& =  &
\left( \frac{n}{2} + 1 \right) \sum_{\bldu \in \F_2^n} \bldz^\bldu \\
& = &
\left( \frac{n}{2} + 1 \right) \prod_{j \in \Int{n}} (1 + z_j)^n
\end{eqnarray*}
and, so, for every $\bldu \in \F_2^n \setminus \{ \bldzero \}$,
\[
\chi_\bldu
\left( \Code(\bldz)
\left( \frac{n}{2} + \YY_1(\bldz) + \YY_2(\bldz) \right)
\right) = 0~.
\]
By~(\ref{eq:chiY}) and the multiplicativity of~$\chi_\bldu(\cdot)$
we get
\begin{equation}
\label{eq:null}
\chi_\bldu(\Code(\bldz)) \cdot \beta(\weight(\bldu)) = 0~,
\end{equation}
where $\beta(\cdot)$ is the following polynomial:
\begin{eqnarray*}
\beta(w)
& = &
\frac{n}{2} + P_1(w) + P_2(w) \\
& = &
\frac{n}{2} + (n - 2 w) + \left( \binom{n}{2} - 2 n w + 2 w^2 \right) \\
& = &
2 \left( w - \frac{n}{2} \right) \left( w - \frac{n}{2} - 1 \right)~.
\end{eqnarray*}

Let~$w$ be a nonzero element in $\Support(\bldB')$,
namely, $B_w' \ne 0$.
By~(\ref{eq:B'support}), there exists at least one word
$\bldu \in \Sphere_w$ such that
$\chi_\bldu(\Code(\bldz)) \ne 0$. Hence, by~(\ref{eq:null}),
\[
\beta(w) = 0
\]
(see Lemma~19 in~\cite[Ch.~6]{McSl77}),
i.e., $w \in \{ n/2, n/2 + 1 \}$.
\end{proof}

Let $\cC$ be an $(n,M)$~NP1CC which, without any loss of generality,
we assume to be zeroed, and let $\blde + \cC$ be any of its translates.
By Theorem~\ref{thm:externaldistance} we have $s' \le 2$
and, so, by Theorem~\ref{thm:uniqueness},
the weight distribution, $\bldA = (A_i)_{i \in \Int{0:n}}$,
of $\blde + \cC$ is uniquely determined by its first two entries,
namely, by the pair $(A_0 \; A_1)$.
And by Corollary~\ref{cor:cover_nonC} and Theorems~\ref{thm:radius2},
this pair can take (only) four values,
as shown in the first three column in Table~\ref{tab:cases}.
\begin{table}[hbt]
\caption{Parameters of the four possible weight distributions
of NP1CCs.}
\label{tab:cases}
\normalsize
\[
\begin{array}{c@{\quad\quad}cc@{\quad\quad}cc@{\quad\quad}c}
\hline\hline
\textrm{Case} & A_0 & A_1 & A'_{n/2} & A'_{n/2+1} & \textrm{Types} \\
\hline
\blde \in \cC \; \textrm{and} \; \abs{\Ball_1(\blde) \cap \cC} = 2
& 1 & 1 & n-1 & 0 & \TA, \TC \\
\blde \in \cC \; \textrm{and} \; \abs{\Ball_1(\blde) \cap \cC} = 1
& 1 & 0 & n/2 - 1 & n/2 & \TB, \TC \\
\blde \not\in \cC \; \textrm{and} \; \abs{\Ball_1(\blde) \cap \cC} = 2
& 0 & 2 & n/2 - 1 & -n/2  & \TB, \TC \\
\blde \not\in \cC \; \textrm{and} \; \abs{\Ball_1(\blde) \cap \cC} = 1
& 0 & 1 & - 1 & 0 & \TA, \TB, \TC \\
\hline\hline
\end{array}
\]
\end{table}
In what follows, we compute the explicit dependence of
the weight enumerator $\AA(y)$ (and, hence, of
the weight distribution~$\bldA$) on $(A_0 \; A_1)$.
We do this by first determining the transform $\AA'(x,y)$
using the first set of MacWilliams' identities~(\ref{eq:MacWilliams1});
then, we use the second set~(\ref{eq:MacWilliams2})
to obtain the complete weight enumerator $\AA(x,y)$.

Substituting $(x,y) = (1,1)$ in both sides
of~(\ref{eq:MacWilliams1})
and recalling that $A'_0 \equiv 1$ and (from~(\ref{eq:A'support})
and Theorem~\ref{thm:externaldistance})
that $\Support(\bldA') \subseteq \Support(\bldB')
\subseteq \{ 0, n/2, n/2 + 1 \}$, we get
\[
1 + A'_{n/2} + A'_{n/2 + 1} = n~.
\]
Next, differentiating both sides of~(\ref{eq:MacWilliams1})
with respect to~$y$ and doing the same substitution yields
\[
\frac{n}{2} \, A'_{n/2}
+ \left( \frac{n}{2} + 1 \right) A'_{n/2 + 1} =
\frac{n}{2} (n A_0 - A_1)~.
\]
Solving the last two equations for $A'_{n/2}$
and $A'_{n/2+1}$ in terms of $(A_0 \; A_1)$ results in:
\begin{equation}
\label{eq:solution}
\renewcommand{\arraystretch}{2}
\begin{array}{lcl}
A'_{n/2}   & = & \displaystyle n A_0 - \frac{n}{2} (1 - A_1) - 1 \\
A'_{n/2+1} & = & \displaystyle \frac{n}{2} (1 - A_1)~.
\end{array}
\end{equation}
The fourth and fifth columns in Table~\ref{tab:cases}
present the solutions for $A'_{n/2}$ and $A'_{n/2+1}$
(and, thus, the complete characterization of the transform $\AA'(x,y)$)
for each of the four cases in the table.
Knowing now all the nonzero coefficients in $\AA'(x,y)$,
we get from~(\ref{eq:MacWilliams2}) the complete weight enumerator
$\AA(y)$, in terms of $(A_0 \; A_1)$:
\begin{eqnarray*}
\AA(y)
& = &
\frac{1}{n} (1+y)^n
+ \left( A_0 - \frac{1 - A_1}{2} - \frac{1}{n} \right)
(1+y)^{n/2} (1-y)^{n/2} \\
&&
\quad \quad {} + \frac{1-A_1}{2} \cdot (1+y)^{n/2-1} (1-y)^{n/2+1}~.
\end{eqnarray*}
Rearranging terms leads to the following result.

\begin{theorem}
\label{thm:weightenumerator}
Let $\cC$ be a zeroed $(n,M)$~NP1CC and
let $\blde$ be a word in $\F_2^n$.
Then the weight enumerator of $\blde + \cC$ is given by
\begin{equation}
\label{eq:weightenumerator}
\AA(y) = \frac{1}{n} (1+y)^n
+ \left( A_0 - \frac{1}{n}
+ \Bigl( A_0 + A_1 - 1 - \frac{1}{n} \Bigr) y \right)
(1-y) (1-y^2)^{n/2-1}~,
\end{equation}
where $(A_0 \; A_1)$ is determined from $\cC$ and $\blde$
according to Table~\ref{tab:cases}.
\end{theorem}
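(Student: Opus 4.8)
The plan is to make the outline sketched in the paragraphs preceding the statement rigorous, so that the only substantive inputs are the structural results of Section~\ref{sec:struct} and the MacWilliams identities. First I would apply Theorem~\ref{thm:externaldistance} to get that the external distance of $\cC$ satisfies $s'\le 2$, with $\Support(\bldB')\subseteq\{0,n/2,n/2+1\}$; combined with~(\ref{eq:A'support}) this also forces $\Support(\bldA')\subseteq\{0,n/2,n/2+1\}$. Theorem~\ref{thm:uniqueness} then says that for \emph{any} $\blde\in\F_2^n$ the weight distribution $\bldA=\bldA_{\blde+\cC}$ is completely determined by $n$, $M$, $\Support(\bldB')$, and the pair $(A_0\;A_1)$. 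The next step is to enumerate the admissible $(A_0\;A_1)$: we have $A_0=1$ exactly when $\blde\in\cC$, while $\abs{\Ball_1(\blde)\cap\cC}=A_0+A_1$, which by Corollary~\ref{cor:cover_nonC} and Theorems~\ref{thm:radius2} and~\ref{thm:adjacent_codewords} lies in $\{1,2\}$. Splitting according to $\blde\in\cC$ or not, and according to whether the relevant neighbour is a midword or a member of a \Type{\Tone} pair, yields exactly the four rows of Table~\ref{tab:cases}.

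With $(A_0\;A_1)$ fixed, I would first determine the transform $\AA'(x,y)$, which by the above has at most the three nonzero coefficients $A'_0=1$, $A'_{n/2}$, $A'_{n/2+1}$; two scalar conditions therefore suffice. Evaluating the first MacWilliams identity~(\ref{eq:MacWilliams1}) at $(x,y)=(1,1)$, and then differentiating it with respect to~$y$ and again setting $(x,y)=(1,1)$, produces a linear system for $A'_{n/2}$ and $A'_{n/2+1}$ whose coefficient matrix $\left(\begin{smallmatrix}1&1\\ n/2& n/2+1\end{smallmatrix}\right)$ has determinant~$1$; solving it gives the closed forms~(\ref{eq:solution}), and specializing to the four cases fills in the fourth and fifth columns of Table~\ref{tab:cases}.

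Finally I would invert back to $\AA(y)$ using the second MacWilliams identity~(\ref{eq:MacWilliams2}): substituting the values of $A'_{n/2}$ and $A'_{n/2+1}$ just found expresses $\AA(y)$ as $\tfrac1n(1+y)^n$ plus a linear combination of $(1+y)^{n/2}(1-y)^{n/2}$ and $(1+y)^{n/2-1}(1-y)^{n/2+1}$ with coefficients affine in $(A_0\;A_1)$. Writing $(1+y)^{n/2}(1-y)^{n/2}=(1-y^2)(1-y^2)^{n/2-1}$ and $(1+y)^{n/2-1}(1-y)^{n/2+1}=(1-y)^2(1-y^2)^{n/2-1}$, one factors out $(1-y)(1-y^2)^{n/2-1}$, and the residual linear factor in $y$ collapses to $A_0-\tfrac1n+(A_0+A_1-1-\tfrac1n)y$, which is precisely~(\ref{eq:weightenumerator}).

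None of these steps is deep on its own: the real content was already spent on Theorems~\ref{thm:radius2}, \ref{thm:adjacent_codewords}, and~\ref{thm:externaldistance}, which are what make $\Support(\bldB')$ so small and $(A_0\;A_1)$ take only four values; what remains is the standard ``recover the transform from a few moments, then apply MacWilliams'' recipe together with the final algebraic rearrangement. Accordingly, the step I would flag as deserving the most care is the case analysis underlying Table~\ref{tab:cases} — verifying that the four listed possibilities are exhaustive and that the midword versus \Type{\Tone}-pair dichotomy is accounted for correctly — since everything downstream is forced once that table is in hand, and the only place an arithmetic slip could enter is the bookkeeping in the last paragraph.
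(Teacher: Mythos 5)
Your proposal follows the paper's own derivation essentially verbatim: Theorem~\ref{thm:externaldistance} plus Theorem~\ref{thm:uniqueness} reduce everything to the pair $(A_0\;A_1)$, the four rows of Table~\ref{tab:cases} come from Corollary~\ref{cor:cover_nonC} and Theorems~\ref{thm:radius2} and~\ref{thm:adjacent_codewords}, and the transform is recovered from the two moment equations and inverted via~(\ref{eq:MacWilliams2}), with the same final factorization of $(1-y)(1-y^2)^{n/2-1}$. The argument is correct and coincides with the one given in the text preceding the theorem.
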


We next present an explicit expression for
the entries of the weight distribution
$\bldA = (A_i)_{i \in \Int{0:n}}$. For $i \in \Int{0:n}$, let
\[
\Delta_i \triangleq
(-1)^{\lceil i/2 \rceil}
\binom{n/2-1}{\lfloor i/2 \rfloor}
\]
(where the binomial coefficient is assumed to be zero for
invalid parameters); it can be verified that
\[
(1-y) (1-y^2)^{n/2-1} = \sum_{i \in \Int{0:n}} \Delta_i y^i ~.
\]
By~(\ref{eq:weightenumerator}) it then follows that for
every $i \in \Int{0:n}$,
\[
A_i =
\frac{1}{n} \binom{n}{i}
+ \Bigl( A_0 - \frac{1}{n} \Bigr) \Delta_i
+ \Bigl( A_0 + A_1 - 1 - \frac{1}{n} \Bigr) \Delta_{i-1} ~.
\]

When $(A_0 \; A_1) = (1 \; 1)$,
Eq.~(\ref{eq:weightenumerator}) becomes
$\AA_1(y)$ in~(\ref{eq:TypeA}).
Note that this case can occur only when $\cC$ is
either of \Type{\TA} or of \Type{\TC}
(see the last column in Table~\ref{tab:cases}).
Moreover, if $\cC$ is of \Type{\TA}, then
$\AA_1(y)$ is the weight enumerator of $\blde + \cC$
for \Emph{every} codeword $\blde \in \cC$.
Hence, \Type{\TA} codes are distance invariant:
in their case $\bldB = \bldA$ and $\bldB' = \bldA'$
and, consequently, their external distance is~$1$
(which is also their minimum distance).

When $(A_0 \; A_1) = (1 \; 0)$,
Eq.~(\ref{eq:weightenumerator}) becomes
$\AA_2(y)$ in~(\ref{eq:TypeB}).
This case can occur only when $\cC$ is
either of \Type{\TB} or of \Type{\TC}.
By a similar reasoning as before we conclude that
\Type{\TB} codes are distance invariant as well
and their external distance, as well as their minimum distance, is~$2$
(except when $n = 2$, where the external distance is~$1$).

The case $(A_0 \; A_1) = (0 \; 2)$ also pertains
to \Type{\TB} and \Type{\TC} codes,
as it occurs when $\blde$ is a midword.
Eq.~(\ref{eq:weightenumerator}) is then similar to~(\ref{eq:TypeB})
except that the sign of the last term in~(\ref{eq:TypeB}) is flipped.

Finally, the case $(A_0 \; A_1) = (0 \; 1)$
corresponds to $\blde$
being a non-codeword that is not a midword.
This case can occur in all types,
and the weight enumerator is
\[
\frac{1}{n} \left( (1+y)^n - (1-y^2)^{n/2} \right)~.
\]

\Type{\TC} codes cannot be distance invariant, since
a fraction $B_1 \in (0,1)$ of the codewords
have $1$-neighbors while the other codewords do not.
Still, by~(\ref{eq:Bsum}), we get a complete characterization
of their distance enumerator:
\[
\BB(y) = B_1 \cdot \AA_1(y) + (1-B_1) \cdot \AA_2(y)~.
\]

\begin{corollary}
\label{cor:evenweight}
Let $\cC$ be an $(n,M)$ N1PCC where $n > 2$.
Then exactly half of the codewords in $\cC$ have even weight.
\end{corollary}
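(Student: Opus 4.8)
The plan is to obtain the result as an immediate evaluation of the weight enumerator formula in Theorem~\ref{thm:weightenumerator}. First I would reduce to the hypotheses of that theorem. Given an arbitrary $(n,M)$~NP1CC~$\cC$ with $n>2$, fix any codeword $\bldc_0\in\cC$ and put $\cC_0\triangleq\bldc_0+\cC$. Then $\cC_0$ is \emph{zeroed} (it contains $\bldc_0+\bldc_0=\bldzero$), it is again an NP1CC since size and covering radius are preserved under translation, and $\cC=\bldc_0+\cC_0$. Hence Theorem~\ref{thm:weightenumerator} applies to $\cC=\blde+\cC_0$ with $\blde=\bldc_0$, and it yields, for a suitable pair $(A_0\;A_1)$ read off from Table~\ref{tab:cases},
\[
\AA_{\cC}(y)=\frac{1}{n}\,(1+y)^n+q(y)\,(1-y)(1-y^2)^{n/2-1}~,
\]
where $q(y)$ is a polynomial of degree at most~$1$. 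The point is that this structural form is the same in all four rows of the table, so no case analysis is needed.

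Next I would evaluate at $y=-1$. The difference between the number of even-weight and the number of odd-weight codewords of~$\cC$ is $\sum_{i\in\Int{0:n}}A_i(-1)^i=\AA_{\cC}(-1)$. Since $n=2^r>2$, we have $n/2-1\ge1$, and therefore both summands above vanish at $y=-1$: the first because $(1+y)^n$ does, and the second because $(1-y^2)^{n/2-1}$ does. Hence $\AA_{\cC}(-1)=0$, i.e., $\cC$ has as many even-weight as odd-weight codewords. Combining this with $\AA_{\cC}(1)=\abs{\cC}=M$ gives that each of the two counts equals $M/2$ (consistently, $M=2^{2^r-r}$ is even for $r\ge2$), which is the assertion.

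There is essentially no genuine obstacle here; the only two points to be careful about are (i) that Theorem~\ref{thm:weightenumerator} is already formulated for an arbitrary translate of a zeroed NP1CC, so the reduction in the first paragraph legitimately removes any ``zeroed'' assumption on~$\cC$, and (ii) that the hypothesis $n>2$ is truly needed — for $n=2$ one has $(1-y^2)^{n/2-1}=1$, which does not vanish at $y=-1$, and indeed the two-element code $\{\bldzero,\bldone\}$ of length~$2$ is an NP1CC in which both codewords have even weight.
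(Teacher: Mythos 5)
Your proof is correct and is essentially the paper's own argument: the paper likewise evaluates the weight enumerator of Theorem~\ref{thm:weightenumerator} at $y=-1$ and observes that both terms in~(\ref{eq:weightenumerator}) vanish there. The only additions you make — the explicit reduction of an arbitrary NP1CC to a translate of a zeroed one, and the remark on why $n>2$ is needed — are minor clarifications of steps the paper leaves implicit.
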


\begin{proof}
It follows from~(\ref{eq:weightenumerator}) that
\[
\sum_{i \textrm{ even}} A_i - \sum_{i \textrm{ odd}} A_i
= \AA(-1) = 0 ~.
\]
\end{proof}

\begin{corollary}
\label{cor:evenpairs}
Let $\cC$ be an $(n,M)$ N1PCC where $n > 2$.
Then the number, $k$, of \Type{\Tone} pairs in $\cC$ is even
(and so is the number, $M/2 - k$, of \Type{\Ttwo} pairs).
Moreover, exactly half of the \Type{\Ttwo} pairs consist of
even-weight partners.
\end{corollary}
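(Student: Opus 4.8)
The plan is to reduce everything to a single parity count of the even-weight codewords, leaning on Corollary~\ref{cor:evenweight}. First I would invoke Corollary~\ref{cor:pair_parts} to write $\cC$ as a disjoint union of $M/2$ partner pairs, each of which is of \Type{\Tone} or of \Type{\Ttwo}. The key structural observation is about parities: in a \Type{\Tone} pair the partners are at distance~$1$, so their weights have opposite parity and exactly one of the two is even; in a \Type{\Ttwo} pair the partners are at distance~$2$, so their weights have the \emph{same} parity. I would then introduce $p$ for the number of \Type{\Ttwo} pairs whose two partners are both of even weight and $q$ for the number whose partners are both of odd weight, so that $k + p + q = M/2$.

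Next I would count the even-weight codewords of $\cC$ in terms of $k$, $p$, $q$. Each \Type{\Tone} pair contributes exactly one even-weight codeword, each even \Type{\Ttwo} pair contributes two, and each odd \Type{\Ttwo} pair contributes none, so the number of even-weight codewords equals $k + 2p$; symmetrically, the number of odd-weight codewords equals $k + 2q$. Corollary~\ref{cor:evenweight} (which is applicable precisely because $n > 2$) forces $k + 2p = k + 2q = M/2$, whence $p = q$. This is exactly the claim that half of the \Type{\Ttwo} pairs consist of even-weight partners, and it also shows that the total number of \Type{\Ttwo} pairs, $M/2 - k = p + q = 2p$, is even.

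Finally, to obtain that $k$ itself is even I would write $k = M/2 - 2p$, so it remains only to check that $M/2$ is even. Since $n = 2^r$ and the hypothesis $n > 2$ force $r \ge 2$, we have $M/2 = 2^{\,n-r-1}$ with exponent $n - r - 1 = 2^r - r - 1 \ge 1$, so $M/2$ is even and hence so is $k$. There is no genuine obstacle in this argument; the only steps deserving a moment's care are the parity bookkeeping for the two types of pairs and the elementary verification that $n - r - 1 \ge 1$ for $r \ge 2$, which is what guarantees $M/2$ is even (and it is worth noting explicitly that the hypothesis $n > 2$ is exactly what licenses the appeal to Corollary~\ref{cor:evenweight}).
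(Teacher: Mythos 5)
Your proof is correct and follows essentially the same route as the paper's: partition into partner pairs, observe that \Type{\Tone} partners have opposite weight parities while \Type{\Ttwo} partners share parity, and invoke Corollary~\ref{cor:evenweight} to force equally many even-weight and odd-weight \Type{\Ttwo} pairs. Your explicit check that $M/2 = 2^{2^r - r - 1}$ is even for $r \ge 2$ is a welcome bit of added care that the paper's own proof leaves implicit in its final step.
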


\begin{proof}
Within each \Type{\Tone} pair, one (and only one) of
the partners has even weight. Hence,
in the subset $\cC_\Tone$ of $\cC$ formed by the union
of all \Type{\Tone} pairs,
exactly half the codewords have even weight.
By Corollary~\ref{cor:evenweight} it then follows that
the same must hold in the subset $\cC_\Ttwo = \cC \setminus \cC_\Tone$,
which is formed by the union of all \Type{\Ttwo} pairs.
Yet in each \Type{\Ttwo} pair,
the parity of the partners must be the same;
hence, there are as many \Type{\Ttwo} pairs with
even-weight partners as such pairs with odd-weight partners.
We conclude that $\abs{\cC_\Ttwo}$ is even and, therefore,
so is $k = \abs{\cC_\Tone} = M/2 - \abs{\cC_\Ttwo}$.
\end{proof}

\begin{remark}
The weight distributions of \Type{\TA} and \Type{\TB} NP1CCs
were shown in~\cite{BoEt24} using a different technique.
Another method for computing the weight distributions of
the three types was suggested by the reviewer and is based
on equitable partitions and
quotient matrices~\cite{Kro11}, \cite{Mar92}.
This method completely solves the weight distribution for
\Type{\TA} and \Type{\TB}.
For \Type{\TC}, we need to consider the same technique for
the extended code and analyze its punctured code after
the solution of the weight distribution. However this method does not
recover any information on the distance distribution.
\end{remark}

\section{Balanced Nearly Perfect Covering Codes}
\label{sec:balanced}

There are many NP1CCs which have some additional special properties.
One example of such property is a code of \Type{\TA} in which
for each coordinate there is at least one pair of partners
that disagree on that coordinate
(such a property will turn out to be useful in Section~\ref{sec:equiv}).
In this section, we construct such codes.
Moreover, for the constructed code, for any
given coordinate, the number of \Type{\Tone} pairs
that contain partners
that disagree on the given coordinate is $2^{2^r-2r-1}$.
In other words, this number is the same for all coordinates.
Such a code will be called a \Emph{balanced NP1CC}
and it can be constructed recursively, as we show below.

A \Emph{self-dual} sequence is a binary cyclic sequence
that is equal to its complement.
If there is no periodicity in the sequence, then it can be written
as $[X ~ \bar{X}]$, where $\bar{X}$ is the binary complement of~$X$.
The following two cyclic sequences $\dS_1=[00011011 ~ 11100100]$
and $\dS_2=[00011010 ~ 11100101]$ are self-dual sequences of length~$16$.
We consider all the $32$~words obtained by any eight consecutive symbols of $\dS_1$ and $\dS_2$.
In these $32$~words, we have $16$ even-weight words of
length~$8$ and $16$~odd-weight words of length~$8$.
Let $\cC$ be the code obtained from these $32$~words.
Let $\cC_e$ be the code obtained from the $16$~even-weight words
of $\cC$ and $\cC_o$ be code obtained from the $16$~odd-weight words
of $\cC$. The code~$\cC_e$ is an even translate of
an extended zeroed perfect code of length~$8$ and $\cC_o$ is
an odd translate of an extended zeroed perfect code
of length~$8$. Therefore, by Corollary~\ref{cor:typeA_from_twoEx}(2) their union is a non-zeroed translate of an NP1CC of \Type{\TA}.
Finally, for each one of the eight coordinates, there are exactly
two \Type{\Tone} pairs from $\cC_e$ and $\cC_o$,
where the partners in each pair disagree exactly
on this coordinate, and hence the code is balanced.
To obtain a zeroed NP1CC from this code we have
to translate it by one of its codewords.

\begin{example}
Three more pairs of sequences can be used as $\dS_1$ and $\dS_2$
(each pair have disjoint codewords of length 8 and
each pair can be obtained from each other by decimation)
\[
\dS_1 = [01001111 ~ 10110000]~, ~~~ \dS_2 = [01001110 ~ 10110001]~,
\]
\[
\dS_1 = [01110111 ~ 10001000]~, ~~~ \dS_2 = [01110110 ~ 10001001]~,
\]
\[
\dS_1 = [00100010 ~ 11011101]~, ~~~ \dS_2 = [00100011 ~ 11011100]~.
\]
\qed
\end{example}

Generally, we consider $2^{2^{r-1} -2r+1}$ self-dual sequences of
length $2^r$. Let $\cC$ be the set of
$2^{2^{r-1}-r+1}$ words obtained by any
$2^{r-1}$ consecutive symbols in these self-dual sequences.
Assume further that all these $2^{2^{r-1}-r+1}$ words of
length $2^{r-1}$ are different.
Let $\cC_e$ be the set of even-weight words
in $\cC$ and $\cC_o$ be the set of odd-weight words in $\cC$.
Assume further that $\cC_e$ and $\cC_o$ are
two translates of extended zeroed perfect codes of length $2^{r-1}$
(one even translate and one odd translate).
Assume further that the $2^{2^{r-1} -2r+1}$ self-dual sequences can be
ordered in pairs
\[
\cP_i =([X ~ \bar{X}],[X' ~ \bar{X}']), ~~ 1 \le i \le 2^{2^{r-1} -2r}
~,
\]
where $X$ and~$X'$ are sequences of length $2^{r-1}$ which start
with a `$0$' and differ only in their last symbol.

This partition into pairs of self-dual sequences implies that
the codewords of $\cC_e$ and~$\cC_o$ can be partitioned into pairs
of codewords defined by
the following set (see also the proof of Lemma~\ref{lem:balanced_code}).
\[
\cQ \triangleq \{ \{\bldc_1,\bldc_2\} ~:~ \bldc_1 \in \cC_e ,
~ \bldc_2 \in \cC_o, ~ \distance(\bldc_1,\bldc_2)=1 \} ~,
\]
where $\cQ$ contains exactly $2^{2^{r-1} -r}$ pairs of codewords and
each codeword of $\cC_e$ and each codeword of $\cC_o$ is contained
in exactly one such pair. Such a definition for~$\cQ$ and
the definition of the pairs in $\cP_i$, $1 \le i \le 2^{2^{r-1} -2r}$,
imply that for each one of the $2^{r-1}$ coordinates,
there are exactly $2^{2^{r-1}-2r+1}$ pairs which contain
codewords that disagree only at this coordinate.

For each pair of self-dual sequences
$\cP_i =([X ~ \bar{X}],[X' ~ \bar{X}'])$, $1 \le i \le 2^{2^{r-1} -2r}$,
and any sequence $V=(0,Z)$ of length $2^r$, where $Z$~is
an even-weight sequence of length $2^r-1$, we form the following pair
\[
\cP_{iV}
= ([ V ~ X+V ~ \bar{V} ~ X+\bar{V}],
[ V ~ X'+V ~ \bar{V} ~ X'+\bar{V}] ).
\]

The following lemma is an immediate observation.
\begin{lemma}
The two sequences in $\cP_{iY}$ are self-dual sequences.
They have the form $[X_1 ~ X_2 ~ \bar{X}_1 ~\bar{X}_2 ]$
and $[X_1 ~ X'_2 ~ \bar{X}_1 ~\bar{X}'_2 ]$, where $X_1$
and $X_2$ are words of length $2^{r-1}$ that start with a `$0$'.
\end{lemma}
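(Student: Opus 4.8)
The plan is to verify the two claimed forms directly from the definition of $\cP_{iV}$ by substituting and simplifying over $\F_2$. Recall that $\cP_{iV} = ([ V ~ X+V ~ \bar{V} ~ X+\bar{V}], [ V ~ X'+V ~ \bar{V} ~ X'+\bar{V}])$ where $V = (0,Z)$ has even weight and length $2^r$, and $[X~\bar{X}], [X'~\bar{X}']$ are the self-dual sequences of $\cP_i$, with $X, X'$ of length $2^{r-1}$ starting with `$0$' and differing only in their last symbol. Note the statement of the lemma writes $\cP_{iY}$; I read $Y = V$ throughout (the same block as in the display defining $\cP_{iV}$), so the first task is just to set the notation straight. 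The key observation is that the complement of a length-$2^r$ word $[A~B]$ (with $A,B$ of length $2^{r-1}$) is $[\bar A~\bar B]$, and that $\overline{X+V} = \bar X + V = X + \bar V$ over $\F_2$; this is the identity that makes the block structure close up.

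First I would compute the cyclic complement of the first sequence $S := [ V ~ (X+V) ~ \bar{V} ~ (X+\bar{V})]$, viewing it as four blocks of length $2^{r-1}$. Its complement is $[\bar V ~ \overline{X+V} ~ V ~ \overline{X+\bar V}]$. Using $\overline{X+V} = X+\bar V$ and $\overline{X+\bar V} = X+V$, this equals $[\bar V ~ (X+\bar V) ~ V ~ (X+V)]$, which is exactly the cyclic shift of $S$ by $2^r$ positions (i.e. by half the length $2^{r+1}$). Hence $S$ is self-dual. The identical computation with $X'$ in place of $X$ shows the second sequence is self-dual. So the self-duality claim is immediate once the complement identity is in hand.

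Next I would read off the asserted normal form. Setting $X_1 := V$ and $X_2 := X + V$, the first sequence is literally $[X_1 ~ X_2 ~ \bar{X}_1 ~ \bar{X}_2]$ provided $\bar V = \bar V$ (trivial) and $\overline{X+V} = X+\bar V$ — but wait, the fourth block of $S$ is $X + \bar V$, and $\bar X_2 = \overline{X+V} = X + \bar V$, so indeed the fourth block equals $\bar X_2$; the third block is $\bar V = \bar X_1$. Thus $S = [X_1 ~ X_2 ~ \bar X_1 ~ \bar X_2]$. For the second sequence, keeping the same $X_1 = V$ and putting $X'_2 := X' + V$, the same matching gives $[X_1 ~ X'_2 ~ \bar X_1 ~ \bar X'_2]$. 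Finally, $X_1 = V = (0,Z)$ starts with `$0$' by construction, and $X_2 = X + V$ starts with the sum of the first symbol of $X$ (which is `$0$') and the first symbol of $V$ (which is `$0$'), hence also starts with `$0$'; likewise $X'_2 = X' + V$ starts with `$0$'. This gives the claim that $X_1$ and $X_2$ are length-$2^{r-1}$ words starting with `$0$'.

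There is essentially no obstacle here — the lemma is, as the paper says, an immediate observation. The only point requiring the slightest care is bookkeeping: making sure the cyclic shift by half the period is correctly identified (the sequences have length $2^r$ but, being self-dual of the form $[W~\bar W]$ with $|W| = 2^r$, live as cyclic sequences of period $2^{r+1}$), and keeping straight which of the four blocks is which after complementation. Once that indexing is pinned down, both assertions fall out by the single identity $\overline{A+B} = \bar A + B = A + \bar B$ over $\F_2$ applied blockwise.
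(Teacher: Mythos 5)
Your verification is correct and is exactly the computation the paper has in mind when it calls the lemma ``an immediate observation'' (no proof is given in the paper): setting $X_1=V$ and $X_2=X+V$, the identity $\overline{X+V}=X+\bar V$ over $\F_2$ turns the defining blocks into $[X_1~X_2~\bar X_1~\bar X_2]$, which is $[W~\bar W]$ with $W=[X_1~X_2]$ and hence self-dual, and both $X_1$ and $X_2$ start with `$0$' since $V$ and $X$ do. The only wrinkle is the block-length bookkeeping ($2^{r-1}$ versus $2^r$), which is an off-by-one inconsistency inherited from the paper's own statement of the recursion rather than a flaw in your argument.
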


Let $\cE$ be the set of even-weight sequences of length $2^{r-1}$
that start with a `$0$'.
Let $\cC'$ be the code defined by taking the union of all
the sequences in these pairs and
from each sequence taking $2^{r+1}$ codewords obtained from
the consecutive $2^r$ bits of the sequence starting from each of
the $2^{r+1}$ entries of the sequence.

The construction for the pairs of sequences is very similar to
the constructions presented
in~\cite{Etz24}, \cite{EtLe84}, \cite{EtPa96}.
The same code was defined and analyzed for another purpose
in~\cite{CETV24}.
The following observations lead to the main result.
The first lemma was proved
in~\cite{Etz24}, \cite{EtLe84}, \cite{EtPa96}.

\begin{lemma}
\label{lem:wordsCdistinct}
All the words of length $2^r$ obtained from all
the pairs $\cP_{iV}$, $1 \le i \le 2^{2^{r-1} -2r}$, $V \in \cE$
are distinct.
\end{lemma}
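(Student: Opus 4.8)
The plan is to argue by contradiction: suppose two words of length $2^r$, obtained from windows into (possibly distinct) sequences $\cP_{iV}$ and $\cP_{jW}$, coincide. First I would normalize the bookkeeping. Each self-dual sequence in a pair $\cP_{iV}$ has the block form $[\,V \mid X{+}V \mid \bar V \mid X{+}\bar V\,]$ (and similarly with $X'$), where $V$ is an even-weight word of length $2^r$ starting with `$0$', and $X,X'$ come from the underlying length-$2^{r-1}$ pair $\cP_i$; so the length-$2^r$ sequence is itself built from four length-$2^{r-1}$ bricks of the shape recorded in the preceding lemma, namely $[X_1 \mid X_2 \mid \bar X_1 \mid \bar X_2]$. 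A length-$2^r$ window into such a sequence is determined by a starting offset modulo $2^{r+1}$; I would split into cases according to how the window sits relative to the four bricks. The key structural fact to exploit is self-duality: the second half of each sequence is the bitwise complement of the first half, so the window content, once we know which "phase" we are in, recovers $V$ (and $\bar V$) on an identifiable half of its coordinates and recovers $X$ (or $X'$) twisted by $V$ on the complementary half.

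Second, from an equality of two windows I would extract equalities of the underlying data. Reading off the coordinates where the window exposes $V$ directly gives $V = W$ (using that both start with `$0$', which pins down the phase and rules out the complemented branch $\bar V = W$ unless $V = \bar V$, impossible since $V$ has a `$0$' in position~$1$). Once $V = W$, the remaining coordinates of the window expose $X{+}V$ versus $X'{+}W = X'{+}V$, forcing $X = X'$ as length-$2^{r-1}$ words, or — if the two windows came from the two different members of one pair $\cP_i$ — forcing a window-equality between the length-$2^{r-1}$ sequences $[X \mid \bar X]$ and $[X' \mid \bar X']$ of $\cP_i$. This is precisely the situation handled by the cited construction in~\cite{Etz24}, \cite{EtLe84}, \cite{EtPa96}: within a single pair of self-dual sequences of length $2^{r-1}$ whose generating halves $X,X'$ differ only in the last symbol, all $2^{r-1}\cdot 2 \cdot 2 = $ (appropriately many) length-$2^{r-1}$ windows are distinct, except that the trivial overlap is the one forced by self-duality and by the "differ only in the last coordinate" hypothesis, which I would check does not produce a genuine collision among the $2^r$-windows because the offending coordinate is always inside the window. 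When instead the two windows come from sequences with distinct indices $i \ne j$, I would note that distinct pairs $\cP_i, \cP_j$ contribute disjoint codeword sets of length $2^{r-1}$ by the standing assumption on $\cC_e,\cC_o$, so after peeling off $V$ the two exposed length-$2^{r-1}$ halves would have to be simultaneously a codeword from $\cP_i$ and from $\cP_j$, contradiction.

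Third, I would close the loop: the only remaining possibility is $i = j$, $V = W$, $X = X'$ (same member of the pair), and equal offsets — i.e.\ the two windows are literally the same window into the same sequence — which contradicts our assumption that they were distinct words. The main obstacle I anticipate is the case analysis in the second step: the length-$2^r$ window can straddle the brick boundaries in several ways, and in the boundary-straddling cases the exposed pattern is a concatenation of a suffix of one brick with a prefix of the next, so recovering $V$ cleanly requires locating, within every phase, at least one full length-$2^r$ stretch of coordinates on which the sequence equals a translate of $V$ or $\bar V$; this is where self-duality (the second half is the complement of the first, of total length $2^r$, so \emph{some} aligned half is always fully inside the period) does the real work and must be invoked carefully. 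Everything else is the routine propagation of equalities plus an appeal to the disjointness hypotheses and to the distinctness of short windows established in the earlier works.
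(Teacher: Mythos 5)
The paper does not actually prove this lemma: it is imported from \cite{Etz24}, \cite{EtLe84}, \cite{EtPa96}, so the comparison must be with the argument used in those works. Your plan has the right overall shape (reduce a collision of length-$2^r$ windows to data at the previous level, then invoke the standing distinctness/disjointness assumptions and the convention that $V$ and $X$ start with `$0$'), but it has a genuine gap at its central step: you propose to ``read off $V$'' from the window and assert that the condition that $V$ starts with `$0$' ``pins down the phase.'' It cannot. To know which coordinates of a window expose $V$ (rather than $X{+}V$, $\bar V$, or $X{+}\bar V$) you must already know the starting offset $p$ modulo $2^{r+1}$, and a one-bit normalization can distinguish at most two of the $2^{r+1}$ possible offsets. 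In the misaligned case --- two coinciding windows whose offsets are not congruent modulo $2^{r-1}$ --- the block boundaries of the two decompositions do not line up, the window equality becomes a system mixing $V,X$ and $W,Y$ at shifted positions, and nothing in your outline resolves it; this is precisely the case you defer to ``invoking self-duality carefully.''

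The missing idea, which is the engine of the cited proofs, is a derivative (half-sum) reduction that eliminates $V$ \emph{before} any phase is known. For $S = [V ~ X{+}V ~ \bar V ~ X{+}\bar V]$ one checks that $j \mapsto S_j + S_{j+2^{r-1}}$ is the period-$2^r$ repetition of $[X ~ \bar X]$; hence the sum of the two halves of \emph{any} length-$2^r$ window of $S$ equals the length-$2^{r-1}$ window of $[X ~ \bar X]$ at position $p \bmod 2^r$, with no reference to $V$ at all. Equality of two windows forces equality of these half-sums, and the standing assumption that all level-$(r{-}1)$ windows are distinct then determines the pair $\cP_i$, the member ($X$ versus $X'$), and the offset modulo $2^r$ in one stroke. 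Only now can one subtract the known contribution $[0 ~ X ~ 0 ~ X]$ to expose a window of $[V ~ V ~ \bar V ~ \bar V]$, and only now does the normalization ``$V$ starts with `$0$''' do its job: it resolves the residual two-fold ambiguity between offsets $p$ and $p+2^r$ (whose exposed words are complementary), yielding $V = W$ and equal offsets, i.e., the two windows were the same window of the same sequence. Your remaining steps then go through essentially as you describe; what must be reversed is the order of operations --- recover $X$ and the phase first, and $V$ last.
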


\begin{corollary}
\label{cor:Cnum_codew}
The code $\cC'$ contains $2^{2^r -r}$ codewords.
\end{corollary}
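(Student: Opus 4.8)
The plan is to count codewords of $\cC'$ by tracking how it is assembled from the pairs $\cP_{iV}$ and then dividing out any multiplicities, using Lemma~\ref{lem:wordsCdistinct} to show there are in fact no multiplicities. First I would count the ingredients: there are $2^{2^{r-1}-2r}$ pairs $\cP_i$, and the set $\cE$ of even-weight length-$2^{r-1}$ sequences beginning with a `$0$' has size $2^{2^{r-1}-2}$ (a free choice on $2^{r-1}-2$ coordinates after fixing the leading bit and the parity bit). Hence the number of pairs $\cP_{iV}$ is $2^{2^{r-1}-2r} \cdot 2^{2^{r-1}-2} = 2^{2^r - 2r - 2}$, each pair contributing two self-dual sequences of length $2^{r+1}$, for a total of $2^{2^r - 2r - 1}$ sequences. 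From each such sequence we read off $2^{r+1}$ cyclic windows of length $2^r$, giving a grand total of
\[
2^{2^r - 2r - 1} \cdot 2^{r+1} = 2^{2^r - r}
\]
words (with multiplicity).

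Next I would invoke Lemma~\ref{lem:wordsCdistinct}, which asserts precisely that all the length-$2^r$ words obtained from all the pairs $\cP_{iV}$ (over $1 \le i \le 2^{2^{r-1}-2r}$ and $V \in \cE$) are distinct. Strictly, Lemma~\ref{lem:wordsCdistinct} as quoted speaks of the words from the pairs; to conclude $\abs{\cC'} = 2^{2^r-r}$ I need that no two of the $2^{r+1}$ cyclic windows of a single sequence coincide, and that windows from two different sequences (within or across pairs) coincide only trivially. The first point follows because each sequence $[X_1~X_2~\bar X_1~\bar X_2]$ produced by the construction has no shorter period: its two halves are complementary, so a period dividing $2^r$ would force $X_1X_2$ to equal its own complement, which is impossible for a word of that length. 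The second point is exactly the content of Lemma~\ref{lem:wordsCdistinct} once we interpret "the words obtained from a pair" as the full multiset of windows. I would state this explicitly as the bridge: the code $\cC'$ is by definition the \emph{set} (not multiset) of all these windows, and the lemma guarantees the defining multiset has no repetitions, so its cardinality is the count computed above.

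Assembling these, $\abs{\cC'} = 2^{2^r - r}$, which is the claim. The main obstacle is the bookkeeping around Lemma~\ref{lem:wordsCdistinct}: one must be careful that "distinct" there is meant in the strong sense covering all $2^{r+1}$ shifts of every one of the $2^{2^r-2r-1}$ sequences, not merely distinctness of the $2^{2^r-2r-1}$ sequences themselves or of one representative window per sequence. Granting the lemma in that strong form — which is how it is used in the cited references \cite{Etz24}, \cite{EtLe84}, \cite{EtPa96} — the corollary is just the arithmetic $2^{2^r-2r-1}\cdot 2^{r+1} = 2^{2^r-r}$, and no further calculation is needed.
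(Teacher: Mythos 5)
Your count is correct and takes the same route as the paper, which states this corollary with no explicit proof as an immediate consequence of Lemma~\ref{lem:wordsCdistinct}: there are $2^{2^{r-1}-2r}\cdot 2^{2^{r-1}-2}=2^{2^r-2r-2}$ pairs $\cP_{iV}$, hence $2^{2^r-2r-1}$ sequences, each contributing $2^{r+1}$ windows, and the lemma (in the strong reading you correctly identify, covering all cyclic windows of all sequences) makes these $2^{2^r-r}$ words distinct. Your added observation that self-duality forbids two windows of a single sequence from coinciding is a sensible clarification but does not change the argument.
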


The following lemma was mentioned in~\cite{CETV24} without a proof.

\begin{lemma}
\label{lem:C_covering1}
The code $\cC'$ is an NP1CC.
\end{lemma}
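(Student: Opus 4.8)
The plan is to invoke Corollary~\ref{cor:sufficient}: since $\cC'$ has the right parameters (length $n = 2^r$ and, by Corollary~\ref{cor:Cnum_codew}, size $M = 2^{2^r - r}$), it suffices to prove that $\abs{\Ball_2(\bldc) \cap \cC'} = 2$ for every codeword $\bldc \in \cC'$. Because every codeword of $\cC'$ is a length-$2^r$ window of one of the self-dual sequences in the pairs $\cP_{iV}$, I would first show that $\cC'$ decomposes into codeword pairs $\{\bldc_1,\bldc_2\}$ with $\distance(\bldc_1,\bldc_2) \le 2$, and then show that no codeword has a \emph{third} codeword within distance~$2$. The natural pairing comes from the structure: the two sequences in a pair $\cP_{iV}$ have the form $[X_1~X_2~\bar X_1~\bar X_2]$ and $[X_1~X_2'~\bar X_1~\bar X_2']$, where $X_2$ and $X_2'$ agree except on one coordinate; so a window of the first sequence and the corresponding window of the second sequence either coincide or differ in at most two positions (one from the $X_2$-block, one from the $\bar X_2$-block), depending on whether the window straddles the relevant coordinates. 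This gives the pairing and the distance-$\le 2$ bound.

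The substantive step is the upper bound: given a codeword $\bldc$, no codeword other than its designated partner lies in $\Ball_2(\bldc)$. Here I would lean on the fact, already recorded in the excerpt, that the analogous length-$2^{r-1}$ object built from the pairs $\cP_i$ is a balanced NP1CC of \Type{\TA} (the codewords $\cC_e \cup \cC_o$ with the pairing $\cQ$), so within that smaller code each codeword has exactly one partner at distance~$1$ and nothing else within distance~$2$. The sequences $\cP_{iV}$ are obtained by the substitution $V, \bar V$ into the length-$2^{r-1}$ template together with the $X$-blocks, i.e. a length-doubling construction of the type in~\cite{Etz24},\cite{EtLe84},\cite{EtPa96}; Lemma~\ref{lem:wordsCdistinct} guarantees the windows are all distinct. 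The key computation is: if two length-$2^r$ windows $\bldc, \bldc''$ coming from (possibly different) pairs $\cP_{iV}, \cP_{jW}$ satisfy $\distance(\bldc,\bldc'') \le 2$, then by tracking how the window position interacts with the four blocks $V, X\!+\!V, \bar V, X\!+\!\bar V$, one forces $i = j$, $V = W$ (using distinctness and the balancedness of the inner code), and then $\bldc''$ is either $\bldc$ or its partner.

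I expect the main obstacle to be the bookkeeping in that last computation: a window of length $2^r$ into a sequence of length $2^{r+1}$ can overlap the block boundaries in several ways, and one must verify in each configuration that a distance-$\le 2$ coincidence propagates back to an equality of the underlying data $(i,V)$ up to the one-coordinate ambiguity that defines the partner. A clean way to organize this is to observe that within a \emph{single} sequence $[X_1~X_2~\bar X_1~\bar X_2]$, distinct windows are at distance $\ge 3$ (this is where the inner \Type{\TA}/balanced property and the self-dual structure enter, guaranteeing the minimum distance between windows of one sequence is large), so any distance-$\le 2$ pair of codewords must come from the two \emph{distinct} sequences of one pair $\cP_{iV}$ at the \emph{same} window offset — which is exactly the designated partner. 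Once that reduction is in place, the count $\abs{\Ball_2(\bldc) \cap \cC'} = 2$ is immediate, and Corollary~\ref{cor:sufficient} finishes the proof. An alternative, if the window-distance estimate proves awkward to state directly, is to instead verify the hypotheses of Corollary~\ref{cor:nearly_part} by checking that the $M/2$ capsules of the pairs partition $\F_2^{2^r}$; this replaces the distance argument with a counting argument, using $|\cC'| \cdot (2n)/2 = 2^{2^r}$ together with a disjointness check analogous to the one in the proof of Corollary~\ref{cor:sufficient}.
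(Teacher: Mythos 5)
Your overall architecture is sound and close in spirit to the paper's: both arguments reduce the lemma to showing that no two codewords of $\cC'$ are too close, and your fallback observation that everything hinges on capsule disjointness is correct. However, the step you defer as ``bookkeeping'' is in fact the entire content of the proof, and the mechanism you gesture at --- a case analysis over how a length-$2^r$ window straddles the four blocks $V$, $X{+}V$, $\bar V$, $X{+}\bar V$, organized around the claim that distinct windows of a single sequence are at distance at least~$3$ --- is neither carried out nor obviously tractable as stated. In particular, that organizing claim is essentially as hard as the lemma itself: for the inner code it holds because the inner code is \emph{assumed} to be an NP1CC, but for $\cC'$ it is part of what must be proved, and you give no independent handle on it. The paper supplies the missing idea: write any codeword of $\cC'$ as $(X_1,X_2)$ with each half of length $2^{r-1}$, and observe that the \emph{fold} $X_1+X_2$ is a codeword of the inner code $\cC$. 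Two codewords of $\cC'$ at distance~$2$ then fold either to two inner codewords at distance~$2$ (impossible, since the inner \Type{\TA} code has no such pair) or to the same inner word (the case where the two unit differences occupy aligned positions in the two halves), and the latter is excluded by the choice of $V=(0,Z)$ with $Z$ of even weight. This folding map is what collapses your positional bookkeeping into a two-case argument; without it, or an equivalent reduction to the inner code, your plan has a genuine gap.

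Two smaller points. First, corresponding windows of the two sequences in a pair $\cP_{iV}$ always differ in \emph{exactly one} position: the two difference positions sit exactly $2^r$ apart in a cyclic sequence of period $2^{r+1}$, so every window of length $2^r$ contains precisely one of them; they never coincide (Lemma~\ref{lem:wordsCdistinct}) and never differ in two positions, which is what makes all pairs of \Type{\Tone} and the code a translate of a \Type{\TA} NP1CC. Second, the paper does not finish via Corollary~\ref{cor:sufficient}: having shown that no two codewords are at distance~$2$, it concludes that each parity class of $\cC'$ has minimum distance~$4$ and the size of an extended perfect code, hence is a translate of an extended zeroed perfect code, and then invokes the \Type{\TA} characterization (Theorem~\ref{thm:typeA_twoEx} and Corollary~\ref{cor:typeA_from_twoEx}); this route avoids having to exhibit the distance-$1$ pairing separately.
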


\begin{proof}
The form of the two sequences in a pair implies that we can partition
the $2^{2^r -r}$ codewords of $\cC'$ into two sets,
one with words of even weight and one with words of odd weight.
We claim that there are no two codewords at distance~$2$ apart.
Assume to the contrary that there are two such distinct codewords,
$(X_1 , X_2)$ and $(Y_1 , Y_2)$ where $X_1,X_2,Y_1,Y_2$
are sequences of length~$2^r$ and
${\distance((X_1 , X_2),(Y_1 , Y_2))=2}$.
The associated two self-dual sequences (not necessarily distinct)
of length~$2^{r+1}$ are
\[
[ X_1 ~ X_2 ~ \bar{X}_1 ~ \bar{X}_2 ]
~~ \text{and} ~~ [ Y_1 ~ Y_2 ~ \bar{Y}_1 ~ \bar{Y}_2 ]~.
\]
We distinguish now between two cases:

\Emph{Case 1:}
$\distance(X_1,Y_1)=2$ and $X_2=Y_2$
(the case $\distance(X_2,Y_2)=2$ and $X_1=Y_1$ is equivalent).
The code $\cC$ contains the codewords $X_1 + X_2$ and $Y_1 + Y_2$,
where $\distance(X_1+X_2,Y_1+Y_2)=2$, a contradiction.

\Emph{Case 2:}
$\distance(X_1,Y_1)=1$ and $\distance(X_2,Y_2)=1$.
The code $\cC$ contains the codewords $X_1 + X_2$ and $Y_1 + Y_2$,
where either $\distance(X_1+X_2,Y_1+Y_2)=2$
or $\distance(X_1+X_2,Y_1+Y_2)=0$.
It is not possible to have
$\distance(X_1+X_2,Y_1+Y_2)=2$ since
the code $\cC$ does not contains two codewords at distance~$2$ apart.
If $\distance(X_1+X_2,Y_1+Y_2)=0$, then the coordinate on which
$X_1$ and $Y_1$ differ is the same coordinate where
$X_2$ and $Y_2$ differ.
This implies that the two distinct self-dual sequences
\begin{equation}
\label{eq:twoSDforCont}
[ X_1 ~ X_2 ~ \bar{X}_1 ~ \bar{X}_2 ]
~~ \text{and} ~~ [ Y_1 ~ Y_2 ~ \bar{Y}_1 ~ \bar{Y}_2 ]~
\end{equation}
are obtained from the same self-dual sequences
$[X_1 + X_2 ~ \bar{X}_1 + X_2] = [Y_1 + Y_2 ~ \bar{Y}_1 + Y_2]$.
The two sequences in~(\ref{eq:twoSDforCont}) differ in four positions,
each two are separated by $2^{r-1}-1$ equal positions.
But our choice of $V=(0,Z)$ of length $2^r$,
where $Z$~has even weight,
cannot yield two sequences that differ in exactly one position
among $2^r$ consecutive coordinates, thereby resulting
in a contradiction.

Hence, the minimum distance in each set of codewords is~$4$,
which implies that each set of words has the parameters
of the extended zeroed perfect code. Thus, $\cC'$ is an NP1CC.
\end{proof}

\begin{lemma}
\label{lem:balanced_code}
The code $\cC'$ is a balanced NP1CC.
\end{lemma}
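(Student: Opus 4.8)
The plan is to show that for each of the $2^{r}$ coordinates of $\cC'$, the number of \Type{\Tone} pairs whose partners disagree exactly on that coordinate equals $2^{2^r-2r-1}$, which is the defining property of a balanced NP1CC. First I would recall from Lemma~\ref{lem:C_covering1} that $\cC'$ is an NP1CC in which each of the two parity classes has minimum distance~$4$, so (by Theorem~\ref{thm:typeA_twoEx} and Corollary~\ref{cor:typeA_from_twoEx}) $\cC'$ is of \Type{\TA}, and every \Type{\Tone} pair consists of an even-weight codeword and an odd-weight codeword at distance~$1$. Moreover, by the proof of that lemma, such a pair arises precisely from a pair $\cP_{iV}$: the two self-dual sequences $[X_1 ~ X_2 ~ \bar{X}_1 ~ \bar{X}_2]$ and $[X_1 ~ X'_2 ~ \bar{X}_1 ~ \bar{X}'_2]$ differ in exactly the positions dictated by where $X_2$ and $X'_2$ differ (the last bit of the "$X$-block"), and a partner pair in $\cC'$ comes from a length-$2^r$ window of these two sequences that straddles exactly one such differing position.

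Next I would set up the counting bijection explicitly. The $2^{2^{r-1}-2r}$ pairs $\cP_i = ([X~\bar X],[X'~\bar X'])$ of self-dual sequences of length $2^r$ realize, by the property stated just before Lemma~\ref{lem:wordsCdistinct}, exactly $2^{2^{r-1}-2r+1}$ pairs disagreeing only on each of the $2^{r-1}$ coordinates; the doubling comes from the two windows (one in the $X$-half, one in the $\bar X$-half, via self-duality) that cover a given differing position. Now, forming $\cP_{iV}$ for $V=(0,Z) \in \cE$ replaces the length-$2^{r-1}$ "difference coordinate" $j$ of $\cP_i$ by its image under the transformation used to build the length-$2^{r+1}$ sequence $[V~X{+}V~\bar V~X{+}\bar V]$; since $V$ shifts but does not collapse coordinates (the argument in Case~2 of Lemma~\ref{lem:C_covering1}'s proof shows distinct differing positions stay distinct), coordinate $j$ of $\cP_i$ maps to a well-defined coordinate of the $2^r$-length codewords, and each of the $2^r$ target coordinates is hit the same number of times as we range over $i$, $V$, and the choice of window. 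I would make this precise by exhibiting, for a fixed target coordinate $\ell \in \Int{2^r}$, a bijection between the \Type{\Tone} pairs of $\cC'$ whose partners differ on $\ell$ and the set of triples $(\cP_i, V, \text{window})$ that produce a difference at $\ell$, and then count: there are $2^{2^{r-1}-2r}$ choices of $i$, then (by the balancedness of the $\cP_i$-family over the $2^{r-1}$ inner coordinates and the compatibility of the window choice) the number of $(i,V,\text{window})$ landing on $\ell$ is $(\text{total number of \Type{\Tone} pairs})/2^r$.

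For the final count I would use that $\cC'$ has $M = 2^{2^r-r}$ codewords (Corollary~\ref{cor:Cnum_codew}) all in \Type{\Tone} pairs, hence $M/2 = 2^{2^r-r-1}$ pairs in total, and then invoke a symmetry argument: the construction of $\cC'$ is symmetric under the coordinate action induced by cyclically shifting all self-dual sequences together with simultaneously complementing, so the map "coordinate $\ell$" $\mapsto$ "number of \Type{\Tone} pairs differing on $\ell$" is constant in $\ell$; dividing $2^{2^r-r-1}$ by $2^r$ gives $2^{2^r-2r-1}$, as required. The main obstacle I anticipate is making the "window plus $V$-shift" bookkeeping rigorous — i.e., verifying that the coordinate-relabeling induced by passing from a $\cP_i$-difference at inner coordinate $j$ to a $\cP_{iV}$-difference at outer coordinate $\ell$ is a bijection that is equivariant under the symmetry group, so that the uniform distribution over the $2^{r-1}$ inner coordinates (established for the $\cP_i$-family) transports to a uniform distribution over the $2^r$ outer coordinates; once that equivariance is in place, the count is immediate.
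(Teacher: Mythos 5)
Your closing cyclic-shift argument is essentially the paper's proof, and it is sound: the paper observes that the two sequences in each pair $\cP_{iV}$ differ in exactly two positions, located at the ends of the second and fourth blocks (i.e., $2^r$ apart in the cyclic sequence of length $2^{r+1}$), so that among the $2^{r+1}$ windows of length $2^r$ taken from that pair, exactly two place the (unique) differing position at any prescribed window-coordinate $\gamma$; summing over all pairs $\cP_{iV}$ gives the same count for every coordinate, and the total $2^{2^r-r-1}/2^r = 2^{2^r-2r-1}$ follows. However, the mechanism you put at the center of your write-up --- transporting the ``uniform distribution over the $2^{r-1}$ inner coordinates'' of the $\cP_i$-family to the $2^r$ outer coordinates via an equivariant coordinate-relabelling --- is a wrong turn, and it is precisely the step you flag as the unresolved obstacle. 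There is no varying inner difference coordinate to transport: by construction $X$ and $X'$ always differ in their \emph{last} symbol, so every pair $\cP_i$, and hence every pair $\cP_{iV}$, has its sequence-level disagreements at the \emph{same} fixed absolute positions. The uniformity over the $2^r$ coordinates of $\cC'$ is created entirely by sliding the window over all $2^{r+1}$ cyclic start positions, not inherited from any balancedness of the length-$2^{r-1}$ code (which, for the stated $\cQ$, is itself just another instance of the same sliding-window phenomenon). Once you replace the transport argument by the direct observation about the two diametrically opposite differing positions, your proof collapses to the paper's and the bookkeeping you were worried about disappears.
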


\begin{proof}
By Corollary~\ref{cor:Cnum_codew} and Lemma~\ref{lem:C_covering1}
we have that $\cC'$~is an NP1CC.
Two pairs of sequences differ in positions $2^r$ and $2^{r+1}$.
These two positions are associated with the last coordinate
of the codewords that start in the first bit and bit $2^r+1$
of these two sequences.
Since the codewords are formed from the $2^r$ consecutive bits
in each pair of such sequences, the codewords
which start in the next bits differ in the previous positions and so on.
It follows that for each position~$\gamma$
there are exactly two pairs of codewords from these two sequences
which differ exactly in position~$\gamma$.
Therefore, $\cC'$ is a balanced NP1CC.
\end{proof}

\begin{example}
For a code of length~$8$ there is one pair of self-dual sequences of
length~$16$ given by
\[
\cP = ( [00011011 ~ 11100100],[00011010 ~ 11100101]) ~.
\]
Applying the recursion we obtain the following $64$~pairs
(the first eight and the last four are given),
where the index is their place in the lexicographic order
and the first eight bits are ordered by this lexicographic order
\begin{align*}
 \cP_1 = ([00000000 ~ 00011011 ~ 11111111 ~ 11100100],[00000000 ~ 00011010 ~ 11111111 ~ 11100101])  \\
 \cP_2 = ([00000011 ~ 00011000 ~ 11111100 ~ 11100111],[00000011 ~ 00011001 ~ 11111100 ~ 11100110])  \\
 \cP_3 = ([00000101 ~ 00011110 ~ 11111010 ~ 11100001],[00000101 ~ 00011111 ~ 11111010 ~ 11100000])  \\
 \cP_4 = ([00000110 ~ 00011101 ~ 11111001 ~ 11100010],[00000110 ~ 00011100 ~ 11111001 ~ 11100011])  \\
 \cP_5 = ([00001001 ~ 00010010 ~ 11110110 ~ 11101101],[00001001 ~ 00010011 ~ 11110110 ~ 11101100])  \\
 \cP_6 = ([00001010 ~ 00010001 ~ 11110101 ~ 11101110],[00001010 ~ 00010000 ~ 11110101 ~ 11101111])  \\
 \cP_7 = ([00001100 ~ 00010111 ~ 11110011 ~ 11101000],[00001100 ~ 00010110 ~ 11110011 ~ 11101001])  \\
 \cP_8 = ([00001111 ~ 00010100 ~ 11110000 ~ 11101011],[00001111 ~ 00010101 ~ 11110000 ~ 11101010])  \\
 \vdots ~~~~~~~~~~~~~~~~~~~~~~~~~~~~~~~~~~~~~~~~~~~~~~~~~~~~~~~~~~~~~~~ \\
 \cP_{61} = ([01110111 ~ 01101100 ~ 10001000 ~ 10010011],[01110111 ~ 01101101 ~ 10001000 ~ 10010010])  \\
 \cP_{62} = ([01111011 ~ 01100000 ~ 10000100 ~ 10011111],[01111011 ~ 01100001 ~ 10000100 ~ 10011110])  \\
 \cP_{63} = ([01111101 ~ 01100110 ~ 10000010 ~ 10011001],[01111101 ~ 01100111 ~ 10000010 ~ 10011000])  \\
 \cP_{64} = ([01111110 ~ 01100101 ~ 10000001 ~ 10011010],[01111110 ~ 01100100 ~ 10000001 ~ 10011011])  \\
\end{align*}
\qed
\end{example}

\section{Extended NP1CCs and their properties}
\label{sec:equiv}

In this section, we show how to construct one type of NP1CCs
from another type in a rather straightforward way.
We also show that we can partition
the codes of the three types into some logical equivalence classes,
where each equivalence class can contain
NP1CCs from more than one type, i.e., from two of them or even from
all the three types.
This will be done by considering the extended codes of NP1CCs.

Given an $(n{=}2^r,M{=}2^{n-r})$ NP1CC $\cC$,
we construct its extended code $\cC^*$ of length $n+1$
by adding an even parity to each one of its codewords.
Such an extended NP1CC will be called \Emph{ENP1CC}.
The following property is an immediate consequence from the definitions.

\begin{lemma}
\label{lem:part_d=2}
In an ENP1CC $\cC^*$,
each codeword has even weight and for each codeword
$c \in \cC^*$ there exists exactly one codeword $\bldc' \in \cC^*$
such that $\distance(\bldc,\bldc')=2$.
For any other codeword $\bldc'' \in \cC^*$ we have that
$\distance(\bldc,\bldc'') \ge 4$ and $\distance(\bldc',\bldc'') \ge 4$.
There are exactly $2^{2^r-r-1}$ such pairs of codewords
$\bldc,\bldc' \in \cC$ such that $\distance(\bldc,\bldc')=2$.
\end{lemma}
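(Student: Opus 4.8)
The plan is to transfer the pairing structure established for NP1CCs directly through the extension map. Recall from Corollary~\ref{cor:pair_parts} that the codewords of $\cC$ split uniquely into $M/2 = 2^{2^r-r-1}$ partner pairs $\{\bldc,\bldc'\}$ with $\distance(\bldc,\bldc') \le 2$, and from Theorem~\ref{thm:adjacent_codewords} that no codeword has a \emph{second} codeword within distance~$2$. The extension map $\bldc \mapsto \bldc^*$ (append an even-parity bit) is a bijection $\cC \to \cC^*$; every codeword of $\cC^*$ has even weight by construction. The first step is to observe how the extension changes pairwise distances: if $\distance(\bldx,\bldy)$ is odd then $\distance(\bldx^*,\bldy^*) = \distance(\bldx,\bldy)+1$ (the parity bits differ), and if $\distance(\bldx,\bldy)$ is even then $\distance(\bldx^*,\bldy^*) = \distance(\bldx,\bldy)$ (the parity bits agree). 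In particular distances in $\cC^*$ are always even, which is consistent with all codewords being even-weight.

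Next I would run this observation through the three cases for distances between codewords of~$\cC$. Within a partner pair $\{\bldc,\bldc'\}$: a \Type{\Tone} pair has $\distance(\bldc,\bldc')=1$ (odd), so $\distance(\bldc^*,\bldc'^*)=2$; a \Type{\Ttwo} pair has $\distance(\bldc,\bldc')=2$ (even), so $\distance(\bldc^*,\bldc'^*)=2$ as well. Thus \emph{every} partner pair of~$\cC$ maps to a pair of codewords of $\cC^*$ at distance exactly~$2$, giving $2^{2^r-r-1}$ such pairs, one for each original pair. For any non-partner codewords $\bldc,\bldc''$ of~$\cC$ we have $\distance(\bldc,\bldc'')\ge 3$, so after extension $\distance(\bldc^*,\bldc''^*)\ge 4$ (if the original distance is~$3$ it becomes~$4$; if it is $\ge 4$ it stays $\ge 4$). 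This shows that each codeword $\bldc^* \in \cC^*$ has its partner $\bldc'^*$ at distance~$2$ and every \emph{other} codeword at distance $\ge 4$, and likewise for $\bldc'^*$; so no codeword of $\cC^*$ has two distinct codewords within distance~$2$, which establishes uniqueness of the distance-$2$ partner.

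The only remaining point is to confirm that these are \emph{all} the distance-$2$ events, i.e. that there is no codeword of $\cC^*$ at distance~$2$ from a given one besides its partner's image. This is exactly the $\ge 4$ bound just derived for non-partner images, combined with the fact that the extension map is a bijection onto $\cC^*$, so every pair of codewords of $\cC^*$ arises as $(\bldc^*,\bldc''^*)$ for a unique pair from~$\cC$. I do not anticipate a real obstacle here; the lemma is almost immediate once the parity-bit distance bookkeeping is written down carefully, and the text itself flags it as ``an immediate consequence from the definitions.'' The one spot to be careful about is the case analysis on whether the original distance is~$3$ versus $\ge 4$, since only the former gets bumped by the parity bit — but both land in the claimed range $\ge 4$, so the conclusion is unaffected.
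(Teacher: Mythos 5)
Your proposal is correct and fills in exactly the argument the paper treats as immediate: the parity-bit bookkeeping (odd distances increase by one, even distances are preserved under extension) combined with Theorem~\ref{thm:adjacent_codewords} and Corollary~\ref{cor:pair_parts} gives all four claims, including the count $M/2 = 2^{2^r-r-1}$. The paper offers no explicit proof, so there is nothing to contrast with; your write-up is a faithful expansion of the intended reasoning.
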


Similarly to NP1CCs, two codewords in an ENP1CC that are
at distance~$2$ apart will be called \Emph{partners}.

\begin{corollary}
The codewords of an ENP1CC $\cC^*$ can be partitioned into
$2^{2^r-r-1}$ pairs of partners.
\end{corollary}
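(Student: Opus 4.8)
The plan is to derive the corollary directly from Lemma~\ref{lem:part_d=2} by showing that the relation ``being partners'' is a well-defined partition into unordered pairs. First I would observe that Lemma~\ref{lem:part_d=2} guarantees, for every codeword $\bldc \in \cC^*$, the \emph{existence} of a codeword $\bldc'$ with $\distance(\bldc,\bldc')=2$, and its \emph{uniqueness} (any third codeword is at distance $\ge 4$ from both $\bldc$ and $\bldc'$). So I would define the partner map $\pi : \cC^* \to \cC^*$ by letting $\pi(\bldc)$ be this unique codeword at distance~$2$ from $\bldc$.

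The key step is to check that $\pi$ is an involution without fixed points, so that its orbits are exactly the unordered pairs $\{\bldc,\pi(\bldc)\}$. It has no fixed points because $\distance(\bldc,\pi(\bldc))=2 \ne 0$. It is an involution because, writing $\bldc'=\pi(\bldc)$, the codeword $\bldc$ itself is a codeword at distance~$2$ from $\bldc'$; by the uniqueness clause of Lemma~\ref{lem:part_d=2} applied to $\bldc'$ (every codeword other than its unique partner is at distance $\ge 4$ from it), the unique partner of $\bldc'$ must be $\bldc$, i.e., $\pi(\bldc')=\bldc$. Hence $\pi \circ \pi = \mathrm{id}$, and the sets $\{\bldc,\pi(\bldc)\}$ partition $\cC^*$ into two-element blocks.

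Finally I would count the blocks. Since every block has size exactly~$2$ and $\abs{\cC^*}=\abs{\cC}=M=2^{2^r-r}$, the number of blocks is $M/2 = 2^{2^r-r-1}$, which also matches the count of such pairs already asserted in Lemma~\ref{lem:part_d=2}. I do not expect any real obstacle here: the only thing to be careful about is invoking the uniqueness part of Lemma~\ref{lem:part_d=2} from the point of view of \emph{both} members of a pair, which is exactly what the ``for any other codeword'' clause provides. The statement is essentially a bookkeeping consequence of that lemma, entirely parallel to Corollary~\ref{cor:pair_parts} for NP1CCs, so a two- or three-sentence proof suffices.
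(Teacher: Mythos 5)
Your proof is correct and takes essentially the same route as the paper, which states this corollary without proof as an immediate consequence of Lemma~\ref{lem:part_d=2}; your involution argument is exactly the bookkeeping that justification requires, with the symmetry of the partner relation correctly extracted from the ``for any other codeword'' clause of the lemma.
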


Corollary~\ref{cor:sufficient} and Lemma~\ref{lem:part_d=2} imply
the following consequence.

\begin{corollary}
Puncturing an ENP1CC on any one of its coordinates yields an NP1CC.
\end{corollary}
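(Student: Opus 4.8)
The plan is to show that if $\cC^*$ is an ENP1CC of length $n+1 = 2^r + 1$ and we delete any fixed coordinate, then the punctured code $\cD$ has exactly $2^{2^r-r} = M$ codewords and satisfies $\abs{\Ball_2(\bldc) \cap \cD} = 2$ for every $\bldc \in \cD$; Corollary~\ref{cor:sufficient} will then immediately give that $\cD$ is an NP1CC. First I would note that since $\cC^*$ is obtained from an NP1CC $\cC$ by appending an even parity bit, all its codewords have even weight, so deleting a coordinate never causes two distinct codewords of $\cC^*$ to collapse to the same word (two codewords agreeing outside the deleted coordinate would be at Hamming distance~$0$ or~$1$ in $\cC^*$, and distance~$1$ is impossible between two even-weight words). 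Hence $\abs{\cD} = \abs{\cC^*} = 2^{2^r-r}$, which is the correct size $M$ for an NP1CC of length $2^r$.

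The heart of the argument is controlling distances under puncturing. Fix a coordinate $\gamma$ and let $\bldc \in \cD$ be the image of $\bldc^* \in \cC^*$. By Lemma~\ref{lem:part_d=2}, $\bldc^*$ has a unique partner $\bldc'^* \in \cC^*$ with $\distance(\bldc^*, \bldc'^*) = 2$, and every other codeword of $\cC^*$ is at distance at least~$4$ from $\bldc^*$. Deleting coordinate $\gamma$ can only decrease distances, and by at most~$1$. So every codeword other than the partner remains at distance at least~$3$ from $\bldc$ in $\cD$, hence outside $\Ball_2(\bldc)$; this shows $\abs{\Ball_2(\bldc) \cap \cD} \le 2$. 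For the matching lower bound, observe that the punctured partner $\bldc'$ satisfies $\distance(\bldc, \bldc') \in \{1, 2\}$: it is~$2$ if $\gamma$ lies outside the support of $\bldc^* + \bldc'^*$, and~$1$ if $\gamma$ lies in that (size-$2$) support. Either way $\bldc' \in \Ball_2(\bldc)$, and $\bldc' \ne \bldc$ since the codewords do not collapse. Therefore $\abs{\Ball_2(\bldc) \cap \cD} = 2$ for every $\bldc \in \cD$, and Corollary~\ref{cor:sufficient} finishes the proof.

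The one point that needs a little care — and is the only real obstacle — is verifying that distinct codewords of $\cC^*$ really do remain distinct after puncturing; everything else is just the observation that puncturing changes distances by at most~$1$ together with the distance structure recorded in Lemma~\ref{lem:part_d=2}. That distinctness is handled cleanly by the even-weight remark above: any collision would force two codewords of $\cC^*$ at distance~$0$ or~$1$, and distance~$1$ contradicts both words having even weight while distance~$0$ contradicts their being distinct.
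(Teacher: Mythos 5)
Your proposal is correct and is essentially the argument the paper intends: the paper simply states that Corollary~\ref{cor:sufficient} and Lemma~\ref{lem:part_d=2} imply the result, and your write-up fills in exactly those details (no collisions under puncturing since the extended code has even-weight codewords and hence minimum distance~$2$, and the partner structure of Lemma~\ref{lem:part_d=2} survives puncturing because deleting a coordinate changes distances by at most~$1$). Nothing is missing.
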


A necessary and sufficient condition that a puncturing of
an ENP1CC will be of a certain type of an NP1CC
can be inferred as an immediate observation from
the definitions of \Type{\TA}, \Type{\TB}, and \Type{\TC}.

\begin{lemma}
\label{lem:punctABC}
Let $\cC^*$ is an ENP1CC.
\begin{itemize}
\item[(1)]
The punctured code of $\cC^*$ is an NP1CC of \Type{\TA},
if and only if in each pair of partners,
the partners disagree on the punctured coordinate.

\item[(2)]
The punctured code of $\cC^*$ is an NP1CC of \Type{\TB},
if and only if in each pair of partners, the partners agree on the punctured coordinate.

\item[(3)]
The punctured code of $\cC^*$ is an NP1CC of \Type{\TC}, if and only if in some of the pairs of partners,
the partners agree on the punctured coordinate while in some other pairs they disagree on that coordinate.
\end{itemize}
\end{lemma}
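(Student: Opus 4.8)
The plan is to derive all three statements from the partner-pair structure of an ENP1CC together with the three-way classification of NP1CCs recalled just before the lemma. Fix an ENP1CC $\cC^*$ of length $n+1$ and a coordinate $\gamma \in \Int{n+1}$, and let $\cC$ be the code obtained by puncturing $\cC^*$ at $\gamma$. By Corollary preceding this lemma, $\cC$ is an NP1CC. The key observation is that puncturing carries the partner pairs of $\cC^*$ (pairs at distance~$2$, as in Lemma~\ref{lem:part_d=2}) onto the partner pairs of $\cC$ (pairs in $\Ball_2$, as in Theorem~\ref{thm:adjacent_codewords} and Corollary~\ref{cor:pair_parts}): indeed, if $\bldc,\bldc' \in \cC^*$ have $\distance(\bldc,\bldc')=2$, then their images $\bar\bldc,\bar\bldc'$ in $\cC$ satisfy $\distance(\bar\bldc,\bar\bldc') = 1$ if $\bldc$ and $\bldc'$ differ on coordinate~$\gamma$, and $\distance(\bar\bldc,\bar\bldc') = 2$ if they agree on~$\gamma$. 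In the first case the punctured pair is a \Type{\Tone} pair of $\cC$; in the second case it is a \Type{\Ttwo} pair. Since the map from partner pairs of $\cC^*$ to partner pairs of $\cC$ is a bijection (both have $2^{2^r-r-1}$ pairs and puncturing is surjective on codewords, hence on pairs), the type of $\cC$ is completely dictated by how many of the $\cC^*$-pairs straddle coordinate~$\gamma$.

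First I would make the above dichotomy precise: for a pair of partners $\{\bldc,\bldc'\}$ in $\cC^*$ with $\distance(\bldc,\bldc')=2$, the two coordinates on which they differ are distinct, and whether $\gamma$ is one of those two coordinates determines whether the punctured pair has distance~$1$ or~$2$ in $\cC$. (One should note that $\bldc$ and $\bldc'$ do not both change on $\gamma$ — a pair differs on exactly two coordinates, so at most one of them is $\gamma$; when $\gamma$ is one of the two, the distance drops by one, and when it is neither, the distance stays at~$2$.) This gives, for every coordinate $\gamma$, a clean partition of the $\cC^*$-pairs into ``straddling'' pairs (those differing on $\gamma$) and ``non-straddling'' pairs, and the punctured images are respectively \Type{\Tone} and \Type{\Ttwo} pairs of $\cC$.

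With this in hand the three equivalences are immediate from the definitions of the three types. Part~(1): $\cC$ is of \Type{\TA} iff every partner pair of $\cC$ is a \Type{\Tone} pair, iff every $\cC^*$-pair straddles~$\gamma$, iff in every pair of partners of $\cC^*$ the two partners disagree on coordinate~$\gamma$. Part~(2): $\cC$ is of \Type{\TB} iff every partner pair of $\cC$ is a \Type{\Ttwo} pair, iff no $\cC^*$-pair straddles~$\gamma$, iff in every pair of partners of $\cC^*$ the two partners agree on~$\gamma$. Part~(3): $\cC$ is of \Type{\TC} iff it has at least one \Type{\Tone} pair and at least one \Type{\Ttwo} pair, iff some $\cC^*$-pair straddles~$\gamma$ and some other does not, which is exactly the complementary case to (1) and (2).

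I do not expect a genuine obstacle here; the lemma is essentially a dictionary translation, and the only point requiring a word of care is verifying that puncturing does set up a bijection between the partner pairs of $\cC^*$ and those of $\cC$ with the stated distance behavior — in particular, that distinct $\cC^*$-pairs cannot collapse to the same $\cC$-pair (which follows from the counting in Lemma~\ref{lem:part_d=2} and Corollary~\ref{cor:pair_parts}, both codes having exactly $2^{2^r-r-1}$ pairs) and that a $\cC^*$-pair at distance~$2$ cannot punctuate to distance~$0$ (which would force the two partners to differ only on~$\gamma$, contradicting $\distance(\bldc,\bldc')=2$). Once that is said, (1)--(3) are one line each.
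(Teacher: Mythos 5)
Your proof is correct and matches the paper's treatment: the paper itself gives no written proof, stating only that the lemma ``can be inferred as an immediate observation from the definitions,'' and your argument is precisely that observation made explicit (puncturing sends each distance-$2$ partner pair of $\cC^*$ to a \Type{\Tone} or \Type{\Ttwo} pair of the punctured code according to whether the partners disagree or agree on the punctured coordinate, and this correspondence is a bijection on pairs). The only stylistic remark is that the bijectivity and the ``distance cannot drop to $0$'' points follow even more directly from Lemma~\ref{lem:part_d=2}, which guarantees all non-partner codewords are at distance at least $4$, than from the counting argument you invoke.
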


We will consider now which NP1CCs can be obtained from
one ENP1CC. We are interested to know if there are ENP1CCs
whose punctured codes are only of one type
or rather a combination of two or all three type.
This can be used to form equivalence classes among
the ENP1CCs and also among the NP1CCs.
In the rest of this section we consider these problems.

Lemmas~\ref{lem:part_d=2} and~\ref{lem:punctABC} immediately imply
the following consequence.
\begin{corollary}
\label{cor:no_Punct_AB}
$~$
\begin{itemize}
\item[(1)]
There are no ENP1CCs whose punctured codes are only of \Type{\TA}.

\item[(2)]
There are no ENP1CCs whose punctured codes are only of \Type{\TB}.
\end{itemize}
\end{corollary}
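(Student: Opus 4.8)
The plan is to read off both parts directly from the pairing structure: by Lemma~\ref{lem:part_d=2}, the codewords of any ENP1CC $\cC^*$ split into $2^{2^r-r-1} \ge 1$ pairs of partners $\{\bldc,\bldc'\}$ with $\distance(\bldc,\bldc') = 2$, and by Lemma~\ref{lem:punctABC} the type of a punctured code is governed entirely by whether partners agree or disagree on the punctured coordinate. So I would fix one pair of partners $\{\bldc,\bldc'\}$ and let $i$ and $j$ be the (exactly) two coordinates on which $\bldc$ and $\bldc'$ disagree; on every other coordinate they agree.

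For part~(2): this single pair disagrees on coordinate~$i$. By Lemma~\ref{lem:punctABC}(2), the code obtained by puncturing $\cC^*$ on coordinate~$i$ is of \Type{\TB} only if \emph{every} pair of partners agrees on coordinate~$i$, which is contradicted by $\{\bldc,\bldc'\}$. Hence puncturing at coordinate~$i$ does not yield a \Type{\TB} code, so the punctured codes of $\cC^*$ cannot all be of \Type{\TB}.

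For part~(1): since $\cC^*$ has length $n+1 = 2^r+1 \ge 3$, there is a coordinate $k \notin \{i,j\}$, and on coordinate~$k$ the partners $\bldc$ and $\bldc'$ agree. By Lemma~\ref{lem:punctABC}(1), puncturing $\cC^*$ on coordinate~$k$ yields a \Type{\TA} code only if every pair of partners disagrees on coordinate~$k$, again contradicted by $\{\bldc,\bldc'\}$. Hence the punctured codes of $\cC^*$ cannot all be of \Type{\TA}.

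I expect no real obstacle: the only point that needs a moment's care is the existence of the third coordinate~$k$ in part~(1), which is immediate from $r \ge 1$; everything else is a direct invocation of Lemmas~\ref{lem:part_d=2} and~\ref{lem:punctABC}.
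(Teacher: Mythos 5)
Your proof is correct and is exactly the argument the paper intends: the paper gives no explicit proof, stating only that Lemmas~\ref{lem:part_d=2} and~\ref{lem:punctABC} immediately imply the corollary, and your fleshing-out (a fixed pair of partners disagrees on exactly two of the $2^r+1\ge 3$ coordinates, so some coordinate witnesses agreement and some witnesses disagreement) is the natural instantiation of that. No gaps.
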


\begin{lemma}
\label{lem:PE_AB}
If $\cC$ is an NP1CC obtained from the union of
an extended zeroed perfect code $\cC_1$
and an odd translate $\cC_2$ of $\cC_1$,
then $\cC^*$ is an ENP1CC whose punctured codes are of
\Type{\TA} and \Type{\TB}.
\end{lemma}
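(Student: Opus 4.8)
The plan is to analyze the extended code $\cC^*$ by looking at how the extra (overall-parity) coordinate, call it coordinate $n+1$, interacts with the partner structure guaranteed by Lemma~\ref{lem:part_d=2}. Recall that $\cC = \cC_1 \cup \cC_2$, where $\cC_1$ is an extended zeroed perfect code of length $n = 2^r$ and $\cC_2 = \blde + \cC_1$ is an odd translate of $\cC_1$. By Theorem~\ref{thm:typeA_twoEx}, $\cC$ is a zeroed NP1CC of \Type{\TA}, so its partner pairs are all \Type{\Tone} pairs $\{\bldc,\bldc'\}$ with $\distance(\bldc,\bldc')=1$: one partner even-weight (lying in $\cC_1$) and one odd-weight (lying in $\cC_2$). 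After appending the overall parity to form $\cC^*$, every codeword becomes even-weight, and a pair $\{\bldc,\bldc'\}$ at distance $1$ in $\cC$ becomes a pair $\{(\bldc,0),(\bldc',1)\}$ at distance $2$ in $\cC^*$; these are exactly the partner pairs of $\cC^*$ by Lemma~\ref{lem:part_d=2}. Note that in \emph{every} such pair the two partners disagree on coordinate $n+1$.

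First I would puncture $\cC^*$ on the new coordinate $n+1$: this recovers $\cC$ itself, which is of \Type{\TA}. More usefully, in the language of Lemma~\ref{lem:punctABC}(1), puncturing on coordinate $n+1$ gives a \Type{\TA} code precisely because, as just observed, the partners of $\cC^*$ disagree on coordinate $n+1$ in every pair. So \Type{\TA} is realized. Next I would exhibit a coordinate $j \in \Int{n}$ on which the partners of $\cC^*$ \emph{agree} in every pair; puncturing on such a $j$ then yields a \Type{\TB} code by Lemma~\ref{lem:punctABC}(2). The partner of $(\bldc,0)$ in $\cC^*$ comes from the unique codeword $\bldc'$ of $\cC$ at Hamming distance $1$ from $\bldc$, and the coordinate on which $\bldc$ and $\bldc'$ differ is determined by which of $\cC_1,\cC_2$ contains $\bldc$: if $\bldc \in \cC_1$ then $\bldc' = \bldc + \blde_S$ for the support position(s) forced by $\cC_2 = \blde + \cC_1$; more precisely, since $\cC_2$ is a single translate of $\cC_1$, the "differing coordinate" of the pair containing a codeword $\bldc_1 \in \cC_1$ is exactly the position where the \emph{coset leader} of $\cC_2$ relative to $\cC_1$ — i.e. the unique weight-$1$ word in $\cC_2$ (guaranteed by the definition of odd translate of an extended zeroed perfect code) — has its single $1$. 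Call that coordinate $j_0 \in \Int{n}$. Then for \emph{every} partner pair $\{\bldc,\bldc'\}$ of $\cC$, the partners differ exactly in coordinate $j_0$ (and in $\cC^*$ they also differ in coordinate $n+1$, and agree everywhere else). Hence on every coordinate $j \in \Int{n} \setminus \{j_0\}$ the partners of $\cC^*$ agree in all pairs, so by Lemma~\ref{lem:punctABC}(2) the code punctured on such a $j$ is of \Type{\TB}. (Since $n = 2^r \ge 2$, such a $j$ exists.)

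This gives both conclusions: puncturing $\cC^*$ on coordinate $n+1$ yields a \Type{\TA} code, and puncturing on any coordinate in $\Int{n}\setminus\{j_0\}$ yields a \Type{\TB} code, so $\cC^*$ is an ENP1CC whose punctured codes are of \Type{\TA} and of \Type{\TB}. (By Lemma~\ref{lem:punctABC}(1)–(2) and the observation that puncturing on $j_0$ again makes the partners disagree in every pair, the only \Type{\TA} puncturing is coordinate $n+1$ together with coordinate $j_0$; all others are \Type{\TB}; there is no \Type{\TC} puncturing here, consistently with Corollary~\ref{cor:no_Punct_AB} being about the \emph{exclusive} cases.) The main obstacle I expect is pinning down rigorously that the "differing coordinate" $j_0$ is the \emph{same} for all $M/2$ partner pairs — that is, that the odd translate $\cC_2 = \blde + \cC_1$ of an extended zeroed perfect code has a \emph{unique} weight-$1$ codeword and that this single coordinate governs every pair; this is where one must invoke the structure of extended perfect codes (the $1$-covering property of the underlying perfect code, via Lemma~\ref{lem:dist_x_ExHammE}) rather than just generic NP1CC facts. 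Once that structural fact is in hand, the rest is immediate from Lemma~\ref{lem:punctABC}.
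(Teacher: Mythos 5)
Your proposal is correct and follows essentially the same route as the paper's proof: identify the unique weight-$1$ coset representative $\blde$ of $\cC_2=\blde+\cC_1$, observe that every partner pair of $\cC$ therefore differs on the single coordinate $\Support(\blde)$, so that in $\cC^*$ the partners disagree exactly on that coordinate and on the parity coordinate, and then apply Lemma~\ref{lem:punctABC}. The only difference is that you flag (and sketch) the justification that the differing coordinate is the same for all pairs, which the paper simply asserts; your sketch is completable via the minimum distance~$4$ of $\cC_1$.
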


\begin{proof}
Noting that $\cC_2 = \blde + \cC_1$ where $\weight(\blde)=1$,
the partners in each pair disagree on exactly one coordinate,
and that coordinate is the same for all pairs.
Therefore, in the extended code $\cC^*$,
the partners in each pair disagree
on this coordinate and on the new coordinate
and agree on the remaining $2^r-1$ coordinates.
Thus, by Lemma~\ref{lem:punctABC}(1), puncturing
on one of these two coordinates yields an NP1CC of \Type{\TA},
while by Lemma~\ref{lem:punctABC}(2), puncturing
on any of the other $2^r-1$ coordinates yields an NP1CC of \Type{\TB}.
\end{proof}

It is easy to verify by Lemma~\ref{lem:punctABC} that
all ENP1CCs whose punctured codes are of \Type{\TA} and \Type{\TB}
can be obtained by Lemma~\ref{lem:PE_AB}.

\begin{lemma}
\label{lem:PE_AC}
If $\cC$ is an NP1CC of \Type{\TA} in which
for each coordinate there exists
at least one pair of partners that disagree on that coordinate,
then the punctured code of $\cC^*$ are of \Type{\TA} and \Type {\TC}.
\end{lemma}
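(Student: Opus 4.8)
The plan is to translate the hypothesis on~$\cC$ into a description of the partner pairs of the extended code~$\cC^*$, and then to read off the type of each punctured code directly from Lemma~\ref{lem:punctABC}. First I would record that, since~$\cC$ is of \Type{\TA}, every pair of partners $\{\bldc,\bldc'\}$ of~$\cC$ is a \Type{\Tone} pair, so~$\bldc$ and~$\bldc'$ differ in exactly one coordinate, which I will call the \emph{disagreement coordinate} of the pair. Differing in a single coordinate, exactly one of $\bldc,\bldc'$ has even weight, so appending even-parity bits produces words $\bldc^*,\bldc'^*$ of~$\cC^*$ that disagree on the disagreement coordinate \emph{and} on the new coordinate $n+1$; hence $\distance(\bldc^*,\bldc'^*)=2$, and by Lemma~\ref{lem:part_d=2} such a distance-$2$ pair is a pair of partners in~$\cC^*$. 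Since the map $\bldc\mapsto\bldc^*$ is a bijection from~$\cC$ onto~$\cC^*$, and~$\cC$ and~$\cC^*$ have the same number $M/2$ of partner pairs, this correspondence is a bijection from the \Type{\Tone} pairs of~$\cC$ onto the partner pairs of~$\cC^*$. Consequently, a partner pair of~$\cC^*$ whose disagreement coordinate (in~$\cC$) is $j\in\Int{n}$ disagrees, in~$\cC^*$, precisely on the two coordinates $j$ and $n+1$.

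Next I would puncture~$\cC^*$ on a coordinate~$i\in\Int{n{+}1}$ and apply Lemma~\ref{lem:punctABC}. If $i=n+1$, then every partner pair of~$\cC^*$ disagrees on~$i$, so by Lemma~\ref{lem:punctABC}(1) the punctured code is of \Type{\TA} (indeed it is just~$\cC$). If $i=j$ with $j\in\Int{n}$, then a partner pair of~$\cC^*$ disagrees on~$i$ if and only if its disagreement coordinate equals~$j$; by the hypothesis on~$\cC$ there is at least one such pair, while since $n=2^r\ge 2$ there is a coordinate $j'\ne j$, for which the hypothesis again supplies a partner pair whose disagreement coordinate is~$j'$ and which therefore agrees on~$i$. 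Thus the agree/disagree pattern is mixed, and Lemma~\ref{lem:punctABC}(3) gives that the punctured code is of \Type{\TC}. Collecting the cases, the punctured codes of~$\cC^*$ are exactly of \Type{\TA} (when $i=n+1$) and of \Type{\TC} (when $i\in\Int{n}$); in particular none is of \Type{\TB}, since a \Type{\TB} puncturing would require \emph{all} partner pairs of~$\cC^*$ to agree on the punctured coordinate, which fails at $i=n+1$ and, for each $i=j\in\Int{n}$, fails for the pair whose disagreement coordinate is~$j$.

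I do not anticipate a real obstacle. The only points requiring care are the bookkeeping in the first paragraph — namely that $\bldc\mapsto\bldc^*$ carries the \Type{\Tone} pairs of~$\cC$ onto the full set of partner pairs of~$\cC^*$, which follows by matching cardinalities via Lemma~\ref{lem:part_d=2} — and the observation that the hypothesis can only be met when~$n$ is large enough (in fact $n\ge 8$, since the $M/2=2^{\,n-r-1}$ partner pairs must cover all $n=2^r$ coordinates), which guarantees that a coordinate $j'\ne j$ realized by some partner pair genuinely exists; the balanced \Type{\TA} codes constructed in Section~\ref{sec:balanced} satisfy the hypothesis, so the statement is not vacuous.
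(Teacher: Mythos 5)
Your proof is correct and follows essentially the same route as the paper: observe that in $\cC^*$ every partner pair disagrees on the new coordinate (and on its single disagreement coordinate in~$\cC$), so puncturing the new coordinate recovers the \Type{\TA} code~$\cC$, while puncturing any original coordinate gives a mixed agree/disagree pattern and hence, by Lemma~\ref{lem:punctABC}(3), a \Type{\TC} code. Your extra bookkeeping (the bijection between pairs of~$\cC$ and of~$\cC^*$, and the explicit existence of a pair agreeing on each original coordinate) only makes explicit what the paper leaves implicit.
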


\begin{proof}
If $\cC$ is such an NP1CC, then for each coordinate
there is at least one pair of partners that disagree on that coordinate
and, since $\cC$ is of \Type{\TA}, it follows that in $\cC^*$,
in each pair, the partners disagree on the new coordinate.
Puncturing on the new coordinate yields
the original code of \Type{\TA} and, by Lemma~\ref{lem:punctABC}(3),
puncturing on any other coordinate yields an NP1CC of \Type{\TC}.
\end{proof}

We note that a balanced NP1CC is an NP1CC of \Type{\TA} which satisfies
the requirements of Lemma~\ref{lem:PE_AC}.
It is easy to verify by Lemma~\ref{lem:punctABC}
that all ENP1CCs whose punctured codes are of \Type{\TA} and \Type{\TC}
can be obtained by Lemma~\ref{lem:PE_AC}.

\begin{lemma}
\label{lem:cond_ABC}
In an ENP1CC whose punctured codes are of
\Type{\TA}, \Type{\TB}, and \Type{\TC}
there is exactly one coordinate on which
the partners disagree in all pairs,
and at least one coordinate on which all the partners agree.
\end{lemma}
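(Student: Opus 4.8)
The plan is to characterize, for each coordinate $\gamma \in \Int{0:n}$ of an ENP1CC $\cC^*$ (where we index the new coordinate by~$0$, say), the ``disagreement pattern'' of the pairs of partners, and then to count how many coordinates fall into each of the three mutually exclusive categories dictated by Lemma~\ref{lem:punctABC}: (i) all pairs disagree on~$\gamma$ (puncturing gives \Type{\TA}); (ii) all pairs agree on~$\gamma$ (puncturing gives \Type{\TB}); (iii) mixed (puncturing gives \Type{\TC}). Since the hypothesis says that all three types occur among the punctured codes, each of categories (i), (ii), (iii) is nonempty. The claim to prove is that category~(i) is a singleton and category~(ii) is nonempty; the latter is immediate from the hypothesis, so the real content is the uniqueness in~(i).

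First I would recall, from Lemma~\ref{lem:part_d=2}, that in $\cC^*$ every pair of partners $\{\bldc,\bldc'\}$ satisfies $\distance(\bldc,\bldc')=2$, so the partners disagree on \emph{exactly two} coordinates, and there are exactly $2^{n-r-1}$ such pairs. For each pair $P=\{\bldc,\bldc'\}$ let $D(P)\subseteq\Int{0:n}$ be the $2$-element set of coordinates on which its partners disagree. A coordinate $\gamma$ is in category~(i) precisely when $\gamma\in D(P)$ for \emph{all} pairs~$P$. Now suppose two distinct coordinates $\gamma_1\neq\gamma_2$ were both in category~(i). Then for \emph{every} pair $P$ we would have $\gamma_1,\gamma_2\in D(P)$, and since $\abs{D(P)}=2$ this forces $D(P)=\{\gamma_1,\gamma_2\}$ for every single pair. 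In other words, all $2^{n-r-1}$ pairs of partners would disagree on exactly the same two coordinates $\gamma_1,\gamma_2$ and agree everywhere else. But then puncturing on \emph{any} coordinate $\gamma\notin\{\gamma_1,\gamma_2\}$ makes every pair's partners agree on the punctured coordinate, so by Lemma~\ref{lem:punctABC}(2) the punctured code is of \Type{\TB}; and puncturing on $\gamma_1$ or $\gamma_2$ makes every pair's partners disagree there, so by Lemma~\ref{lem:punctABC}(1) the punctured code is of \Type{\TA}. Consequently no puncturing of $\cC^*$ is of \Type{\TC}, contradicting the hypothesis that \Type{\TC} occurs. Hence at most one coordinate lies in category~(i); and since \Type{\TA} occurs, exactly one does.

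For the second assertion — that some coordinate lies in category~(ii) — note that the hypothesis explicitly includes that a punctured code of \Type{\TB} exists, and by Lemma~\ref{lem:punctABC}(2) this is exactly the statement that the punctured coordinate is one on which all partners agree. So this part is a direct restatement via Lemma~\ref{lem:punctABC} and requires nothing further. I expect the only subtle point to be the bookkeeping in the uniqueness argument, namely observing that two common disagreement coordinates would pin down $D(P)$ completely for \emph{all} pairs (because $\abs{D(P)}=2$), which in turn collapses the coordinate classification into just two categories and kills \Type{\TC}; everything else is immediate from Lemmas~\ref{lem:part_d=2} and~\ref{lem:punctABC}.
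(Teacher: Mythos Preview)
Your proposal is correct and follows essentially the same approach as the paper: both use Lemma~\ref{lem:punctABC} to translate the existence of \Type{\TA}, \Type{\TB}, \Type{\TC} puncturings into statements about the coordinate classes (i), (ii), (iii), and then argue uniqueness for~(i). You are actually more explicit than the paper's proof, which simply asserts ``hence, there exists exactly one coordinate on which the partners in each pair disagree'' without spelling out the key observation that two common disagreement coordinates would force $D(P)=\{\gamma_1,\gamma_2\}$ for every pair (since $\abs{D(P)}=2$ by Lemma~\ref{lem:part_d=2}) and thereby eliminate \Type{\TC}.
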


\begin{proof}
By Lemma~\ref{lem:punctABC}(1), the punctured ENP1CC is
an NP1CC of \Type{\TA}, if and only if there exists
one coordinate on which the partners in each pair disagree.
By Lemma~\ref{lem:punctABC}(2), the punctured ENP1CC is
an NP1CC of \Type{\TB}, if and only if there exists
one coordinate on which the partners in each pair agree.
Finally, by Lemma~\ref{lem:punctABC}(3),
there exists at least one coordinate on which
partners in some pairs agree while in some other pairs disagree;
hence, there exists exactly one coordinate on which
the partners in each pair disagree.
\end{proof}

The conditions of Lemma~\ref{lem:cond_ABC} are necessary,
but they are also sufficient.
We construct such an ENP1CC based on
an idea presented in~\cite{EtVa94}.
By~\cite{EtVa94}, there exist two zeroed perfect codes of
length $2^r-1$ which differ only in $2^{2^{r-1}-1}$
codewords and only on one coordinate, say the first coordinate.
Let $\cC_1$ be the extended code of the first code and $\cC_2$ be
an odd translate of the extended code for the second
(where the extended code and its translate differ only on
the last coordinate).

\begin{lemma}
\label{lem:PE_ABC}
The ENP1CC $\cC^*$ obtained by extending
the code $\cC \triangleq \cC_1 \cup \cC_2$ is an ENP1CC
whose punctured codes are of \Type{\TA}, \Type{\TB}, and \Type{\TC}.
\end{lemma}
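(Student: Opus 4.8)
The plan is to verify that the extended code $\cC^*$ satisfies the criteria of Lemma~\ref{lem:punctABC} on three different sets of coordinates, thereby exhibiting punctured codes of all three types. First I would recall the structure of $\cC = \cC_1 \cup \cC_2$. Here $\cC_1$ is an extended zeroed perfect code of length $2^r$, and $\cC_2$ is an odd translate of the extended code of a second perfect code that agrees with the first except on $2^{2^{r-1}-1}$ codewords, where the two perfect codes differ only on the first coordinate. By Theorem~\ref{thm:typeA_twoEx} (or rather its hypotheses), since both $\cC_1$ and $\cC_2$ come from extending translates of the \emph{same} length-$2^{r-1}$-type construction but built on two genuinely different perfect codes, $\cC$ is an NP1CC; I would pin down exactly which pairs of partners are \Type{\Tone} and which are \Type{\Ttwo}, using Corollary~\ref{cor:number_inter}: the $k = 2^{2^{r-1}-1}$ codewords on which the two perfect codes differ give rise to pairs whose partners sit at distance $1$ (they share the same last coordinate but differ on the first), while the remaining codewords pair a word of $\cC_1$ with its translate in $\cC_2$ at distance $1$ on the new/last coordinate only — wait, this needs care, and it is where the argument must be done precisely.

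The key computation is to track, for each of the $n+1 = 2^r+1$ coordinates of $\cC^*$, whether the partners in every pair disagree, agree in every pair, or do a mix. I would organize this as follows. Let the original coordinates be indexed $1, 2, \ldots, 2^r$ (with coordinate $1$ being the distinguished one on which the two perfect codes differ), and let coordinate $2^r+1$ be the parity coordinate added to form $\cC^*$. For the pairs $\{\bldc,\bldc'\}$ coming from the $k$ codewords common-up-to-coordinate-$1$, the partners in $\cC$ differ exactly on coordinate $1$; extending preserves parity (both are even weight in $\cC^*$ since an even-weight extension of a word is even), so these partners agree on coordinate $2^r+1$ and on all coordinates except coordinate $1$. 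For the remaining $2^{2^r-r-1} - k$ pairs — each pairing $\bldv \in \cC_1$ with $\blde + \bldv \in \cC_2$ where $\weight(\blde)=1$ and $\blde$ is supported on the last original coordinate or wherever the odd translate is chosen — the partners in $\cC$ differ on exactly that one coordinate; after extension they differ also on $2^r+1$. Then I would conclude: on coordinate $2^r+1$, the first family of pairs \emph{agrees} and the second family \emph{disagrees}, so by Lemma~\ref{lem:punctABC}(3) puncturing there gives a \Type{\TC} code. On coordinate $1$, the first family disagrees; I must check the second family also disagrees there (which forces the translate $\blde$ to be chosen on coordinate $1$ as well — or, alternatively, by re-examining the construction of~\cite{EtVa94}, arrange that the odd translate differs from $\cC_1$ on coordinate $1$), in which case puncturing coordinate $1$ gives \Type{\TA} by Lemma~\ref{lem:punctABC}(1). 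And on any coordinate $\gamma$ on which \emph{all} partners agree — which exists by Corollary~\ref{cor:evenpairs} / the counting in Lemma~\ref{lem:cond_ABC}, since not every coordinate can be a disagreement coordinate — puncturing gives \Type{\TB} by Lemma~\ref{lem:punctABC}(2).

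The main obstacle I anticipate is the bookkeeping of \emph{which} single coordinate each non-common pair's partners differ on, and ensuring it is consistent with coordinate $1$: the cited construction of~\cite{EtVa94} gives two perfect codes differing on one coordinate, and the odd translate is obtained by adding a weight-one vector; I need both of these ``special'' coordinates to coincide (or to carefully re-index so that the \Type{\TA}-witnessing coordinate genuinely has \emph{all} pairs disagreeing), and I need to confirm that after these forced disagreements there still remains a coordinate on which every pair agrees — this is where a dimension/counting argument is needed: the total number of partner-pairs is $2^{2^r-r-1}$, each ``disagreement coordinate'' is witnessed by all pairs only for the one special coordinate, and a crude count shows at most $2^r - 1 < 2^r + 1$ coordinates can possibly be all-disagree, hence at least one is all-agree. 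I would present the proof by first stating the partition of $\cC^*$ into the two explicit families of pairs, then tabulating the agreement pattern on coordinate $1$, on coordinate $2^r+1$, and on a witnessed all-agree coordinate, and finally invoking the three parts of Lemma~\ref{lem:punctABC} once each.
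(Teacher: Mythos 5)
Your overall strategy---classify the agreement pattern of the partner pairs on each coordinate of $\cC^*$ and invoke the three parts of Lemma~\ref{lem:punctABC} once each---is exactly the paper's, but your bookkeeping contains an error that inverts the conclusion. In this construction every pair of partners in $\cC$ is already at distance $1$: the pairs arising from codewords where the two perfect codes differ are at distance $1$ and differ only on coordinate $1$ (the translate on the last coordinate is cancelled by the difference in parity bits of the two underlying perfect-code words), while the pairs arising from common codewords are at distance $1$ and differ only on coordinate $2^r$. Hence $\cC$ is of \Type{\TA}, and since two words at distance $1$ have opposite weight parities, \emph{every} pair disagrees on the new parity coordinate $2^r{+}1$. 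Your claim that the first family ``agrees on coordinate $2^r+1$'' is false---the appended bit is $\weight(\bldc)\bmod 2$, not the parity of the extended word, and your claim also contradicts Lemma~\ref{lem:part_d=2}, which forces partners in $\cC^*$ to be at distance exactly $2$. Consequently the roles are reversed from what you state: coordinate $2^r{+}1$ is the all-disagree coordinate (puncturing it recovers $\cC$, of \Type{\TA}); coordinates $1$ and $2^r$ are the mixed coordinates (puncturing either gives \Type{\TC}); and coordinates $2,\dots,2^r{-}1$ are all-agree (puncturing gives \Type{\TB}).

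Two further steps would fail as written. First, your attempt to make coordinate $1$ the \Type{\TA} witness by relocating the translate $\blde$ to coordinate $1$ both changes the code the lemma is about and does not achieve the goal: redoing the computation with $\blde$ supported on coordinate $1$ merely swaps which family of pairs disagrees on coordinate $1$ versus coordinate $2^r$, and neither becomes an all-disagree coordinate. Second, your counting argument for the existence of an all-agree (\Type{\TB}) coordinate is invalid: ``not all-disagree'' does not imply ``all-agree,'' and indeed for a balanced \Type{\TA} code every original coordinate carries at least one disagreeing pair, so its extension has no all-agree coordinate at all (that is precisely the situation of Lemma~\ref{lem:PE_AC}). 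The existence of an all-agree coordinate here must be read off from the explicit structure---disagreements are confined to coordinates $1$, $2^r$, and $2^r{+}1$, leaving $2^r{-}2\ge 2$ all-agree coordinates---which is exactly the one-line observation the paper's proof makes.
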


\begin{proof}
Clearly, $\cC^*$ has
one coordinate on which the partners in each pair disagree;
two coordinates on which there is agreement in some of the pairs;
and $2^r-2$ coordinates on which the partners in each pair agree.
The result follows from Lemma~\ref{lem:punctABC}.
\end{proof}

Corollaries~\ref{cor:no_Punct_AB} and Lemmas~\ref{lem:PE_AB},~\ref{lem:PE_AC}, and~\ref{lem:PE_ABC} raise the question
whether there exists an ENP1CC with no punctured code of \Type{\TA}.

We end this section by a characterization of
the weight enumerator of a zeroed ENP1CC.
Interestingly, this weight distribution turns out to be unique
and independent of the type of the NP1CC that was extended
(this also implies that ENP1CCs are distance invariant).

\begin{theorem}
\label{thm:weightenumerator-ENP1CC}
Let $\cC^*$ be a zeroed $(n{+}1{=}2^r{+}1,M{=}2^{n-r})$~ENP1CC.
Its weight enumerator is given by
\[
\AA^*(y)
= \frac{1}{2n} \left( (1+y)^{n+1} + (1-y)^{n+1} \right)
+ \left( 1 - \frac{1}{n} \right) (1-y^2)^{n/2} ~.
\]
\end{theorem}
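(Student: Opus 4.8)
The plan is to compute the weight enumerator $\AA^*(y)$ of the extended code $\cC^*$ by relating it to the weight enumerator $\AA(y)$ of the underlying zeroed NP1CC $\cC$, which is already known from Theorem~\ref{thm:weightenumerator} with $(A_0 \; A_1)$ determined by the case $\blde = \bldzero$. Since $\cC$ is zeroed, $\bldzero \in \cC$, so we are in one of the first two rows of Table~\ref{tab:cases}; for a \Type{\TA} code we have $(A_0 \; A_1) = (1 \; 1)$, and for a \Type{\TB} code we have $(A_0 \; A_1) = (1 \; 0)$ (a \Type{\TC} code can be either). The claim is that the final answer is independent of which case occurs, so the main task is to verify that the extension operation ``washes out'' the dependence on $A_1$.

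First I would recall the standard relation between the weight enumerator of a code and that of its even-parity extension. If $\cC$ has weight enumerator $\AA(x,y) = \sum_i A_i x^{n-i} y^i$, then each codeword of weight $i$ becomes a codeword of $\cC^*$ of weight $i$ (if $i$ is even) or $i+1$ (if $i$ is odd). In homogeneous form, writing $\AA^*(x,y) = \sum_i A^*_i x^{n+1-i} y^i$, one has the clean identity
\[
\AA^*(x,y) = \tfrac{1}{2}\bigl( (x+y)\,\AA(x,y) + (x-y)\,\AA(-x \to \cdot) \bigr)
\]
— more precisely, using $\AA(x,y)$ and $\AA(x,-y)$ one gets
\[
\AA^*(x,y) = \tfrac12 (x+y)\,\AA(x,y) + \tfrac12 (x-y)\,\AA(x,-y),
\]
since $\tfrac12(\AA(x,y)+\AA(x,-y))$ collects the even-weight terms (which gain a factor $x$) and $\tfrac12(\AA(x,y)-\AA(x,-y))$ collects the odd-weight terms (which gain a factor $y$). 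Setting $x=1$ gives the univariate version $\AA^*(y) = \tfrac12(1+y)\AA(y) + \tfrac12(1-y)\AA(-y)$.

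Next I would substitute the expression for $\AA(y)$ from~(\ref{eq:weightenumerator}). Write $\AA(y) = \tfrac1n (1+y)^n + g(y)\,(1-y)(1-y^2)^{n/2-1}$, where $g(y) = (A_0 - \tfrac1n) + (A_0 + A_1 - 1 - \tfrac1n)\,y$. Then $\AA(-y) = \tfrac1n (1-y)^n + g(-y)\,(1+y)(1-y^2)^{n/2-1}$. Plugging into $\AA^*(y) = \tfrac12(1+y)\AA(y) + \tfrac12(1-y)\AA(-y)$, the first terms combine to $\tfrac1{2n}\bigl((1+y)^{n+1} + (1-y)^{n+1}\bigr)$, matching the first term of the claimed $\AA^*(y)$. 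The remaining terms are
\[
\tfrac12\,(1-y^2)^{n/2-1}\Bigl( (1+y)g(y)(1-y) + (1-y)g(-y)(1+y) \Bigr)
= \tfrac12\,(1-y^2)^{n/2}\bigl(g(y) + g(-y)\bigr).
\]
Now $g(y) + g(-y) = 2(A_0 - \tfrac1n)$, so this equals $(A_0 - \tfrac1n)(1-y^2)^{n/2}$. Since $\cC$ is zeroed we have $A_0 = 1$ in every case of Table~\ref{tab:cases}, so this becomes $(1 - \tfrac1n)(1-y^2)^{n/2}$, exactly the second term of the claimed formula — and, crucially, the dependence on $A_1$ has dropped out, proving the type-independence. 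The distance invariance then follows because $\AA^*(y)$ is independent of the choice of zeroed representative (all zeroed translates of a fixed NP1CC are themselves zeroed NP1CCs, hence share this enumerator), which by~(\ref{eq:Bsum}) forces $\bldB^* = \bldA^*$.

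I expect the main obstacle to be purely bookkeeping: getting the extension identity $\AA^*(y) = \tfrac12(1+y)\AA(y) + \tfrac12(1-y)\AA(-y)$ stated and justified cleanly (including the edge behavior at $n=2$, which should be checked separately or noted as trivial), and then carrying the algebra through carefully enough to see the cancellation $g(y) + g(-y) = 2(A_0 - 1/n)$. There is no deep difficulty — all the structural work (the value of $\AA(y)$, the bound $s' \le 2$, distance invariance criteria) has already been done in Sections~\ref{sec:struct} and~\ref{sec:weightDist}; this theorem is essentially a corollary obtained by pushing $\AA(y)$ through the parity extension and observing that the $A_1$-dependent term is odd in $y$ and therefore annihilated by the symmetrization inherent in extending.
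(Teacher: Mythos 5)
Your proposal is correct and follows essentially the same route as the paper: both derive the parity-extension identity $\AA^*(y) = \tfrac12(1+y)\AA(y) + \tfrac12(1-y)\AA(-y)$ and then substitute the known NP1CC weight enumerator. The only (harmless) difference is that you plug in the general parametric form~(\ref{eq:weightenumerator}) and observe that the $A_1$-dependent term is odd in $y$ and cancels, whereas the paper substitutes the two specific enumerators~(\ref{eq:TypeA}) and~(\ref{eq:TypeB}) separately; your version makes the type-independence slightly more transparent.
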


\begin{proof}
Let $\cC$ be the zeroed $(n,M)$ NP1CC that was extended
and let $\AA(y) = \sum_{i \in \Int{0:n}}^n A_i y^i$ be
its weight enumerator. It is easy to see
that the weight distribution of $\cC^*$ is given by
\[
A^*_0 = 1 ~, \quad A^*_{n+1} = 0 ~,
\]
and, for $i \in \Int{n}$:
\[
A^*_i =
\left\{
\begin{array}{ccl}
A_i + A_{i-1} && \textrm{if $i$ is even} \\
0             && \textrm{otherwise.}
\end{array}
\right.
\]
Hence,
\begin{eqnarray}
\AA^*(y)
= \sum_{i \in \Int{0:n+1}} A^*_i y^i
& = &
\frac{1}{2}
\bigl(
\AA(y) + \AA(-y)
+ y(\AA(y) - \AA(-y))
\bigr)
\nonumber \\
\label{eq:weightenumerator-ENP1CC}
& = &
\frac{1}{2}
\bigl( (1+y) \AA(y) + (1-y) \AA(-y) \bigr) ~.
\end{eqnarray}
Substituting either~(\ref{eq:TypeA}) or~(\ref{eq:TypeB})
into~(\ref{eq:weightenumerator-ENP1CC}) yields the result.
\end{proof}

\section{Conclusion and Future Work}
\label{sec:conclude}

The structure of NP1CCs was considered.
It was proved that there are three types of such codes which depend on the
distance between each codeword to its nearest codeword.
The structure of these codes, their weight and distance distributions are examined in the paper.
Constructions of a large number of codes of each type were given. The extended code of an NP1CC was analyzed
and in particular it was discussed which types of NP1CCs are obtained by puncturing each of its coordinates.
Our exposition leads to a many interesting open problems.

\begin{enumerate}
\item Is it true that there exist two perfect codes of length $2^r-1$ and intersection $k$ if and only if there exists
an NP1CC of \Type{\TC} with exactly $k$ \Type{\Tone} pairs? What is the minimum (maximum) possible number of \Type{\Tone} pairs in
an NP1CC of \Type{\TC}?

\item Let $\cX$ and $\cY$ be two distinct nonempty sets of pairwise disjoint capsules such that
$$
\bigcup_{V \in \cX} V = \bigcup_{V \in \cY} V.
$$
What is the minimum size of $\cX$ and $\cY$?

\item Does there exist an NP1CC of \Type{\TB} in which for each pair of coordinates there exist at least one pair of partners whose
partners disagree on this pair of coordinate?

\item We proved that there exists a balanced NP1CC of \Type{\TA}.
Does there exist a similar code of \Type{\TB}?
One possible definition for balanced NP1CCs of type B is that a pair of \Type{\Ttwo} pairs disagree
only on coordinates $i$ and $i+1$, $1 \leq i \leq 2^r-1$ or on coordinates $1$ and $2^r$ and the number of such partner pairs
for these coordinates is the same. Are there balanced NP1CCs for this definition?

\item Does there exist an ENP1CC with no punctured code of \Type{\TA}?
In other words, does there exist an ENP1CCs whose punctured codes are only of \Type{\TC}? or
does there exist an ENP1CC whose punctured codes are only of \Type{\TB} and \Type{\TC}?
\end{enumerate}

\section*{Acknowledgement}

The authors would like to thank an anonymous reviewer for his comprehensive review and constructive suggestions

%

\bibliographystyle{IEEEtran}

\end{document}